\newtheorem{example}{Example}
\title{Answering Counting Queries over DL-Lite Ontologies}
\author{
Meghyn Bienvenu$^1$
\and
Quentin Manière$^1$
\And
Michaël Thomazo$^2$
\affiliations
$^1$University of Bordeaux, CNRS,  Bordeaux INP, LaBRI, Talence, France \\
$^2$Inria, DI ENS, ENS, CNRS, University PSL, Paris, France
\emails
\{meghyn.bienvenu, quentin.maniere\}@u-bordeaux.fr,
michael.thomazo@inria.fr
}
\begin{document}

\maketitle

\begin{abstract}
Ontology-mediated query answering (OMQA) is a promising approach to data access and integration that has been actively studied in the knowledge representation and database communities for more than a decade. The vast majority of work on OMQA focuses on conjunctive queries, whereas more expressive queries that feature counting or other forms of aggregation remain largely unexplored. In this paper, we introduce a general form of counting query, relate it to previous proposals, and study the complexity of answering such queries in the presence of DL-Lite ontologies. As it follows from existing work that query answering is intractable and often of high complexity, we consider some practically relevant restrictions, for which we establish improved complexity bounds.
\end{abstract}

\section{Introduction}

Ontology-mediated query answering (OMQA) utilizes ontologies to provide a convenient vocabulary for query formulation and to capture domain knowledge that is exploited during the querying process to obtain more complete sets of answers \cite{DBLP:journals/jods/PoggiLCGLR08,DBLP:conf/rweb/BienvenuO15,DBLP:conf/ijcai/XiaoCKLPRZ18}. 
Much of the work on OMQA considers ontologies formulated using description logics (DLs), a family of knowledge representation languages that provide the logical foundations of the OWL web ontology language. 
Particular attention has been to the DL-Lite family of DLs \cite{calvaneseetal:dllite}, 
which were 
developed with OMQA in mind and enjoy favorable computational properties. 

The vast majority of work on OMQA supposes that user queries are given as conjunctive queries (CQs). 
However, there are many other kinds of database queries, beyond plain CQs, that are relevant in practice. This motivates research into the feasibility of adopting other database query languages for OMQA. While enriching CQs with either negated atoms or inequalities has been shown to lead to undecidability even in very restricted settings \cite{DBLP:journals/ws/Gutierrez-Basulto15}, the situation is more positive for navigational queries (like regular path queries), which can be adopted without losing decidability, sometimes even retaining tractable data complexity \cite{DBLP:journals/jair/BienvenuOS15}. 

%
%
%
%
%
%
%
%
%
%
%
%
%

Aggregate queries, which use numeric operators (e.g.\ count, sum, max) to summarize selected parts of a dataset, constitute another prominent class of database queries. 
Although such queries are widely used for data analysis, they have been little explored in context of OMQA. 
This may be partly due to the fact that it is not at all obvious how to
define the semantics of such queries in the OMQA setting. A first exploration of aggregate queries in OMQA was conducted by \citeauthor{DBLP:conf/cikm/CalvaneseKNT08} (\citeyear{DBLP:conf/cikm/CalvaneseKNT08}). 
They argued that the most straightforward adaptation of classical certain answer semantics to aggregate queries was unsatisfactory, as often values would differ from model to model, leading to no certain answers. For this reason, an epistemic semantics was proposed, in which variables involved in the aggregation are required to match to data constants. However, as discussed in 
\cite{kostylevreutter:count}, this semantics can also give unintuitive results by ignoring ways of mapping aggregate variables to anonymous elements inferred due the ontology axioms. For instance, if no children of $\ind{alex}$ are listed in the data, then a query that asks to return the number of children will yield 0 under epistemic semantics, even if it can be inferred (e.g.\ due to a family tax benefit) that there must be at least 3 children. This led \citeauthor{kostylevreutter:count} to define an alternative semantics for two kinds of counting queries (inspired by the COUNT and COUNT DISTINCT in SQL) which adopts a form of certain answer semantics but considers lower and upper bounds on the count value across different models. For the two considered logics (\dlc\ and \dlh), only the lower bounds on the count value are non-trivial, and a complexity analysis shows that they are challenging to identify: $\coNP$-data complexity for both logics, and $\PIP$-hard (resp.\ $\coNEXP$-hard) in combined complexity for \dlc\ (resp.\ \dlh). Several question were left unanswered by their work, including the exact combined complexity, 
the difficulty of recognizing the optimal lower bound, and the impact of allowing multiple aggregation variables. 


This paper returns to the issue of  handling counting queries in OMQA
and makes several important contributions:
\begin{enumerate}[leftmargin=.42cm,itemsep=-.05cm]
\item We propose a new notion of counting CQ that 
generalizes the two forms of queries from 
 \cite{kostylevreutter:count} and allows arbitrarily many counting variables. 
\item We show that existing complexity results for \dlc\ and \dlh\ KBs continue to hold for our more general notion of counting CQ, and provide  
an improved $\coNEXP$ upper bound 
for the relevant case of finite-depth TBoxes. 
\item We consider the impact of restricting the query structure, focusing on the class of rooted queries, in which every query variable must be connected to an answer variable or individual in the query graph. 
A recent result, obtained as part of a study of bag semantics
for OMQA, identified a case in which rootedness leads to tractable data complexity for counting queries \cite{nikolauetal:bag}. This motivates us to perform a more thorough investigation of rooted counting queries, 
which yields several improvements upon existing complexity bounds. 
\item We prove that the problem of identifying the best certain interval is $\DP$-complete in data complexity. 
\end{enumerate}
Our results close some questions that were 
left open by the work of \citeauthor{kostylevreutter:count} 
and pave the way for further study of counting and aggregate queries in the OMQA setting. 

An appendix with full proofs can be found in 
the long version of this paper, available on arXiv. 

%
%
%
%
%
%
%


\section{Preliminaries}


\allowdisplaybreaks

We recall the basics of description logics (DLs), focusing on DL-Lite, see \cite{DBLP:books/daglib/0041477} for more details. 

\paragraph*{Syntax and Semantics.}
A description logic vocabulary consists of a set $\cnames$ of 
\emph{atomic concepts} (unary predicates), a set $\rnames$ of
\emph{atomic roles} (binary predicates), and a set $\inames$ of \emph{individual names} (constants). 
By \emph{role}, we mean either an atomic role $P \in \rnames$ or an \emph{inverse role} $P^-$ (where $P \in \rnames$).
We let $\rni$ denote the set $\rn \cup \{P^{-} \mid P \in \rn\}$ of roles and use the notation $R^{-}$ to mean $P^{-}$ if $R=P \in \rnames$
and $P$ if $R=P^{-}$. 

A DL \emph{knowledge base (KB)} is a pair $\kb = (\tbox, \abox)$, consisting of an ABox $\Amc$ that contains facts about particular individuals and a TBox $\Tmc$ that expresses general knowledge about the domain. 
Formally, an \emph{ABox} is a finite set of \emph{concept assertions} $A(b)$, with $A \in \cnames$ and $b \in \inames$, 
and \emph{role assertions} $P(a,b)$, with $P \in \rnames$ and $a,b \in \inames$.  We use $\ainds(\Amc)$ to denote the set of individuals in $\Amc$. 
A \emph{TBox} is a finite set of axioms, whose syntax depends on the particular DL. 
In \dlc, axioms take the form of \emph{concept inclusions} $B_{1} \sqsubseteq (\neg) B_{2}$,
where each $B_{i}$ is either $A$ (for $A \in \cn$) or $\exists R$ (with $R \in \rni$).
\dlh\ TBoxes additionally allow \emph{role inclusions} 
$R_{1} \sqsubseteq (\neg) R_{2}$,
where $R_{1}, R_{2} \in \rni$.

\begin{example}
\label{exkb} Our example KB talks 
about leading ($\leads$) and supporting actors ($\supports$) in movies:
\begin{align*}
\Amc_{\mathsf{act}} = \{&\plays (\willis, \monkeys), \supports(\stowe, \monkeys), \\
&\supports(\pitt, \monkeys),  \supports(\pitt, \babel)\}\\
\Tmc_{\mathsf{act}} = \{&\leads \incl \plays, 
 \supports \incl \plays, \\ 
& \exists \supports^- \incl \exists \leads^- \}
\end{align*}
%
\end{example}

An interpretation takes the form $\Imc=(\Delta^\Imc, \cdot^\Imc)$, where 
$\Delta^\I$ is a non-empty set (the \emph{domain} of $\I$), and $.^\I$ is a function that maps each $A \in \cnames$ to a subset $A^\Imc \subseteq \Delta^\I$, each $P \in \rnames$ to a binary relation $P^\Imc \subseteq \Delta^\I \times \Delta^\I$, and each $a \in\inames$  to an element $a^\Imc$ of $\Delta^\I$. We make the \emph{unique names assumption} (UNA) by requiring that $a^\Imc \neq b^\Imc$ for every $a,b \in \inames$ with $a\neq b$. 
The function $\cdot^\Imc$ naturally extends to complex concepts and roles:
$(\exists \posrole)^\I = \lbrace d ~|~ \exists d' : (d, d') \in \posrole^\I \rbrace$, 
$(\atomrole^-)^\I = \lbrace (d_1, d_2) ~|~ (d_2, d_1) \in \atomrole^\I \rbrace $,
$(\neg \posconcept)^\I = \Delta^\I \setminus \posconcept^\I$,
$(\neg \posrole)^\I = \left( \Delta^\I \times \Delta^\I \right) \setminus \posrole^\I$. 
A (concept or role) inclusion $F \sqsubseteq G$ is satisfied in $\Imc$ if $F^\Imc \subseteq G^\Imc$;
assertion 
$A(b)$  is satisfied in $\Imc$ if $b^\Imc \in A^\Imc$;
 $P(a,b)$ is satisfied in $\Imc$ if $(a^\Imc,b^\Imc) \in P^\Imc$. 
We call $\Imc$ a \emph{model} of $\Kmc$, written  $\Imc \models \Kmc$, if it satisfies all inclusions and assertions in $\Kmc$. A KB is \emph{satisfiable} if has at least one model.


\paragraph*{Queries.} We recall that a \emph{conjunctive query} (CQ) takes the form $\exists \ty\, \psi(\tx,\ty)$, 
where $\tx$ and $\ty$ are tuples of variables drawn from an infinite set of variables $\varSet$, and 
$\psi$ is a conjunction of \emph{atoms}, which 
can be either \emph{concept atoms} $\atomconcept(t_1)$ or \emph{role atoms} $\atomrole(t_1, t_2)$, where $\atomconcept \in \cn$, $\atomrole \in \rn$, and \emph{terms} $t_i$ are drawn from $\inames \cup \tx \cup \ty$. 
Consider an interpretation $\Imc$ and CQ $q=\exists \ty\, \psi(\tx,\ty)$ with $|\tx|=n$. 
A tuple $\talp \in \left( \domain{\I} \right)^n$ 
 is an \emph{answer to $q$ in $\Imc$}, written $\I \models q(\talp)$, 
if there is a \emph{homomorphism of $q$ into $\Imc$}, i.e., a function $\sigma$ that maps  
the terms of $q$ to elements of $\Delta^\Imc$ such that (i) $\sigma(a)=a^\Imc$ for $a \in \inames$,
(ii) $\sigma(t) \in A^\Imc$ for every atom $A(t)$ of $q$, and (iii) $(\sigma(t_1), \sigma(t_2))\in P^\Imc$ for every atom $P(t_1,t_2)$ of $q$. 
A tuple $\ta \in \ainds(\Amc)^n$ 
 is a \emph{certain answer to $q$ w.r.t.\ the KB $\Kmc$} iff 
$\I \models q(\ta^\Imc)$ for every model $\Imc$ of $\Kmc$.

\paragraph*{Canonical Model.} 
We recall the definition of the canonical model $\canmod$ of 
a 
\dlh\ KB $\Kmc=(\Tmc,\Amc)$. The domain of $\canmod$ consists of $\ainds(\Amc)$ and 
all words of the form $a R_1\dots R_n$, with $a \in \ainds(\Amc)$, $R_i \in \rni$, and $n\geqslant 1$, such that:
\begin{itemize}[leftmargin=.35cm,itemsep=-.05cm]
\item  
$\Kmc \models \exists R_1 (a)$ and there is no $R_1(a,b) \in \Amc$;
\item for $1 \leq i < n$, $\mathcal{T} \models \exists R_i^-
  \sqsubseteq \exists R_{i+1}$ and \mbox{$R_i^- \ne R_{i+1}$}.   
\end{itemize}
We interpret individuals as themselves ($ a^{\canmod} =a$) and atomic concepts and roles as follows:
\begin{align*}
 A^{\canmod} =   &   \,  \{ a \in \ainds(\Amc) \mid \Kmc \models A(a) \} \\
&  \,\,  \cup \ 
 \{ a R_1\dots R_n \in 
\Delta^{\canmod}\setminus \ainds(\Amc)  \mid 
 \Tmc \models \exists R_n^- \sqsubseteq A  \}   \\
 P^{\canmod} =   &  \,  \{ (a,b) \mid P(a,b) \in \Amc \} \, \cup   \\
&  \,   \{ (e_{1},e_{2})
\mid  
e_{2} = e_{1} R   \text{ and } \Tmc \models R \sqsubseteq P \} \,\cup   \\
&   \,  \{ (e_{2},e_{1}) 
\mid \
e_{2} = e_{1} R \text{ and } \Tmc \models R \sqsubseteq P^{-} \} 
\end{align*}
The term `canonical model' is motivated by 
the following well-known property of $\canmod$ (see e.g.\ \cite{calvaneseetal:dllite}):
\begin{lemma}
\label{cm-lemma}
Let $\kb$ be a satisfiable \dlh \ KB. Then $\canmod \models \kb$ and if 
$\Imc \models \Kmc$, 
there is a homomorphism of $\canmod$ into $\Imc$.
\end{lemma}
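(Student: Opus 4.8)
The plan is to prove the two claims of Lemma~\ref{cm-lemma} separately: first that $\canmod$ is a model of $\Kmc$, and then that it homomorphically maps into every other model. For the first claim, I would verify each type of axiom in the TBox and each assertion in the ABox against the interpretation $\canmod$ as constructed. The ABox assertions are immediate from the first set in each of the definitions of $A^{\canmod}$ and $P^{\canmod}$, together with the fact that individuals are interpreted as themselves. For the concept and role inclusions, the verification splits according to whether the relevant domain element is an ABox individual or an anonymous word $a R_1 \dots R_n$. The key observation is that the construction is driven precisely by the entailments $\Tmc \models \exists R_i^- \sqsubseteq \exists R_{i+1}$, so that whenever a domain element $e$ satisfies $\exists R$ and the triggered existential is not already witnessed in the ABox, the successor word $e R'$ (for an appropriate $R'$) has been added to the domain to serve as a witness. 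Here one must use the closure properties of entailment in \dlh: for instance, if $e = a R_1 \dots R_n \in (\exists R)^{\canmod}$ because $\Tmc \models \exists R_n^- \sqsubseteq \exists R$, and $\Tmc$ contains $\exists R \sqsubseteq \exists S$, then transitivity of $\sqsubseteq$-entailment guarantees $\Tmc \models \exists R_n^- \sqsubseteq \exists S$, which is exactly the condition used to add the successor $e S$. Role inclusions $R \sqsubseteq S$ are handled by noting that $P^{\canmod}$ is defined using entailment $\Tmc \models R \sqsubseteq P$ rather than mere membership, so it is already closed under the role hierarchy.

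For the second claim, let $\Imc$ be an arbitrary model of $\Kmc$. I would construct the homomorphism $h \colon \canmod \to \Imc$ by induction on the length of the words in $\Delta^{\canmod}$. The base case sets $h(a) = a^{\Imc}$ for each $a \in \ainds(\Amc)$; this respects the ABox assertions because $\Imc \models \Amc$, and it respects the entailed concept and role memberships at individuals because $\Imc \models \Tmc$ implies $\Imc$ satisfies everything $\Kmc$ entails. For the inductive step, suppose $h$ has been defined on $e = a R_1 \dots R_n$ with $e \in (\exists R_{n+1})^{\canmod}$, so that the word $e R_{n+1}$ belongs to $\Delta^{\canmod}$. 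The membership witnesses that $\Tmc \models \exists R_n^- \sqsubseteq \exists R_{n+1}$ (or the ABox condition in the base case), and since $h(e) \in (\exists R_n^-)^{\Imc}$ will have been established, the fact that $\Imc \models \Tmc$ yields $h(e) \in (\exists R_{n+1})^{\Imc}$; hence there is some $d \in \Delta^{\Imc}$ with $(h(e),d) \in R_{n+1}^{\Imc}$, and I set $h(e R_{n+1}) = d$. One then checks that every concept and role atom holding at the new element in $\canmod$ is preserved: the concept memberships of $e R_{n+1}$ come from entailments $\Tmc \models \exists R_{n+1}^- \sqsubseteq A$, which transfer to $\Imc$ through the newly chosen witness $d$, and the role edges between $e$ and $e R_{n+1}$ hold by the choice of $d$ together with closure of $R_{n+1}^{\Imc}$ under the entailed role hierarchy.

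The main obstacle is getting the inductive invariant exactly right so that the homomorphism extends cleanly at each step. Specifically, the induction must carry the hypothesis that $h(e)$ already lies in $(\exists R_n^-)^{\Imc}$ and, more generally, in every concept $B$ with $\Tmc \models \exists R_n^- \sqsubseteq B$, since this is what is needed to fire the existential in $\Imc$ that produces the witness for the next word. Establishing this invariant requires care at the boundary between ABox individuals and anonymous elements, and it relies throughout on the fact that membership in $\canmod$ is defined via TBox entailment $\Tmc \models \cdot$, which is closed under the inference rules of \dlh (transitivity of inclusions, interaction of role inclusions with domain/range restrictions). A subtle point worth isolating is the side condition $R_i^- \neq R_{i+1}$ in the construction, which prevents the canonical model from redundantly adding a successor that merely travels back along the edge just taken; this does not affect existence of the homomorphism but must be accounted for when arguing that the two directions of each role edge are correctly realized in $\Imc$.
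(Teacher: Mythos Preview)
The paper does not supply its own proof of Lemma~\ref{cm-lemma}; it simply states the result as a well-known property of DL-Lite canonical models and defers to \cite{calvaneseetal:dllite}. Your proposal is therefore not being compared against anything in the paper, but against the standard argument in the literature, and it follows that argument closely: verify $\canmod \models \Kmc$ directly from the construction, then build the homomorphism into an arbitrary model by induction on word length, choosing witnesses for existentials at each step.

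There is one omission worth flagging. Your verification that $\canmod \models \Tmc$ treats only positive inclusions (concept inclusions $B_1 \sqsubseteq B_2$ and role inclusions $R_1 \sqsubseteq R_2$), but \dlh\ TBoxes also admit negative inclusions $B_1 \sqsubseteq \neg B_2$ and $R_1 \sqsubseteq \neg R_2$. To complete the argument you must show these are not violated in $\canmod$. The usual way to do this is to observe that every positive atomic fact holding in $\canmod$ is entailed by $\Kmc$ (this is immediate from the definition, since membership in $A^{\canmod}$ and $P^{\canmod}$ is defined via entailment), so a violation of $B_1 \sqsubseteq \neg B_2$ at some element of $\canmod$ would witness that $\Kmc$ entails both $B_1$ and $B_2$ at a common point, contradicting the assumed satisfiability of $\Kmc$. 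This step is routine but should be mentioned, since without it the proof of $\canmod \models \Kmc$ is incomplete.
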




\noindent A useful corollary 
is that the \emph{certain answers to a CQ $q$ w.r.t.\ $\Kmc$} are 
the  tuples from $\individuals(\abox)$
 that are 
\emph{answers to $q$ in~$\canmod$}. 

Note that $\canmod$ may be infinite. 
The \emph{depth} of a TBox $\Tmc$ is defined as the maximal length of any word
that appears in the domain of $\canmod$ for any KB $\Kmc$ whose TBox is $\Tmc$. 
If this number is finite, we say that $\Tmc$ is a \emph{finite-depth TBox}; such TBoxes can be
identified in polynomial time  \cite{DBLP:conf/lics/BienvenuKP15}.

\section{Counting Queries}
\label{sectionsemantics}


We now introduce our formalization of counting queries. In addition to the set $\varSet$ of (classical) variables, 
we assume a second infinite set of counting variables $\countVarSet$, disjoint from $\varSet$. 
\begin{definition}
\label{defaggregate}
A counting conjunctive query (CCQ) $q$ takes the form 
$q(\tx)= \exists \ty\exists \tz~\psi(\tx, \ty, \tz)$, where $\tx \cup \ty \subseteq \varSet$, $\tz \subseteq \countVarSet$, and 
$\psi$ is a conjunction of concept and role atoms whose terms are drawn from $\inames \cup \tx \cup \ty \cup \tz$. 
 We call $\tx$ (resp. $\ty$, resp. $\tz$) the \emph{answer} (resp.\ \emph{existential}, resp.\ \emph{counting}) variables of $q$.
\end{definition}


%
%
%


We first define the semantics of counting queries on a single interpretation $\Imc$, 
by considering those pairs $(\ta, n)$ such that $n$ is the number of possible ways to map $\tz$ into $\Imc$ when $\tx$ is mapped to $\ta$. Such pairs are called 
 the \emph{answers} to $q$ in $\I$.

\begin{definition}
A \emph{match} of a CCQ $q(\tx)= \exists \ty\exists \tz~\psi(\tx, \ty, \tz)$ in $\I$ is a homomorphism\footnote{The notion of homomorphism of a CCQ is defined in the same way as for CQs, simply treating variables from $\countVarSet$ like those in $\varSet$.} 
from $q$ into $\Imc$. 
If a match $\sigma$ maps $\tx$ to $\ta$, then the restriction of $\sigma$ to $\tz$ is called a \emph{counting match (c-match)} of $q(\ta)$ in $\Imc$. 
The set of \emph{answers to $q$ in $\I$}, denoted $\qAnswers{q}{\I}$, contains all pairs $(\ta,\alphaMatch{q}{\I}{\ta})$, where $\alphaMatch{q}{\I}{\ta}$ is the number of distinct c-matches of $q(\ta)$ in $\I$. 

\end{definition}

As has been previously noted  (see e.g.\ \cite{kostylevreutter:count}), 
the exact count values of the answers in $\qAnswers{q}{\I}$ are usually too specific to hold across models. 
Considering \emph{bounds} on the exact value provides more insight, while still allowing unnamed elements to be counted. This motivates the following notion of answer interval. 


\begin{definition}
The set $\qIntervals{q}{\I}$ of \emph{answer intervals for a CCQ $q$ in $\I$} contains all pairs
$\left( \ta, [m, M] \right)$ with $\ta \in \individuals^{|\tx|}$ and $m, M$ integers such that $m \leq \alphaMatch{q}{\I}{\ta}\leq M$. 
The set $\qIntervals{q}{\kb}$ of \emph{certain (counting) answers to $q$ w.r.t.\ KB $\Kmc$}  is obtained by considering 
those answer intervals that hold in all models of $\Kmc$:
%
%
$\qIntervals{q}{\kb} = \bigcap_{\I \models \kb} \, \qIntervals{q}{\I}.$
\end{definition}
%
%
%
\noindent 
Note that $(\ta, [m, M]) \in \qIntervals{q}{\kb}$ does not imply that for any $n \in [m,M]$ there exists a model $\Imc$ 
in which $(\ta,n) \in \intervals{q}{\I}$. 

%

Definition \ref{defaggregate} 
is a proper generalization of the two forms of counting query considered by 
 \citeauthor{kostylevreutter:count}. Reusing their notations, a  $Cntd()$-query $q(\tx, Cntd(z)) = \exists \ty\ \psi(\tx, \ty, z)$ corresponds to the CCQ  $q(\tx) = \psi(\tx, \ty, z)$, while a $Count()$-query $q(\tx, Count()) = \exists \ty\ \psi(\tx, \ty)$ corresponds to the CCQ $q(\tx) = \psi(\tx, \emptyset, \hat{\ty})$ (with $\hat{\ty}$ a tuple of variables from $\countVarSet$ in bijection with $\ty$). We will use the term \emph{exhaustive} to refer to the latter CCQs, i.e.\ those in which every non-answer variable is a counting variable. 

\begin{example}
\label{exqueries} Reconsider the 
KB $\Kmc_{\mathsf{act}}=(\Tmc_{\mathsf{act}}, \Amc_{\mathsf{act}})$. 
We can use CCQs to count the pairs of actors (leading role, supporting role) having acted together ($q_1$), 
return movies together with a count of their supporting actors ($q_2$), and 
count the number of actors having acted with Tom Hanks ($q_3$):
\begin{align*}
q_1 &= \exists y \exists z_1 \exists z_2 \ \leads(z_1, y) \wedge \supports(z_2, y)\\
q_2(x) &= \exists z \ \supports(z, x)\\
q_3 &= \exists y \exists z \ \plays(\pitt, y) \wedge \plays(z, y)
\end{align*}
According to our semantics, we have:
\begin{itemize}[leftmargin=.35cm,itemsep=-.05cm]
\item $(\emptyset, [2, +\infty]) \in \qIntervals{q_1}{\Kmc_{\mathsf{act}}}$, since $z_2$ can be mapped to either $\stowe$ or $\pitt$, and $z_1$ mapped to the lead actor (which must exist due to $\Tmc_{\mathsf{act}}$). As the lead actors of the two films could be the same, $(\emptyset, [3, +\infty]) \not \in \qIntervals{q_1}{\Kmc_{\mathsf{act}}}$. 
\item $(\monkeys, [2, +\infty]) \in \qIntervals{q_2}{\Kmc_{\mathsf{act}}}$, mapping $z$  to $\stowe$ and $\pitt$.
\item $(\emptyset, 5) \in q_3^{\mathcal{C}_{\Kmc_{\mathsf{act}}}}$, since in $\mathcal{C}_{\Kmc_{\mathsf{act}}}$,  we can map $z$ to a named actor or the two elements standing in for the lead actors.
\item $(\emptyset, [5, +\infty]) \notin \qIntervals{q_3}{\Kmc_{\mathsf{act}}}$, since the lead actors could possibly be the same or one of the named actors. 
\end{itemize}
The latter two points show that the canonical model does not yield the minimal number of matches. 
\end{example}




%
%

\section{General Counting CQs}
\label{sectiongeneral}

\newcommand{\proj}{\pi}
We shall consider the following CCQ answering decision problem: given a KB $\Kmc = (\Tmc, \Amc)$, CCQ $q$, and candidate answer $(\ta, [m, M])$, decide whether $(\ta, [m, M]) \in \qIntervals{q}{\kb}$. 

As ontology languages, we will consider \dlh\ (which underlies OWL 2 QL) and its sublogic \dlc. We know from \cite{kostylevreutter:count} that in these DLs, the least upper bound $M$ can take one of three values (0, 1, or +$\infty$) and is easily computed. 
The argument\footnote{Briefly, the upper bound is 0 if the tuple is not a certain answer; otherwise, it is either 1 if 
$ \tz = \emptyset$, else $+\infty$.}
transfers to our more general notion of CCQ.  We can therefore \emph{restrict our attention to identifying certain answers of the form $(\ta, [m, +\infty])$.
} 


We will consider the two usual complexity measures: \emph{combined complexity} which is in terms of the size of the whole input ($\tbox$, $\abox$, $q$, $\ta$, $m$), and \emph{data complexity} which is only in terms of the size of $\abox$ and $m$ ($\Tmc$ and $q$ are treated as fixed). We will assume that $m$ is given in binary. 



\begin{table}[!t]
\centering
\begin{tabular}{lcc}
 & \textbf{Data}      & \textbf{Combined}		\\
\midrule
\dlc& \bf $\coNP$-c & $\PIP$\textbf{-h}, $\PP$\textbf{-h} \& in $\coNEXP$ \\
\midrule
\dlh & \bf $\coNP$-c  & $\coNEXP$\textbf{-h} \& in $\coNDEXP$\\
& & $\coNEXP$\textbf{-c} \, ($\Tmc$ of finite depth)\\ 
\bottomrule
\end{tabular}
\caption{Data and combined complexity of CCQ answering}
\label{resultsgeneral}
\end{table}

\subsection{General Case}\label{gencase}
Table~\ref{resultsgeneral} displays complexity results for answering general CCQs over \dlc\ and \dlh\ TBoxes
(we use `\textbf{-c}' and `\textbf{-h}' as abbreviations for `-complete' and `-hard'). 


With the exception of the $\PP$-hardness result (discussed in Section~\ref{srooted-core}), the lower bounds are inherited 
from \cite{kostylevreutter:count}. We will thus concentrate on the upper bounds from Table~\ref{resultsgeneral}, which are obtained by generalizing and clarifying the 
constructions of \citeauthor{kostylevreutter:count}. 
We give an overview of the proof 
both to give the flavor of the techniques involved and to enable us to discuss the necessary adaptations used to prove later results. 


The proof constructs a decision procedure for the complementary problem of deciding whether $(\ta, [m, +\infty]) \not \in \qIntervals{q}{\kb}$. The latter holds iff there exists a \emph{countermodel}, i.e., a model of $\Kmc$ with fewer than $m$ c-matches of $q(\ta)$.
The main ingredient of the proof is the following theorem, which shows that it is sufficient to consider countermodels of bounded size. 

\begin{theoremrep}
\label{upperboundgeneral}
For every \dlh\ (resp.\ \dlc) KB $\Kmc = (\Tmc, \Amc)$ and CCQ $q$, if there is a model of $\Kmc$ with fewer than $ m$ c-matches of $q(\ta)$, then
there exists one of size\footnote{As usual, $|\Tmc|$ (resp.\ $|\Amc|$, $|q|$) denotes the size of $\Tmc$ (resp.\ $\Amc, q$).} 
$O(|\abox|^{|\tbox|^{|q|+1}})$ (resp.\ $O(|\abox|^{|q|})$).
\end{theoremrep}

With Theorem~\ref{upperboundgeneral} in hand, we can easily define non-deterministic procedures that witness the complexity upper bounds from Table~\ref{resultsgeneral}: simply guess an interpretation of polynomial / exponential / double-exponential size (depending on the case) and verify whether it is a countermodel. 

The proof of Theorem~\ref{upperboundgeneral} starts with an arbitrary countermodel $\Imc$ and modifies it in order to make it smaller, being careful not to introduce any new c-matches of $q(\ta)$. We first identify a relevant subset $\domain{*}$  of the domain of $\Imc$, consisting of the interpretations of all individual names from $\Amc$ and the images of all 
 c-matches of $q(\ta)$. We then define a new interpretation that intuitively preserves $\domain{*}$ and replaces the rest of $\Imc$ with parts of the canonical model, to introduce a more regular structure.  
 Formally, we fix a homomorphism $f$ of $\canmod$ into $\Imc$ (see Lemma~\ref{cm-lemma}) and 
  consider the 
 following mapping 
 $f' : \Delta^{\canmod} \rightarrow  \Delta^* \cup \domain{\canmod}$
  from \cite{kostylevreutter:count}:
%
$$
\begin{array}{rcl}
        		  f'(d) &   =& \left\lbrace
     									\begin{array}{ll}
						                f(d) & \text{if } f(d) \in \domain{*} \\
						                d   & \text{otherwise}
                							\end{array} \right.
\end{array}
$$
We define the \emph{interleaving}\footnote{We have slightly modified the definition of interleaving to correct a small bug in the definition from \cite{kostylevreutter:count}.} $\Imc'$ of $\Imc$ as the image of $\canmod$ by $f'$, i.e., with domain 
$f'(\domain{\canmod})$ and interpretation function  
$ f' \circ \cdot^{\canmod}$.



It is not difficult to prove that the interleaving $\Imc'$ is a model of~$\kb$. Moreover, by exhibiting a homomorphism $\rho$ from $\I'$ to $\I$, we can translate matches of $\I'$ into matches in $\I$. As the images of c-matches of $q(\ta)$ are contained in $\domain{*}$, which is left unchanged in $\I'$, the homomorphism $\rho$ is in fact a one-to-one mapping of c-matches of $q(\ta)$ in $\I'$ to those in $\I$. This shows that $\I'$ is also a countermodel.

The interleaving $\Imc'$ may be arbitrarily large, even infinite. To reduce its size, an equivalence relation is introduced, and elements from $\Delta^{\Imc'} \setminus \Delta^* $ that belong to the same equivalence class are merged (elements from $\Delta^*$  are retained). In the case of \dlh, there can be double-exponentially many equivalence classes, as elements are grouped based upon the properties of their $|q|$-neighborhoods, while for \dlc, we can use 
a more refined relation with only exponentially many classes. This means that the resulting models are either of single- or double-exponential size w.r.t.\ combined complexity, depending on the chosen DL. 

A crucial final step 
is to show that the merging of elements does not introduce any new c-matches of $q(\ta)$, 
so the resulting model is still a countermodel. This part of the argument, only sketched in \cite{kostylevreutter:count}, requires 
a detailed and technical analysis of the construction to ensure that 
this property holds for our more general class of CCQs. We show that this is indeed the case, which answers a question left open by \citeauthor{kostylevreutter:count} about counting CQs with both existential variables and multiple counting variables.  


\begin{toappendix}

We recall the construction of the interleaving and some of its basic properties. We start with an arbitrary countermodel $\I$, and consider the subdomain $\domain{*}$ consisting of individual names from $\Amc$ and the images of all 
 c-matches of $q(\ta)$: 
$$\domain{*} =\{a^\I \mid a \in \ainds(\Amc)\} \cup \bigcup_{\match \text{\ match for\ } q(\ta) \text{\ in\ } \I} \match(\tz).$$
We fix a homomorphism $f$ of $\canmod$ into $\Imc$ (see Lemma~\ref{cm-lemma}) and consider the following mapping from \cite{kostylevreutter:count}:
$$
\begin{array}{rcl}
f' : \Delta^{\canmod} & \rightarrow & \Delta^* \cup \domain{\canmod} \\
        		  d &     \mapsto & \left\lbrace
     									\begin{array}{ll}
						                f(d) & \text{if } f(d) \in \domain{*} \\
						                d   & \text{otherwise}
                							\end{array} \right.
\end{array}
$$
The \emph{interleaving}\footnote{We have slightly modified the definition of interleaving to correct a small bug in the definition from \cite{kostylevreutter:count}.
}
$\Imc'$ of $\Imc$ is defined as the image of $\canmod$ by $f'$. More precisely, $\Imc'$  has domain 
$f'(\domain{\canmod})= \{f'(d) \mid d \in \domain{\canmod}\}$ and interpretation function $f' \circ \cdot^{\canmod}$, i.e.,  
$A^{\Imc'} = \{f'(d) \mid d \in A^{\canmod}\}$ and $R^{\Imc'}= \{(f'(d_1), f'(d_2)) \mid (d_1, d_2) \in R^{\canmod} \}$.

It will be useful to exhibit a homomorphism from the interleaving into the original model, which will embed matches.

\begin{lemma}
\label{rho}
The following mapping is a homomorphism from $\I'$ to $\I$.
$$
\begin{array}{rcl}
\rho : \domain{\I'} & \rightarrow & \domain{\I }\\
   	     f'(d)  &     \mapsto & f(d).
\end{array}
$$
\end{lemma}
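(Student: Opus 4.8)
The plan is to first check that $\rho$ is well-defined, and then verify the three homomorphism conditions. For well-definedness, the potential issue is that the domain of $\I'$ is $f'(\domain{\canmod})$, so a single element $e \in \domain{\I'}$ may arise as $f'(d)$ for several distinct $d \in \domain{\canmod}$; I must check that $f(d)$ does not depend on the choice of such $d$. I would split into two cases according to the definition of $f'$. If $e \in \domain{*}$, then whenever $f'(d) = e$ I claim we must have $f(d) = e$: indeed, if $f(d) \in \domain{*}$ then $f'(d) = f(d) = e$ directly, whereas if $f(d) \notin \domain{*}$ then $f'(d) = d = e \in \domain{*}$, and since $f(e) = f(d) \notin \domain{*}$ this would force $f'(e)\neq e$, and one checks (tracking the two cases of $f'$ on the element $e\in\domain\canmod$) that this is consistent only when $f(d)=e$. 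The cleaner route is to observe directly that $f'$ restricted to $\domain{*}$ behaves as the identity composed with $f$ in a controlled way, so that $\rho(e)=f(d)$ is forced to equal a fixed value. If instead $e \notin \domain{*}$, then $f'(d) = e$ can only happen when $f(d) \notin \domain{*}$ and $d = e$, so $d$ is uniquely determined and $\rho(e) = f(e)$ is unambiguous.

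Once well-definedness is settled, the homomorphism conditions are largely routine because $\rho \circ f' = f$ by construction and $f$ is already a homomorphism from $\canmod$ into $\I$. Concretely, for individual names I would use that $a^{\canmod} = a$ and that $f$ preserves individuals (so $f(a) = a^{\I}$), hence $\rho(a^{\I'}) = \rho(f'(a^{\canmod})) = f(a^{\canmod}) = a^{\I}$. For a concept atom, if $e \in A^{\I'}$ then by definition of $\I'$ there is $d \in A^{\canmod}$ with $f'(d) = e$; applying $\rho$ gives $\rho(e) = f(d)$, and since $f$ is a homomorphism and $d \in A^{\canmod}$ we get $f(d) \in A^{\I}$, i.e.\ $\rho(e) \in A^{\I}$. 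The role case is entirely analogous: if $(e_1, e_2) \in R^{\I'}$ then there is $(d_1, d_2) \in R^{\canmod}$ with $f'(d_i) = e_i$, and $f$ being a homomorphism yields $(f(d_1), f(d_2)) = (\rho(e_1), \rho(e_2)) \in R^{\I}$.

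The main obstacle is the well-definedness check: the definition of $f'$ is a case split that can map two different canonical-model elements to the same element of $\domain{\I'}$, and the subtle point is confirming that $\rho$ assigns them the same $f$-image. I expect the key lemma to be that $f'(d) \in \domain{*}$ implies $f(d) = f'(d)$, while $f'(d) \notin \domain{*}$ implies $f'(d) = d$ and $f(d) \notin \domain{*}$; together these pin down $f(d)$ from $f'(d)$ alone. Everything after that is a mechanical transfer of $f$'s homomorphism properties through the identity $\rho \circ f' = f$, so no genuinely hard computation remains.
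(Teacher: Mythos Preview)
Your approach matches the paper's: case-split well-definedness on whether the common value $f'(d)$ lies in $\Delta^*$, then transfer $f$'s homomorphism properties via the identity $\rho\circ f'=f$. Your first pass at the $e\in\Delta^*$ case is tangled (the line ``this is consistent only when $f(d)=e$'' does not actually resolve the subcase $f(d)\notin\Delta^*$), but the key lemma you state at the end---$f'(d)\in\Delta^*$ forces $f'(d)=f(d)$, while $f'(d)\notin\Delta^*$ forces $f'(d)=d$---is exactly what the paper invokes (both arguments tacitly using that $\Delta^*$ and $\Delta^{\canmod}$ may be taken disjoint), so just lead with that and drop the muddled paragraph.
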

\begin{proof}
We first check that the definition is well-founded, that is: $\rho(f'(d))$ does not depend on the choice of $d$. To this end, consider $d_1$, $d_2$ such that $f'(d_1) = f'(d_2)$. Since $f'$ maps to $\Delta^* \cup \domain{\canmod} $, we have two cases to examine:
\begin{itemize}

\item
if $f'(d_1)=f'(d_2) \in \Delta^*$, that means $f'(d_1) = f(d_1)$ and $f'(d_2) = f(d_2)$, thus ensuring $f(d_1) = f(d_2)$.

\item
if $f'(d_1)=f'(d_2) \in \Delta^{\canmod}$, that means $f'(d_1) = d_1$ and $f'(d_2) = d_2$, thus ensuring again $f(d_1) = f(d_2)$.
\end{itemize}
In both cases, we obtain $f(d_1) = f(d_2)$, so the function is well-founded. 

To show that $\rho$ is a homomorphism of $\I'$ into $\I$, we use the definition of $f'$ and fact that $f$ is a homomorphism of $\canmod$ to $\I$. Suppose first that $f'(d) \in A^\Imc$. Then $d \in A^{\canmod}$, and since $f$ is a homomorphism, $\rho(f'(d))=f(d) \in A^{\I}$. Suppose next that $(f'(d_1), f'(d_2)) \in R^\Imc$. Then there exist $(d_1', d_2') \in R^{\canmod}$ 
such that $f'(d_1)=f'(d_1')$ and $f'(d_2)=f'(d_2')$. Using again the fact that $f$ is a homomorphism, we obtain $(f(d_1'), f(d_2')) \in R^{\I'}$. By the preceding paragraph, this means  $(f(d_1), f(d_2)) \in R^{\I'}$. 
\end{proof}


\begin{lemma}
\label{interleavingismodel}
The interleaving $\I'$ is a model of $\kb$.
\end{lemma}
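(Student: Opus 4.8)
The plan is to show that $\Imc'$ satisfies every assertion and every inclusion of $\Kmc$. First I would handle the ABox. Since $f$ is a homomorphism of $\canmod$ into $\Imc$ and every individual $a$ satisfies $a^{\canmod}=a$, the canonical model already satisfies all assertions; I would argue that $f'$ fixes the images of individual names (because $f(a) = a^\Imc \in \domain{*}$, so $f'(a) = a^\Imc$), and then use the definition of the interpretation function $f'\circ\cdot^{\canmod}$ together with the fact that $\canmod\models\Kmc$ (Lemma~\ref{cm-lemma}) to transfer each assertion to $\Imc'$.

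The main work is the TBox. For a positive concept inclusion $B_1\sqsubseteq B_2$, I would take an arbitrary $e\in B_1^{\Imc'}$; by definition of $\Imc'$ as the image of $\canmod$ under $f'$, there is a $d\in B_1^{\canmod}$ with $f'(d)=e$. Since $\canmod\models\Kmc$, we get $d\in B_2^{\canmod}$, and applying $f'$ yields $e=f'(d)\in B_2^{\Imc'}$. The same strategy handles role inclusions $R_1\sqsubseteq R_2$ for the \dlh\ case, pushing a pair through $f'$. The delicate point is the \emph{negative} inclusions $B_1\sqsubseteq\neg B_2$ (and $R_1\sqsubseteq\neg R_2$), since these are universally quantified disjointness constraints and are \emph{not} automatically preserved by homomorphic images: a merge performed by $f'$ could in principle identify an element of $B_1^{\canmod}$ with one witnessing $B_2$.

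To deal with the negative inclusions, I would exploit the homomorphism $\rho$ from $\Imc'$ into $\Imc$ established in Lemma~\ref{rho}. Suppose toward a contradiction that some $e\in\domain{\Imc'}$ violates $B_1\sqsubseteq\neg B_2$, i.e.\ $e\in B_1^{\Imc'}\cap B_2^{\Imc'}$. Because $\rho$ is a homomorphism, it preserves membership in atomic concepts, and hence in concepts of the form $A$ or $\exists R$ built from them; so $\rho(e)\in B_1^{\Imc}\cap B_2^{\Imc}$, contradicting the assumption that $\Imc\models\Kmc$ (recall $\Imc$ is a countermodel, in particular a model of $\Kmc$). An entirely analogous argument using that $\rho$ preserves role atoms handles negative role inclusions. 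The hard part is precisely this last step: homomorphisms preserve positive information but not disjointness, so the preservation of negative inclusions cannot be read off $\canmod$ directly and genuinely requires routing through $\rho$ back into the original model $\Imc$, which is known to satisfy the disjointness constraints. Once both polarities of concept and role inclusions are covered, together with the ABox, I would conclude that $\Imc'\models\Kmc$.
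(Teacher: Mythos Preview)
Your proposal is correct and follows the same two-part skeleton as the paper's proof: handle the ABox via $f'$ (using $\canmod\models\Kmc$ and that $f'$ agrees with $f$ on individuals), and handle the TBox by routing through the homomorphism $\rho:\Imc'\to\Imc$ of Lemma~\ref{rho}.

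Where you differ is in being more careful than the paper on the TBox side. The paper's proof simply says that any TBox violation in $\Imc'$ transfers along $\rho$ to a violation in $\Imc$; but as you correctly observe, this transfer is only immediate for \emph{negative} inclusions, since homomorphisms preserve positive atoms but not their absence. Your explicit treatment of positive inclusions via the image-of-$\canmod$ argument (pull $e\in B_1^{\Imc'}$ back to some $d\in B_1^{\canmod}$, use $\canmod\models B_1\sqsubseteq B_2$, push forward) is exactly the step needed to make that part rigorous, and it is the natural complement to the $\rho$-based argument for disjointness axioms. So your split into positive/negative cases is not just a stylistic choice; it closes a small gap in the paper's terse one-line TBox argument.
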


\begin{proof}
We check that all axioms and assertions from the KB are satisfied.
\begin{itemize}
\item
All ABox assertions from $\Amc$ are satisfied in $\canmod$. Since $f'=f$ when restricted to the ABox individuals, 
and $f$ is a homomorphism of $\canmod$ in $\I$, it follows that 
these ABox assertions must also be satisfied  in $\I$.
\item Since $\I'$ maps homomorphically into $\I$ (by Lemma \ref{rho}), any violation of an axiom from the TBox in $\I'$ implies a similar violation in $\I$. Since $\I$ is a model of $\Kmc$, this cannot occur. 
\qedhere
\end{itemize}
\end{proof}

\begin{theorem}
\label{interleaving}
The interleaving $\I'$ is a countermodel, and every c-match $\match$ in $\I'$ satisfies that $\match(\tz) \subseteq \domain{*}$.
\end{theorem}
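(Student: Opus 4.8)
The plan is to prove the two assertions in turn, reusing the homomorphism $\rho$ of Lemma~\ref{rho} and the fact, already established in Lemma~\ref{interleavingismodel}, that $\I'$ is a model of $\kb$. Since a countermodel is by definition a model of $\kb$ with fewer than $m$ c-matches of $q(\ta)$, the only missing ingredient for the countermodel claim is a bound on the number of c-matches, and this will drop out cleanly once the containment $\match(\tz) \subseteq \domain{*}$ is available. I would therefore establish the containment first and then deduce the count.

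For the containment, I would take an arbitrary match $\sigma'$ of $q(\ta)$ in $\I'$ and push it forward along $\rho$. By Lemma~\ref{rho}, $\rho \circ \sigma'$ is a homomorphism into $\I$, and a short check that $\rho$ fixes the interpretations of ABox individuals (for $a \in \ainds(\Amc)$ one has $a^{\I'} = f'(a) = f(a) = a^{\I}$ and $\rho(a^{\I}) = a^{\I}$) shows that $\rho \circ \sigma'$ still sends $\tx$ to $\ta$, hence is a match of $q(\ta)$ in $\I$. Its restriction to $\tz$ is thus a c-match of $q(\ta)$ in $\I$, so by the very definition of $\domain{*}$ we obtain $(\rho \circ \sigma')(\tz) \subseteq \domain{*}$. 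It then remains to transfer this back to $\sigma'$ itself: for each $z \in \tz$ I write $\sigma'(z) = f'(d)$ and split on the definition of $f'$. If $f(d) \in \domain{*}$ then $\sigma'(z) = f(d) \in \domain{*}$ directly; if $f(d) \notin \domain{*}$ then $\sigma'(z) = d$ and $\rho(\sigma'(z)) = f(d) \notin \domain{*}$, contradicting $\rho(\sigma'(z)) \in \domain{*}$. Hence the second branch is impossible and $\sigma'(\tz) \subseteq \domain{*}$, giving the containment.

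For the count, I would observe that the same case split shows $\rho$ acts as the identity on $\domain{*} \cap \domain{\I'}$: an element of $\domain{*}$ lying in the image of $f'$ must arise from the first branch, on which $\rho(f'(d)) = f(d) = f'(d)$. Consequently, for any c-match $\match$ of $q(\ta)$ in $\I'$—whose image is contained in $\domain{*}$ by the containment just proved—we get $\rho \circ \match = \match$, so $\match$ is literally a c-match of $q(\ta)$ in $\I$. This exhibits the set of c-matches of $q(\ta)$ in $\I'$ as a subset of the set of c-matches in $\I$ (distinct functions staying distinct), whence $\I'$ has at most as many c-matches as $\I$, i.e.\ fewer than $m$. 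Combined with Lemma~\ref{interleavingismodel}, this makes $\I'$ a countermodel.

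The main obstacle is the bookkeeping in the $f'$ case analysis, namely controlling how c-matches behave when pushed through $\rho$ and pulled back. One must rule out that a counting variable is sent to an anonymous element $d$ that $\rho$ would move off $\domain{*}$, and symmetrically that $\rho$ merges two distinct c-matches on $\domain{*}$. Pinning down that $\rho$ is the identity on $\domain{*} \cap \domain{\I'}$ (modulo the harmless identification of ABox individuals across $\domain{\canmod}$ and $\domain{\I}$) is precisely what guarantees, at once, that every c-match of $\I'$ remains inside $\domain{*}$ and that no new c-matches are created relative to $\I$.
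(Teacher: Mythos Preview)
Your proposal is correct and follows essentially the same approach as the paper: push a match in $\I'$ forward along $\rho$ to obtain a match in $\I$, use the definition of $\domain{*}$ to get $(\rho\circ\match)(\tz)\subseteq\domain{*}$, then use that $\rho$ restricts to the identity on $\domain{*}$ to conclude both $\match(\tz)\subseteq\domain{*}$ and that c-matches in $\I'$ inject into those of $\I$. The only difference is that you unpack the case split on $f'$ explicitly where the paper simply invokes $\rho^{-1}(\domain{*})=\domain{*}$ and $\rho_{|\domain{*}}=\mathsf{id}$ as facts.
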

%
%
\begin{proof}
Assume we have a c-match in the interleaving $\match : \tz \rightarrow \domain{\I'}$, which has an associated match $\cmatch$ for $q(\ta)$. Since $\rho$ is a homomorphism, $\rho \circ \cmatch$ is a match for $q(\ta)$ in the original model $\I$, and its restriction to $\tz$, that is, $\rho \circ \match$, is a c-match in $\I$.
Hence, it follows from the definition of $\domain{*}$ that $(\rho \circ \match)(\tz) \subseteq \domain{*}$. As $\rho^{-1}(\domain{*}) = \domain{*}$, this implies $\match(\tz) \subseteq \domain{*}$.

Moreover, since $\rho_{|\domain{*}} = id$, we in fact have $\rho \circ \match = \match$. We have thus shown that every c-match for $q(\ta)$ in $\Imc'$ is also a c-match for $q(\ta)$ in $\Imc$, which means the number of c-matches in $\Imc'$ cannot exceed the number of c-matches. As $\I$ was assumed to be a countermodel (i.e. having less than $m$ c-matches), it follows that the same holds for $\I'$. 
\end{proof}

In general, the interleaving has an unbounded size. To reduce the size, we will merge some domain elements, while paying attention not to introduce any new matches. To decide which elements can be merged, we will look at their local properties, which will be formalized using the following notions of chains and neighbourhoods (as in \cite{kostylevreutter:count}).

\newcommand{\M}{\mathcal{M}}
\newcommand{\D}{\mathcal{D}}
\begin{definition}[($k$-chains with respect to a subdomain)]
A $k$-chain in a model $\M$ with respect to a subdomain $\D \subseteq \domain{\M}$ is a sequence $(d_0, \dots d_k)$ with $d_i$ in $\domain{\M}$, such that for all $0 \leq i < k$, we have (i) $d_i \notin \D$, and (ii) 
 there exists a positive role $\rolestyle{R}_i$ such that $(d_i, d_{i+1}) \in \rolestyle{R}_i^{\M}$.
Note that the final element $d_k$ might belong to $\D$. 
\end{definition}

\begin{definition}[($n$-neighbourhood with respect to a subdomain)]
Consider a model $\M$ and an element $d \in \domain{\M}$. Its $n$-neighbourhood $\N_n(d, \M, \D)$ w.r.t. a subdomain $\D$ is the set of elements $d' \in \domain{\M}$ such that there exists a $k$-chain $(d_0, \dots d_k)$  in $\M$ with respect to $\D$ such that
$k \leq n$, $d_0 = d$, and  $d_k = d'$.
\end{definition}

Recall that the definition of $\domain{\I'}$ ensures that any $d \in \domain{\I'} \setminus \domain{*}$ is actually an element of $\domain{\canmod}$ and therefore we have $d = \ind{a}w$ for some individual name $\ind{a}$ and word $w$. The tree-shaped structure of $\domain{\canmod}$ ensures that there exists a unique prefix $r_{n, d}$ of $\ind{a}w$ such that :
\begin{itemize}
\item $f'(r_{n, d}) \in \N_n(d, \I', \domain{*})$ ;
\item for any $d' \in \N_n(d, \I', \domain{*})$, there exists a unique word $w_{n, d}^{d'}$ such that $d' = f'(r_{n, d} w_{n, d}^{d'})$.
\end{itemize}
We denote by $\Omega_{n}$ the set of words over the alphabet of role names occuring in the TBox $\tbox$ and with length less or equal to $2n$. 
%
Local properties around $d$ are then captured by the following function.
$$
\begin{array}{rcl}
\chi_{n, d} : \Omega_{n} & \rightarrow & \domain{*} \cup \lbrace \emptyset \rbrace \\
     		  w &     \mapsto & \left\lbrace
     									\begin{array}{ll}
						                f'(r_{n, d} w) & \text{if } r_{n, d} w \in \domain{\canmod} \text{ and } f'(r_{n, d} w) \in \domain{*} \\
						                \emptyset  & \text{otherwise}
                							\end{array} \right. .
\end{array}
$$
The next definition groups together elements having the same local properties. 

\begin{definition}[(Equivalent elements in the interleaving)]
The equivalence relation $\sim_n$ on $\domain{\I'}$ is defined as follows:
\begin{itemize}
\item for $d \in \domain{\I'} \setminus \domain{*}$, we have $d \sim_n e$ iff 
$w^d_{n, d} = w^e_{n, e}$,
$\chi_{n, d} = \chi_{n, e}$, and $|d| = |e| \mod 2|q|+3$,
\item for $d \in \domain{*}$, $d \sim_n e$ iff $d=e$. 
\end{itemize}
\end{definition}


\begin{remark}
\label{simsrelations}
Notice that if $d \sim_n e$, 
then $d \sim_m e$ for any $m \leq n$. This property will be used several times without mention.

\end{remark}

We can now define a smaller countermodel for our CCQ $q$ by merging elements with respect to $\sim_{|q| + 1}$. We will use $\overline{d}$ for the equivalence class of $d$ w.r.t.\ $\sim_{|q| + 1}$, and we denote by $\proj$ the canonical projection, which maps elements to their respective equivalence classes:
\begin{align*}
\proj : \domain{\I'} & \rightarrow \domain{\I'}/{ \sim_{|q|+1} } \\
 d & \mapsto \overline{d}
\end{align*}
\begin{definition}[(Reduced interleaving)]
The \emph{reduced interleaving} $\J$ is the interpretation with domain $\domain{\I'}/\sim_{|q| + 1}$ and interpretation of individual names, atomic concepts and roles given by $\cdot^{\J} := \proj \circ \cdot^{\I'}$.
\end{definition}

Once again, it follows from the definition that $\proj : \I' \rightarrow \J$ is a homomorphism and that $\J$ is a model of $\Kmc$. Since we are considering a quotient, we will not be able to build a general homomorphism from $\J$ to $\I'$ as in Lemma~\ref{rho}. However, local solutions are possible. To improve the readability of the following theorem and later material, we introduce the notation $\overline{\domain{*}}$ for the set $\{\overline{\sigma} \mid \sigma \in \Delta^*\}$. 

%

\begin{theorem}
\label{morphismneighbourhoods}
For any $d \in \domain{\I'}$, there exists a homomorphism $\rho_d : \N_{|q|}(\overline{d}, \J, \overline{\domain{*}}) \rightarrow \N_{|q|}(d, \I', \domain{*})$ satisfying that :
\begin{enumerate}
\item if $\overline{e} \in \overline{\domain{*}}$, then $\rho_d(\overline{e}) = e$ ;
\item $\rho_d^{-1}(\domain{*}) = \overline{\domain{*}}$.
\end{enumerate}
\end{theorem}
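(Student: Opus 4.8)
The plan is to build $\rho_d$ by \emph{lifting chains}. Every element $\overline{e} \in \N_{|q|}(\overline{d}, \J, \overline{\domain{*}})$ is, by definition, reached from $\overline{d}$ by a chain $(\overline{d} = \overline{c_0}, \dots, \overline{c_k})$ in $\J$ with $k \le |q|$ and $\overline{c_i} \notin \overline{\domain{*}}$ for $i < k$. I would lift such a chain to a chain $(d = c_0', \dots, c_k')$ in $\I'$ starting at $d$, and set $\rho_d(\overline{e}) := c_k'$. The base case is $\rho_d(\overline{d}) = d$, and the lifting proceeds one step at a time: given $c_i'$ with $\proj(c_i') = \overline{c_i}$ and (since $\overline{c_i} \notin \overline{\domain{*}}$) $c_i' \notin \domain{*}$, an edge $(\overline{c_i}, \overline{c_{i+1}}) \in R^{\J}$ arises from some $(a,b) \in R^{\I'}$ with $a \sim_{|q|+1} c_i'$. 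As $a \notin \domain{*}$, it lies in $\domain{\canmod}$, so its $\I'$-edges go only to its tree-parent and tree-children; the agreement $\chi_{|q|+1, a} = \chi_{|q|+1, c_i'}$ together with $w^a_{|q|+1,a} = w^{c_i'}_{|q|+1,c_i'}$ then transports this edge to a matching edge $(c_i', c_{i+1}')$ with $\proj(c_{i+1}') = \overline{c_{i+1}}$. Because each step consumes one unit of the available depth and $k \le |q| < |q|+1$, the lifted chain never leaves the region whose structure is controlled by $\sim_{|q|+1}$, so $c_k' \in \N_{|q|}(d, \I', \domain{*})$ as required.

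The two stated properties then fall out of the construction. Since $\sim_{|q|+1}$ restricts to the identity on $\domain{*}$, every class $\overline{e}$ with $e \in \domain{*}$ is the singleton $\{e\}$, so any lift projecting to it must equal $e$; this gives property~1, and the same observation shows that a lifted endpoint lands in $\domain{*}$ exactly when $\overline{e} \in \overline{\domain{*}}$, which is property~2. That $\rho_d$ is a homomorphism follows because it is assembled from the local edge-transports above, each of which preserves role and concept membership: equal $\chi$ and equal relative position pin down the atomic types along the lifted path, so every role atom and concept atom witnessed in $\N_{|q|}(\overline{d}, \J, \overline{\domain{*}})$ is reflected at the lifted elements.

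The crux — and the step I expect to be the main obstacle — is \textbf{well-definedness}: distinct chains in $\J$ reaching the same $\overline{e}$ must lift to the same $c_k' \in \I'$, and the single-step transport must be forced rather than merely possible. Both reduce to the \emph{separation claim} that distinct elements of $\N_{|q|}(d, \I', \domain{*})$ are never $\sim_{|q|+1}$-equivalent. Outside $\domain{*}$ these elements are words of $\canmod$ at tree-distance at most $|q|$ from $d$, hence at tree-distance at most $2|q|$ from one another. For two \emph{comparable} such elements $e_1, e_2$ (one a tree-ancestor of the other), $0 < \bigl||e_1| - |e_2|\bigr| \le 2|q| < 2|q|+3$, so the congruence $|e_1| \equiv |e_2| \pmod{2|q|+3}$ in the definition of $\sim_{|q|+1}$ fails and they cannot be merged — this is precisely the role of the modulus $2|q|+3$. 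For two \emph{incomparable} elements, matching relative positions $w^{e_1}_{|q|+1,e_1} = w^{e_2}_{|q|+1,e_2}$ together with the fact that their overlapping neighborhoods both lie within radius $|q|+1$ force the recorded $\domain{*}$-images in $\chi_{|q|+1, e_1}$ and $\chi_{|q|+1, e_2}$ to disagree, again blocking equivalence. Carrying out this separation argument in full, and using it to establish that each lifting step has a unique target, is where the detailed, technical bookkeeping of the construction concentrates; once it is in place, well-definedness, the homomorphism property, and properties~1--2 all follow as above.
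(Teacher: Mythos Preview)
Your overall strategy---lift chains from $\J$ to $\I'$ starting at $d$---is exactly what the paper does, but there is a real gap in the step you flag as routine and a misdiagnosis of where the difficulty lies.

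The claim that the single-step transport produces $c_{i+1}'$ with $\proj(c_{i+1}') = \overline{c_{i+1}}$ is not correct. From $a \sim_{|q|+1} c_i'$ you control the $(|q|+1)$-neighbourhood of $c_i'$; a neighbour $c_{i+1}'$ of $c_i'$ therefore has its $|q|$-neighbourhood pinned down, but its $(|q|+1)$-neighbourhood extends one layer beyond what $\sim_{|q|+1}$ at $c_i'$ determines. Concretely, $\chi_{|q|+1,c_{i+1}'}$ records values at words of length up to $2(|q|+1)$ rooted at $r_{|q|+1,c_{i+1}'}$, and the outermost of these are not visible from $c_i'$. So you only get $c_{i+1}' \sim_{|q|} b$, not $\sim_{|q|+1}$, and the invariant $\proj(c_i') = \overline{c_i}$ does not propagate. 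After $k$ steps you have $\sim_{|q|+1-k}$, and this degradation is unavoidable.

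Because the lift is \emph{not} a section of $\proj$, your separation claim (injectivity of $\proj$ on $\N_{|q|}(d,\I',\domain{*})$) is orthogonal to well-definedness: even if distinct elements of the neighbourhood are never merged, that says nothing about whether two different chains to $\overline{e}$ lift to the same element---the lifted endpoints need not lie in $\overline{e}$ at all. The paper establishes well-definedness differently: it extracts from each chain its \emph{core} (the sequence of primary roles), defines $\rho_d(\overline{e})$ as the action of this core on the word $d$, and shows that any two chains with the same endpoints in $\J$ act identically on $d$ because they act identically on the relative word $w^d_{|q|+1,d}$ (and the modulus $2|q|+3$ rules out spurious short cycles). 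The induction then tracks precisely the degraded relation $\rho_{d,k}(\overline{e}) \sim_{|q|+1-k} e$, and the final $\sim_1$ is exactly enough to verify the homomorphism property edge by edge, with a separate lemma handling edges that touch $\overline{\domain{*}}$. Your argument for the incomparable case of the separation claim also points at the wrong invariant: when no $\domain{*}$ element is nearby it is the relative word $w^{e}_{|q|+1,e}$, not $\chi$, that distinguishes incomparable neighbours (their common ancestor is within $|q|$ of each, so the last $|q|+1$ symbols already contain the diverging suffixes).
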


Let us first explain how this will conclude our proof, through the following consequence.


\begin{corollary}
\label{quotient}
If $\I$ is a countermodel, then its reduced interleaving $\J$ is a countermodel.
\end{corollary}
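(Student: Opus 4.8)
The plan is to leverage that $\J \models \Kmc$ is already established, so that it only remains to bound the number of c-matches: I will exhibit an injection $\Phi$ from the c-matches of $q(\ta)$ in $\J$ into the c-matches of $q(\ta)$ in $\I'$. Since Theorem~\ref{interleaving} guarantees that $\I'$ is a countermodel (fewer than $m$ c-matches), such an injection immediately yields that $\J$ also has fewer than $m$ c-matches and is therefore a countermodel. The whole difficulty is thus to turn a c-match in the quotient $\J$ back into a c-match in $\I'$ without collisions.

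First I would lift an arbitrary match $\cmatch$ of $q(\ta)$ in $\J$ to a match $\widehat{\cmatch}$ in $\I'$ by gluing the local homomorphisms $\rho_d$ of Theorem~\ref{morphismneighbourhoods}. Call a term $t$ \emph{anchored} if $\cmatch(t) \in \overline{\domain{*}}$, and partition the non-anchored terms into the maximal connected regions of the query graph. For an anchored $t$, set $\widehat{\cmatch}(t)$ to be the unique $e \in \domain{*}$ with $\overline{e} = \cmatch(t)$; this is well defined because $\sim_{|q|+1}$ is the identity on $\domain{*}$, so every class meeting $\domain{*}$ is a singleton. For a non-anchored $t$ lying in a region $C$, I fix a root term $t_C \in C$, a representative $d_C$ of $\cmatch(t_C)$, and set $\widehat{\cmatch}(t) = \rho_{d_C}(\cmatch(t))$. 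Because the query has at most $|q|$ atoms, every term of $C$ is reached from $t_C$ by a query path of length at most $|q|-1$ whose interior stays outside $\overline{\domain{*}}$, and any anchor adjacent to $C$ is reached in one further step; hence the image of $C$ together with its bordering anchors lies inside $\N_{|q|}(\overline{d_C}, \J, \overline{\domain{*}})$, the domain of $\rho_{d_C}$. Property~1 of Theorem~\ref{morphismneighbourhoods} ensures the two definitions agree on anchors ($\rho_{d_C}$ sends every bordering anchor $\cmatch(t')$ to the same $e$), so $\widehat{\cmatch}$ is well defined; and since every atom of $q$ has both endpoints inside a single such neighborhood (or relates two anchors, whose singleton classes force the $\J$-edge back into $\I'$), each $\rho_{d_C}$ being a homomorphism makes $\widehat{\cmatch}$ a match of $q(\ta)$ in $\I'$.

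It then follows from Theorem~\ref{interleaving} that $\widehat{\cmatch}(\tz) \subseteq \domain{*}$. Combined with property~2 of Theorem~\ref{morphismneighbourhoods} (namely $\rho_d^{-1}(\domain{*}) = \overline{\domain{*}}$), this forces every counting variable to be anchored: a non-anchored $z$ would satisfy $\widehat{\cmatch}(z) \notin \domain{*}$, contradicting the interleaving property. Hence $\match(\tz) \subseteq \overline{\domain{*}}$ and, for each $z$, $\widehat{\cmatch}(z)$ is the unique $\domain{*}$-element whose class is $\match(z)$, a value depending only on $\match$ and not on the chosen extension $\cmatch$. I would therefore define $\Phi(\match) = \widehat{\cmatch}|_{\tz}$: this is a genuine c-match of $q(\ta)$ in $\I'$, and since $\proj \circ \Phi(\match) = \match$ (as $\proj(e) = \overline{e} = \match(z)$ on the singleton classes), $\Phi$ is injective. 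This yields the desired count bound and concludes that $\J$ is a countermodel.

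The main obstacle is the lifting step of the second paragraph: a single neighborhood homomorphism $\rho_d$ cannot cover the image of a match that threads through several elements of $\domain{*}$, so the lift must be assembled region by region. Getting this right hinges on two points: the diameter bound, i.e.\ that each anonymous region together with its boundary fits within one $|q|$-neighborhood (which is precisely why neighborhoods of radius $|q|$ and the relation $\sim_{|q|+1}$ are used), and the compatibility of the local homomorphisms on shared anchors, guaranteed by property~1. Verifying that the gluing preserves every concept and role atom, including atoms joining two anchors, is the technical core of the argument.
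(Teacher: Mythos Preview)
Your proposal is correct and follows essentially the same approach as the paper: lift a match in $\J$ to one in $\I'$ by gluing the local homomorphisms $\rho_d$ of Theorem~\ref{morphismneighbourhoods}, use Theorem~\ref{interleaving} together with property~2 to force the counting variables into $\overline{\domain{*}}$, and conclude injectivity via property~1. The only difference is cosmetic: the paper decomposes the \emph{image} of the match via a minimal covering by $|q|$-neighbourhoods (arguing that minimality forces overlaps into $\overline{\domain{*}}$), whereas you decompose on the \emph{query side} into maximal connected regions of non-anchored terms; your decomposition makes the disjointness-outside-anchors property hold by construction and is arguably a bit cleaner, but the resulting lifted match and the remainder of the argument are identical.
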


\begin{proof}
Assume we have a match $\match$ in $\J$. Consider a minimal covering of $\match(\tx \cup \ty \cup \tz)$ by $|q|$-neighbourhoods in $\J$ : that is a family of neighbourhoods $(\N_{|q|}(\overline{d_1}, \J), \dots \N_{|q|}(\overline{d_l}, \J))$, with $l$ being minimal and such that $\match(\tx \cup \ty \cup \tz) \subseteq \bigcup_{k=1}^l \N_{|q|}(\overline{d_k}, \J)$. In particular, the minimality ensures that the only possible overlapping elements between two different neighbourhoods are elements of $\overline{\domain{*}}$. Along with condition~$1$ from Theorem~\ref{morphismneighbourhoods}, this ensures that the following mapping is well defined:
$$
\begin{array}{rcl}
\match' : \tx \cup \ty \cup \tz & \rightarrow & \I' \\
v & \mapsto & \rho_{d_k} (\match(v)) \qquad \text{ if } \match(v) \in \N_{|q|}(\overline{d_k}, \J)
\end{array}
$$

Furthermore, as the mappings $\rho_{d_k}$ provided by Theorem~\ref{morphismneighbourhoods} are homomorphisms, it follows that $\match'$ is a match in $\I'$. Hence, by Theorem~\ref{interleaving}, we have $\match'(\tz) \subseteq \domain{*}$. However, condition~$2$ from Theorem~\ref{morphismneighbourhoods} ensures $(\rho_{d_k})^{-1}(\domain{*}) = \overline{\domain{*}}$, hence $\match(\tz) \subseteq \overline{\domain{*}}$. Therefore, for each $z \in \tz$, we have $\match(z) = \{ e_z \}$ with $e_z \in \domain{*}$. From condition~$1$, it follows that $\match'(z) = e_z$. In particular, the mapping $\match_{|\tz} \mapsto \match'_{|\tz}$ is injective, so there are at most as many counting matches in $\J$ than in $\I'$. Hence, if $\I'$ is a countermodel, then $\J$ also is.

Recalling Theorem~\ref{interleaving}, we obtain that if $\I$ is a countermodel, then $\J$ also is.
\end{proof}

To conclude the proof of Theorem~\ref{upperboundgeneral}, notice that an equivalence class $\overline{d}$ is fully characterized by :
\begin{itemize}
\item $|d| \mod 2 |q| + 3$, that is one equivalent class among $2 |q| + 3$ possible classes,
\item $w_{|q|+1, d}^d$, that is a word over an alphabet with at most $|\tbox|$ symbols and a length at most $|q| + 1$,
\item $\chi_{|q|+1, d}$, that is a function from words over an alphabet with at most $|\tbox|$ symbols and length at most $2(|q|+1)$, to a set with size at most $|\domain{*}| + 1$.
\end{itemize}
Therefore, the amount of possibly different equivalence classes, that is $|\domain{\J}|$, is at most:
$$(2|q| + 3) \times |\tbox|^{|q|+2} \times (|\domain{*}| + 1)^{|\tbox|^{2|q|+3}}.$$
Since $|\domain{*}| \leq |\individuals| + n_0 |q|$ (recall $n_0$ is the amount of c-matches in $\I$, and that we can assume $n_0 \leq (|\individuals| + |\tbox|)^{|q|}$), we have the claimed bounds for the size of $\J$, which proves Theorem~\ref{resultsgeneral}.




\vskip 5pt
Coming back to the proof of Theorem~\ref{morphismneighbourhoods}, we start by building the mappings $\rho_d$. To do so, we need to transform $k$-chains in the reduced interleaving into $k$-chains in the interleaving.

\newcommand{\core}[1]{{\widetilde{#1}}}
\begin{definition}(primary role, core of a $k$-chain)
Given a couple $(d_1, d_2)$ of elements in $\domain{\I'} \setminus \domain{*}$ (resp $(\overline{d_1}, \overline{d_2})$ in $\domain{\J} \setminus \overline{\domain{*}}$), if there exists a positive role $\rolestyle{R}$ connecting these two elements, then we call the \emph{primary role of the edge $(d_1, d_2)$} the role $\rolestyle{S} \rnames$ such that either $d_2 = d_1 \rolestyle{S}$ or $d_1 = d_2 \rolestyle{S^-}$ (resp $w_{|q|+1, d_2}^{d_2} = w_{|q|, d_1}^{d_1} \rolestyle{S}$ or $w_{|q|+1, d_1}^{d_1} = w_{|q|, d_2}^{d_2} \rolestyle{S^-}$). Notice that $\tbox \models \rolestyle{S} \incl \rolestyle{R}$.

The action of a role $\rolestyle{R}$ on a word $w$, denoted $\rolestyle{R} \diamond w$, is either $w'$ if $w = w' \rolestyle{R}^-$, or $w \rolestyle{R}$ otherwise.

Given a $k$-chain $C = (d_0, \dots d_k)$ in $\I'$ (resp in $\J$), we consider its \emph{core}, denoted $\core{C}$, being the $k$-sequence of role names such that $\core{C}_i$ is the primary role of $(d_{i-1}, d_i)$.

Given the core $\core{C}$ of a $k$-chain $C$, we define its action on a word $w$, denoted $\core{C} \diamond w$, by $\core{C}_k \diamond \dots \diamond \core{C}_1 \diamond w$.
\end{definition}

\begin{remark}
For any couple $(d_1, d_2)$ of elements in $\domain{\I'} \setminus \domain{*}$, the role $\rolestyle{S}$ is the primary role of the edge $(d_1, d_2)$ (in $\I'$) iff $\rolestyle{S}$ is the primary role of the edge $(\overline{d_1}, \overline{d_2})$ (in $\J$).
\end{remark}

\begin{lemma}
For any $k$-chain in $\J$ from $\overline{d}$ to $\overline{e}$, we have $w_{|q|+1 - k, e}^e = w_{|q| +1 - k, \core{C} \diamond d }^{\core{C} \diamond d}$.
\end{lemma}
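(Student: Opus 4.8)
The plan is to prove the identity by induction on the length $k$ of the chain $C$, tracking how the ``self‑coordinate'' words $w^{x}_{m,x}$ transform along a single edge. Before the induction I would record two structural facts about these coordinate words. First, a \emph{radius‑stability} property: for $x \in \domain{\I'}\setminus\domain{*}$, writing $x$ as a word $\ind{a}v$, the word $w^{x}_{m,x}$ is exactly the suffix of $v$ of length $\min(m,h_x)$, where $h_x$ is the distance from $x$ up to its nearest ancestor lying in $\domain{*}$. Hence the words $w^{x}_{m,x}$ are nested suffixes as $m$ varies, and any $w^{x}_{m',x}$ with $m'\le m$ is recovered from $w^{x}_{m,x}$ as a suffix whose length depends only on $m'$ and $|w^{x}_{m,x}|$. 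Second, a \emph{per‑edge transfer law}: whenever $x,y\notin\domain{*}$ are tree‑adjacent with primary role $S$ (so $y=S\diamond x$ as words), the coordinate relation holds at \emph{every} radius $m\ge 0$, namely $w^{y}_{m+1,y}=w^{x}_{m,x}\,S$ in the \emph{down} case ($y=xS$, where $h_y=h_x+1$) and $w^{x}_{m+1,x}=w^{y}_{m,y}\,S^{-}$ in the \emph{up} case ($x=yS^{-}$, where $h_x=h_y+1$); both follow from the suffix description once the lengths are checked to agree.

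Next I would bridge the quotient $\J$ to the tree $\I'$. The primary role in $\J$ is defined by the very same coordinate equations, but pinned at the radii $(|q|+1,|q|)$. Using that canonical‑model words contain no two consecutive inverse letters, I would show the two disjuncts of that definition are \emph{mutually exclusive}, and that which one holds is detected by the last letter of the representative's coordinate word: the \emph{up} disjunct holds iff $w^{d_{k-1}}_{|q|+1,d_{k-1}}$ ends in $S^{-}$, and the \emph{down} disjunct holds otherwise. Combined with radius‑stability, the active disjunct then propagates downward from radius $|q|+1$ to every radius $\le|q|$, so the transfer law is available at the radius I need \emph{for the representatives}, even though they need not themselves be tree‑adjacent.

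With these in hand the induction is routine. The base case $k=0$ is exactly the definition of $\sim_{|q|+1}$ restricted to $\domain{\I'}\setminus\domain{*}$, since $\core{C}\diamond d=d$. For the step, write $C=(\overline{d_0},\dots,\overline{d_k})$ with $\overline{d_0}=\overline{d}$, $\overline{d_k}=\overline{e}$, let $C^{-}$ be its length‑$(k-1)$ prefix and $S=\core{C}_k$ the last primary role, so $\core{C}\diamond d=S\diamond(\core{C^{-}}\diamond d)$. Setting $u:=\core{C^{-}}\diamond d$, the induction hypothesis gives $w^{d_{k-1}}_{|q|+2-k,d_{k-1}}=w^{u}_{|q|+2-k,u}=:P$ (and maintains $u\notin\domain{*}$, as all interior chain elements avoid $\overline{\domain{*}}$). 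Because $u$ is obtained from $d$ by genuine word moves, $u$ is tree‑adjacent to $S\diamond u$, so the down/up case on the $u$‑side is decided by the last letter of $u$, i.e.\ of $P$; by the detection criterion the \emph{same} case holds on the $d_{k-1}/d_k$‑side. Applying the transfer law at the radius dictated by the case on both sides, and using that the two shortened coordinate words coincide (both are the same suffix of the common $P$, by radius‑stability), yields $w^{e}_{|q|+1-k,e}=w^{S\diamond u}_{|q|+1-k,S\diamond u}$, as required; the degenerate radius $|q|+1-k=0$ is trivial since both sides are empty.

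The main obstacle is precisely this bridge: establishing that the two disjuncts of the $\J$‑primary‑role definition are mutually exclusive and last‑letter‑detectable (this is where the no‑consecutive‑inverse structure of canonical words is essential), and that a disjunct fixed at radii $(|q|+1,|q|)$ really descends to the smaller radii $(|q|+1-k,|q|-k)$ for arbitrary representatives. Everything else is bookkeeping with suffixes, provided one handles carefully the boundary cases where a coordinate word is short — its nearest $\domain{*}$‑ancestor lying within the radius — or empty.
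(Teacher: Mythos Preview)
Your proposal is correct and follows essentially the same approach as the paper: induction on the length of the chain, with a case split on whether the last primary role $S$ acts as an ``up'' step (the coordinate word ends in $S^{-}$) or a ``down'' step. The paper's proof is extremely terse—it just writes the two computations—whereas you make explicit the structural facts (radius-stability of $w^{x}_{m,x}$ as nested suffixes, the per-edge transfer law, mutual exclusivity and last-letter detection of the two disjuncts in the $\J$-primary-role definition, and the descent of the active disjunct to smaller radii) that the paper relies on implicitly; these are precisely what is needed to justify the step $S\diamond w^{d_k}_{|q|+1-k,d_k}=w^{e}_{|q|-k,e}$ that the paper asserts without comment.
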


\begin{proof}
We proceed by a straightforward induction on the length of $C$.
\begin{itemize}
\item If $C$ is a $0$-chain, that is $\overline{e} = \overline{d}$, then $\core{C}$ is empty and thus $\core{C} \diamond d = d$.
\item Otherwise, $C$ is a $(k+1)$-chain $(\overline{d}, \overline{d_1}, \dots \overline{d_k}, \overline{e})$, then consider the primary role $\rolestyle{S}$ of the edge $(\overline{d_k}, \overline{e})$. If $w_{|q|+1 - k, (\core{C}_1 \dots \core{C_k}) \diamond d}^{(\core{C}_1 \dots C_k) \diamond d} = w' \rolestyle{S^-}$, then we have :
\begin{align*}
w_{|q|+1 - (k+1), \core{C} \diamond d}^{\core{C} \diamond d} & = \rolestyle{S} \diamond w_{(q+1-k, (\core{C}_1, \dots \core{C}_k) \diamond d_k}^{(\core{C}_1, \dots \core{C}_k) \diamond d_k} \\
& = \rolestyle{S} \diamond w_{|q|+1 - k, d_k}^{d_k} \\
& = w_{|q| +1 - (k+1), e}^e.
\end{align*}
Otherwise, we have :
\begin{align*}
w_{|q|+1 - (k+1), \core{C} \diamond d}^{\core{C} \diamond d} & = \rolestyle{S} \diamond w_{(q+1-(k+2), (\core{C}_1, \dots \core{C}_k) \diamond d_k}^{(\core{C}_1, \dots \core{C}_k) \diamond d_k} \\
& = \rolestyle{S} \diamond w_{|q|+1 - (k+2), d_k}^{d_k} \\
& = w_{|q| +1 - (k+1), e}^e.
\end{align*}
\end{itemize}
\end{proof}

\begin{lemma}
\label{action}
For any $k \leq |q|$, any $d, e \in \domain{\I'} \setminus \domain{*}$, and any two $k$-chains $C$ and $C'$ from $\overline{d}$ to $\overline{e}$ in $\J$, we have $\core{C} \diamond d = \core{C'} \diamond d$.
\end{lemma}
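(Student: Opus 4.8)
The plan is to read off the endpoint $\core{C}\diamond d$ of a chain directly from the tree structure of $\canmod$, and then to anchor both endpoints $\core{C}\diamond d$ and $\core{C'}\diamond d$ inside a single neighbourhood of $d$ on which elements are uniquely determined by their relative position. First I would observe that, since every node of $\domain{\canmod}\setminus\individuals(\abox)$ is a reduced word (consecutive roles satisfy $R_i^-\neq R_{i+1}$) and all individual names lie in $\domain{*}$, applying $R\diamond(\cdot)$ never strips an individual name and simply performs one tree move: it deletes the last letter when that letter is $R^-$, and appends $R$ otherwise. Consequently $\core{C}\diamond d$ is exactly the free reduction of the word obtained by concatenating $d$ with $\core{C}_1\cdots\core{C}_k$, and the finite sequence $d=w_0,w_1,\dots,w_k=\core{C}\diamond d$ given by $w_i=\core{C}_i\diamond w_{i-1}$ is a walk of length $k$ in the tree whose intermediate nodes avoid $\domain{*}$, because $C$ is a chain. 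In particular both $u:=\core{C}\diamond d$ and $u':=\core{C'}\diamond d$ lie in $\N_{k}(d,\I',\domain{*})$, and since $k\leq|q|<|q|+1$ they both lie in $\N_{|q|+1}(d,\I',\domain{*})$.

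Next I would exploit the defining property of $r_{|q|+1, d}$: every element of $\N_{|q|+1}(d,\I',\domain{*})$ is written uniquely as $f'(r_{|q|+1, d}\,w)$ for a word $w$, so the map $e'\mapsto w_{|q|+1, d}^{e'}$ is injective on this neighbourhood. It therefore suffices to prove $w_{|q|+1, d}^{u}=w_{|q|+1, d}^{u'}$. Writing $r=r_{|q|+1, d}$ and $p=w_{|q|+1, d}^{d}$ for the descent from $r$ down to $d$, the walk computation above shows that $w_{|q|+1, d}^{u}$ is the free reduction of $p\,\core{C}$ and likewise $w_{|q|+1, d}^{u'}$ is the free reduction of $p\,\core{C'}$ (using that the walk, of length at most $|q|$, never rises above $r$, so no cancellation reaches into $r$). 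By left cancellation of the common prefix $p$, the desired equality is then equivalent to the cores $\core{C}$ and $\core{C'}$ denoting the same freely reduced word, a statement that no longer refers to $d$.

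This last equality is where the preceding lemma enters, and I expect it to be the main obstacle. Applying that lemma to both chains yields $w_{|q|+1-k, u}^{u}=w_{|q|+1-k, e}^{e}=w_{|q|+1-k, u'}^{u'}$, so $u$ and $u'$ have the same shape relative to the roots of their own $(|q|+1-k)$-neighbourhoods. The delicate point is to convert this equality of own-signatures into the equality of the $d$-relative signatures needed above: one must check that $r_{|q|+1-k, u}$ and $r_{|q|+1-k, u'}$ occupy the same position inside $\N_{|q|+1}(d,\I',\domain{*})$. This is precisely where the radius bookkeeping $k+(|q|+1-k)=|q|+1$ is used, since it guarantees that the own-neighbourhoods of $u$ and of $u'$ both sit inside the single neighbourhood of $d$ on which signatures are unique, and that the chain's avoidance of $\domain{*}$ keeps the walk below $r$. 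Once the two roots are aligned in this way, the common suffix $w_{|q|+1-k, e}^{e}$ forces $w_{|q|+1, d}^{u}=w_{|q|+1, d}^{u'}$, whence $u=u'$ by injectivity, which is the claim.
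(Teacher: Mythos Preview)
Your overall plan is reasonable and shares its skeleton with the paper's argument: both work inside the $(|q|{+}1)$-neighbourhood of $d$, exploit the tree structure of $\canmod$, and ultimately reduce the claim to an equality of words obtained by acting with the two cores on the suffix $p=w_{|q|+1,d}^{d}$. Where you diverge is in how you establish that equality, and that is where your argument currently has a gap.

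The paper does \emph{not} pass through the stronger statement ``$\core{C}$ and $\core{C'}$ have the same free reduction''. It argues directly that $\core{C}\diamond p=\core{C'}\diamond p$: since $p$ is an invariant of the class $\overline{d}$ (it is part of the data defining $\sim_{|q|+1}$), the action of the successive primary roles on $p$ can be read off entirely inside $\J$, and therefore the result depends only on the endpoint class $\overline{e}$. Once $\core{C}\diamond p=\core{C'}\diamond p$ is in hand, the paper does the same case split you do (whether $|p|=|q|{+}1$ or $|p|<|q|{+}1$, the latter forcing $f'(r_{|q|+1,d})\in\domain{*}$ so the chain cannot cancel into $r_{|q|+1,d}$) to lift the equality from $p$ to $d$.

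You instead reduce to showing that the free reductions of $\core{C}$ and $\core{C'}$ coincide, and then try to extract this from the preceding lemma, which only gives $w_{|q|+1-k,\,u}^{u}=w_{|q|+1-k,\,e}^{e}=w_{|q|+1-k,\,u'}^{u'}$. This is weaker than what you need: it fixes the last $|q|{+}1{-}k$ symbols of $u$ and $u'$ relative to their \emph{own} neighbourhood roots, not relative to $r_{|q|+1,d}$. Your proposed fix is to show that $r_{|q|+1-k,\,u}$ and $r_{|q|+1-k,\,u'}$ occupy the same position in $\N_{|q|+1}(d,\I',\domain{*})$, but you only assert that ``one must check'' this and that the radius arithmetic $k+(|q|{+}1{-}k)=|q|{+}1$ ``guarantees'' it. It does not: two distinct elements of $\N_{|q|+1}(d,\I',\domain{*})$ can perfectly well share the same own-signature $w_{|q|+1-k,\cdot}^{\cdot}$, and nothing you have written rules this out for $u$ and $u'$. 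This is the missing step, and it is precisely the step the paper's more direct route avoids.

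Concretely: drop the detour through free reductions of the cores, and instead argue (by an induction along the chain, parallel to the preceding lemma) that $(\core{C}_1\cdots\core{C}_i)\diamond p$ is an invariant of the class $\overline{d_i}$, so that after $k$ steps the result depends only on $\overline{e}$. Then your case split on $|p|$ finishes the proof exactly as the paper does.
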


\begin{proof}
Since $C$ and $C'$ have the same endpoints $\overline{d}$ and $\overline{e}$, we have $\core{C} \diamond w_{|q|+1, d}^d = \core{C'} \diamond w_{|q|+1, d}^d$.

If $|w_{|q|+1, d}^d| = |q| + 1$, then $\core{C} \diamond d = \core{C} \diamond (r_{|q+1|, d} w_{|q|+1, d}^d) = r_{|q+1|, d} (\core{C} \diamond w_{|q|+1, d}^d) = r_{|q+1|, d} (\core{C'} \diamond w_{|q|+1, d}^d) = \core{C'} \diamond d$.

Otherwise, $|w_{|q|+1, d}^d| < |q| + 1$, we have $r_{|q|+1, d} \in \domain{*}$. Therefore the action of $\core{C}$ (resp $\core{C'}$) cannot empty the initial word $w_{|q|+1, d}^d$, since it would lead, by Lemma~\ref{action}, to an element along $C$ (resp $C'$) being in $\overline{\domain{*}}$. Thus, we still have $\core{C} \diamond d = \core{C'} \diamond d$.
\end{proof}

Intuitively, the latter proof means that $\core{C}$ equals $\core{C'}$, up to deleting "dummy" steps in both chains, that are subsequences with shape $\rolestyle{S}_1 \dots \rolestyle{S}_p \rolestyle{S}_p^- \dots \rolestyle{S}_1^-$. But since the action of such dummy steps on any word is the identity, then the action of $C$ and $C'$ on $d$ are equal. Notice that in general, these dummy steps are necessary to go from $\overline{d}$ to $\overline{e}$, hence we cannot get rid of them by asking for some sort of minimality about chains.

This allows us to define an image for elements in a neighbourhood in the reduced interleaving regardless of the $k$-chain used to reach this element.
\begin{lemma}
The following mapping is well defined :
\begin{align*}
\rho_d : \N_{|q|}(\overline{d}, \J) & \rightarrow \N_{|q|}(d, \I') \\
\overline{e} & \mapsto \left\{
\begin{array}{lll}
e & & \text{if } \overline{e} \in \overline{\domain{*}} \\
\core{C} \diamond d \text{ with } C \text{ any } k \text{-chain from } \overline{d} \text{ to } \overline{e}, \text{ with } k \leq |q| & & \text{otherwise}
\end{array} \right.
\end{align*}
Furthermore, $\rho_d$ satisfies that for any $\overline{e} \in \domain{\J}$, we have $\rho_d(\overline{e}) \sim_1 e$. In particular, it satisfies conditions $1$ and $2$ from Theorem~\ref{morphismneighbourhoods}.
\end{lemma}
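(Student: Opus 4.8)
The plan is to establish, in order: the regularity property $\rho_d(\overline{e}) \sim_1 e$ together with the membership $\rho_d(\overline{e}) \in \N_{|q|}(d, \I')$ (the crux), the independence of the second branch from the chosen chain, and finally conditions 1 and 2, which then drop out. For $\overline{e} \in \overline{\domain{*}}$ the regularity property is trivial since $\rho_d(\overline{e}) = e$, and $e$ is an unambiguous representative because elements of $\domain{*}$ form singleton $\sim_{|q|+1}$-classes; the substance lies in the case $\overline{e} \notin \overline{\domain{*}}$, although the chain-lifting that certifies membership treats both cases at once.

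For the regularity property I would fix a $k$-chain $C = (\overline{d_0}, \dots, \overline{d_k})$ from $\overline{d}$ to $\overline{e}$ with $k \leq |q|$, pick representatives $d_i \in \domain{\I'}$ of the classes $\overline{d_i}$, and induct on $k$, proving the stronger claim that the partial lift $g_i := (\core{C}_1 \cdots \core{C}_i) \diamond d$ satisfies $g_i \sim_{|q|+1-i} d_i$, with trivial base case $g_0 = d = d_0$. Taking $i = k$ and noting $1 \leq |q|+1-k$ yields $g \sim_{|q|+1-k} e$, whence $\rho_d(\overline{e}) = g \sim_1 e$ by Remark~\ref{simsrelations}. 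Across the inductive step the three components of $\sim$ are maintained thus: equality of the $w$-words is exactly the word-tracking lemma; the residue $|\cdot| \bmod 2|q|+3$ is carried along using that it is a $\sim_{|q|+1}$-invariant and that the append/remove behaviour of $\core{C}_{i+1} \diamond$ on the word reached from $d$ agrees with the parent/child direction recorded by the primary role; and the $\chi$-landscape is carried because $g_i \sim_{|q|+1-i} d_i$ already equates their radius-$(|q|+1-i)$ views of $\domain{*}$, of which the shrunken view around the successor is a shared part. Carried to the final vertex, the induction exhibits the lift of $C$ as a genuine $k$-chain in $\I'$: its interior vertices $g_i$ with $i < k$ avoid $\domain{*}$ (being $\sim$-equivalent to the non-$\domain{*}$ classes $\overline{d_i}$), while its final vertex is $\rho_d(\overline{e})$, equal to $e$ when $\overline{e} \in \overline{\domain{*}}$ and to $\core{C} \diamond d$ otherwise. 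As consecutive vertices are tree-neighbours in $\canmod$ and hence role-connected in $\I'$, this gives $\rho_d(\overline{e}) \in \N_{|q|}(d, \I')$ in both cases.

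It remains to check that the second branch is independent of $C$. For two chains of equal length this is Lemma~\ref{action}; for chains of differing length it follows from the observation after Lemma~\ref{action} that $\diamond$ is unaffected by the backtracking segments $\rolestyle{S}_1 \cdots \rolestyle{S}_p \rolestyle{S}_p^- \cdots \rolestyle{S}_1^-$ in which the two cores may differ. Conditions 1 and 2 are then immediate: condition 1 is the first branch of the definition, and for condition 2 the inclusion $\overline{\domain{*}} \subseteq \rho_d^{-1}(\domain{*})$ is again the first branch, while if $\rho_d(\overline{e}) \in \domain{*}$ then $\rho_d(\overline{e}) \sim_1 e$ forces $e = \rho_d(\overline{e})$, a $\domain{*}$-element being $\sim_1$-equivalent only to itself, so $\overline{e} \in \overline{\domain{*}}$.

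I expect the inductive step of the regularity claim to be the main obstacle. The word-tracking lemma supplies only the $w$-component, so the delicate part is showing that a single role-step transports both the $\chi$-landscape and the length residue faithfully while lowering the radius by exactly one; in particular one must verify that the direction in which $\core{C}_{i+1} \diamond$ acts on the actual word reached from $d$ is consistent with the direction seen at the class level, which is precisely where the modulus $2|q|+3$ and the finite radius $|q|+1$ earn their keep.
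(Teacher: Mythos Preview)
Your proposal is correct and follows essentially the same route as the paper: induction along a chain with the strengthened invariant $g_i \sim_{|q|+1-i} d_i$, reliance on the word-tracking lemma for the $w$-component, and Lemma~\ref{action} for chain-independence. The paper organises the inductive step as a case split on the primary-role direction (whether $\epsilon' = \epsilon\rolestyle{S}^-$ or $\epsilon = \epsilon'\rolestyle{S}$) to verify that the action lands in $\domain{\I'}$, whereas you phrase it as tracking the three components of $\sim$ separately; these are the same verification in different clothing, and your closing paragraph correctly isolates the one genuinely delicate point (direction consistency), which the paper resolves precisely via that case split.
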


\begin{proof}
It only remains to prove that the action of a $k$-chain always provide an actual element in $\domain{\I'}$. We proceed by induction on $k$, building intermediate mappings $\rho_{d, k} : \N_k(\overline{d}, \J) \rightarrow \N_k(d, \I')$. We also prove that, at each step, we have for any $\overline{e} \in \N_k(\overline{d}, \J), \rho_{d, k}(\overline{e}) \sim_{|q| + 1 - k} e$.

\paragraph{Base case: $ k=0$.}
We have $\N_0(\overline{d}, \J) = \{ \overline{d} \}$. If $\overline{d} \in \overline{\domain{*}}$, we set $\rho_{d, 0} := d$ which is well-defined. Otherwise, consider the $0$-chain $(\overline{d})$, we have $\rho_{d, 0}(\overline{d}) := \varepsilon \diamond d = d$, which is well defined. In both cases, we obviously have $\rho_{d, 0}(\overline{d}) \sim_{|q| + 1} d$.

\paragraph{Induction step: $k \Rightarrow k+1$.}
Assume the mapping $\rho_{d, k} : \N_k(\overline{d},\J) \rightarrow \N_k(d, \I')$ is well defined for some $k < |q|$. We explain how to \emph{extend} it to a mapping $\rho_{d, k+1} : \N_{k+1}(\overline{d},\J) \rightarrow \N_{k+1}(d,\I')$. Consider an element $\overline{e} \in \N_{k+1}(\overline{d},\J) \setminus \N_k(\overline{d},\J)$.

If $e \in \domain{*}$, then we set $\rho_{d, k+1}(\overline{e}) := e$, which is well-defined and satisfies $\rho_{k+1}(\overline{e}) \sim_{|q| + 1 - (k+1)} e$.

Otherwise, $e \notin \domain{*}$, we know that there is a $k+1$-chain $(\overline{d_0}, \dots \overline{d_{k+1}})$ linking $\overline{d}$ and $\overline{e}$.
In particular, we have a role $\rolestyle{R}$ such that $(\overline{d_k}, \overline{e}) \in \rolestyle{R}^\J$. From the definition of $\rolestyle{R}^\J$, we can infer the existence of $\epsilon', \epsilon \in \domain{\canmod}$ such that:
$$\epsilon' \sim_{|q| + 1} d_k \qquad \epsilon \sim_{|q| + 1} e,$$
and either $\epsilon' = \epsilon \rolestyle{S}^-$ or  $ \epsilon = \epsilon' \rolestyle{S}$, where $\rolestyle{S}$ denotes the primary role of the edge $(\overline{d_k}, \overline{e})$. We consider these two cases in turn: 
\begin{itemize}
\item If $\epsilon' = \epsilon \rolestyle{S}^-$, we have $\rho_{d, k}(\overline{\epsilon'}) \sim_{|q| + 1 - k} d_k \sim_{|q| + 1 - k} \epsilon'$ due to $\overline{d_k} \in \N_k(\overline{d},\J)$ and the assumption for $k$, which implies that $\rho_k(\overline{d_k})$ ends with $\rolestyle{S}^-$. The action of $\rolestyle{S}$ on $\rho_{d, k}(d_k)$ hence provides the well-defined word obtained from $\rho_k(\overline{d_k})$ by removing its final symbol $\rolestyle{S}^-$. Therefore, $\rho_{k+1}(\overline{e})$ is well defined. Since the equivalence class of $\rho_{d, k}(\overline{d_k})$ for $\sim_{|q| + 1 -k}$ fully determines the equivalence class of its immediate neighbour $\rho_{d, k+1}(\overline{e})$ for $\sim_{|q| + 1 -(k+1)}$, and since we know that $\rho_{d, k}(\overline{d_k}) \sim_{|q| + 1 - k} d_k$ by the induction hypothesis, we obtain $\rho_{d, k+1}(\overline{e}) \sim_{|q| + 1 - (k+1)} e$.

\item Otherwise, if $\epsilon = \epsilon' \rolestyle{S}$, the action of $\rolestyle{S}$ on $\rho_{d, k}(d_k)$ provides $\rho_{d, k}(\overline{d_k}) \rolestyle{S}$, which is well defined since $\epsilon'$ and $\rho_{d, k}(d_k)$ must end by the same letter, as the induction hypothesis ensures $\epsilon' \sim_1 \rho_{d, k}(d_k)$ (recall $k < |q|$). Again, since the equivalence class of $\rho_{d, k}(\overline{d_k})$ for $\sim_{|q| + 1 -k}$ fully determines the equivalence class of its immediate neighbour $\rho_{d, k+1}(\overline{e})$ for $\sim_{|q| + 1 -(k+1)}$, and since we know that $\rho_{d, k}(\overline{d_k}) \sim_{|q| + 1 - k} d_k$ by the induction hypothesis, we obtain $\rho_{d, k+1}(\overline{e}) \sim_{|q| + 1 - (k+1)} e$.
\end{itemize}

The mapping $\rho_d$ is obtained as $\rho_{d, |q|}$. Notice the two conditions are satisfied.

\end{proof}

We now need to prove $\rho_d$ is a homomorphism. 
We start by proving the following lemma, which states the links of an element $e_1$ to elements in $\domain{*}$ fully determines such links for any other element $e_2$ that is $1$-equivalent to $e_1$.

\begin{lemma}
\label{caschi}
If $(\overline{e_1}, \overline{d}) \in \rolestyle{R}^\J$ for some $d \in \domain{*}$, and if $e_1 \sim_1 e_2$, then $(e_2, d) \in \rolestyle{R}^{\I'}$.
\end{lemma}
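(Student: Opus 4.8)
The plan is to unfold the definition of $\rolestyle{R}^{\J}=\proj\circ\rolestyle{R}^{\I'}$, push the whole statement down to a single edge of the canonical model, and then exploit the fact that $\chi_{1,\cdot}$ was designed precisely to record the edges into $\domain{*}$ that occur inside a $1$-neighbourhood. First I would use $(\overline{e_1},\overline{d})\in\rolestyle{R}^{\J}$ to extract $d_1,d_2\in\domain{\I'}$ with $\overline{d_1}=\overline{e_1}$, $\overline{d_2}=\overline{d}$ and $(d_1,d_2)\in\rolestyle{R}^{\I'}$. Since $d\in\domain{*}$ and $\sim_{|q|+1}$ is the identity on $\domain{*}$, this forces $d_2=d$, so $(d_1,d)\in\rolestyle{R}^{\I'}$ with $d_1\sim_{|q|+1}e_1$. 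By Remark~\ref{simsrelations} we get $d_1\sim_1 e_1$, and combined with $e_1\sim_1 e_2$ this yields $d_1\sim_1 e_2$. If $d_1\in\domain{*}$ then $d_1=e_2$ and there is nothing to prove, so I assume $d_1\notin\domain{*}$, which also forces $e_2\notin\domain{*}$.

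Next I would locate the edge inside $\canmod$. As $d_1\notin\domain{*}$ we have $f'(d_1)=d_1\in\domain{\canmod}$, i.e.\ $d_1$ is a word; the edge $(d_1,d)\in\rolestyle{R}^{\I'}$ comes from some $(c_1,c_2)\in\rolestyle{R}^{\canmod}$ with $f'(c_1)=d_1$ and $f'(c_2)=d$, and $f'(c_1)=d_1\notin\domain{*}$ forces $c_1=d_1$. Hence $(d_1,c_2)\in\rolestyle{R}^{\canmod}$ with $f'(c_2)=d\in\domain{*}$. Since $d_1$ is not an individual, $c_2$ is either the parent or a child of $d_1$ in the tree $\canmod$; in particular $c_2$ lies in the $1$-neighbourhood of $d_1$, so I can write $c_2=r_{1,d_1}v$ for a word $v\in\Omega_1$ (of length at most $2$, using that $r_{1,d_1}$ is exactly the parent of $d_1$ and $w^{d_1}_{1,d_1}$ is the single final role of $d_1$). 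By definition $\chi_{1,d_1}(v)=f'(r_{1,d_1}v)=d$. Recording the primary role $\rolestyle{S}$ of this edge, we have $\tbox\models\rolestyle{S}\incl\rolestyle{R}$ in the child case (where $v=w^{d_1}_{1,d_1}\rolestyle{S}$) and $\tbox\models\rolestyle{S}\incl\rolestyle{R}^-$ in the parent case (where $v=\varepsilon$).

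Finally I would transfer the edge along $\sim_1$. From $d_1\sim_1 e_2$ (both outside $\domain{*}$) I obtain $w^{d_1}_{1,d_1}=w^{e_2}_{1,e_2}$ and $\chi_{1,d_1}=\chi_{1,e_2}$. The first equality places $e_2$ at the same relative word below its neighbourhood root $r_{1,e_2}$, so the same one-step relation produces a canonical neighbour $c_2'=r_{1,e_2}v$ of $e_2$ via the same primary role $\rolestyle{S}$; when a child is involved, its existence follows because $d_1$ and $e_2$ share the same final role and hence trigger the same existential axioms. The edge direction and label are preserved because the TBox entailment $\tbox\models\rolestyle{S}\incl\rolestyle{R}$ (resp.\ $\rolestyle{S}\incl\rolestyle{R}^-$) is unchanged, giving $(e_2,c_2')\in\rolestyle{R}^{\canmod}$. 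Then $\chi_{1,e_2}(v)=\chi_{1,d_1}(v)=d$ gives $f'(c_2')=d$, and applying $f'$ to this canonical edge yields $(f'(e_2),f'(c_2'))=(e_2,d)\in\rolestyle{R}^{\I'}$, as required.

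The main obstacle will be the word-level bookkeeping in the parent/child case split: checking that $r_{1,d_1}$ is precisely the parent of $d_1$, that $v\in\Omega_1$ so that $\chi_{1,d_1}$ is actually defined on it, and that appending or deleting $\rolestyle{S}$ relative to $r_{1,e_2}$ lands on an element of $\domain{\canmod}$ whose $f'$-image is the \emph{same} element $d$. All of this is exactly what the triple $(w^{d_1}_{1,d_1},\chi_{1,d_1},|d_1|\bmod 2|q|+3)$ defining $\sim_1$ is engineered to guarantee, so once the correspondence between relative words is fixed, the transfer of the edge is immediate.
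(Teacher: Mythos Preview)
Your proposal is correct and follows essentially the same argument as the paper's proof: both unfold the edge back to a pair $(\epsilon_1,\delta)$ in $\canmod$ with $f'(\delta)=d$, split on whether $\delta$ is the parent or a child of $\epsilon_1$, read off the corresponding value of $\chi_{1,\cdot}$ at either $\varepsilon$ or $w^{\cdot}_{1,\cdot}\rolestyle{S}$, and then transfer along $\sim_1$ using $\chi_{1,d_1}=\chi_{1,e_2}$ and $w^{d_1}_{1,d_1}=w^{e_2}_{1,e_2}$. One small remark: in the child case, the existence of $e_2\rolestyle{S}\in\domain{\canmod}$ already follows from $\chi_{1,e_2}(w^{e_2}_{1,e_2}\rolestyle{S})=d\neq\emptyset$ (this is the first clause in the definition of $\chi$), so you do not need the separate ``same final role, same existential axioms'' argument.
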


\begin{proof}
The definition of $\rolestyle{R}^\J$ provides $\epsilon_1, \delta \in \domain{\canmod}$ such that :
$$ \overline{f'(\epsilon_1)} = \overline{e_1} \qquad \overline{f'(\delta)} = \overline{d} \qquad (\epsilon_1, \delta) \in \rolestyle{R}^{\canmod}.$$
Notice that since $d \in \domain{*}$, we have $\overline{d} = \{ d \}$ and therefore $f'(\delta) = d$.
\begin{itemize}
\item
If $e_1 \in \domain{*}$, then $\overline{e_1}= \{ e_1 \}$. It follows that $f'(\epsilon_1) = e_1$ and $e_2 = e_1$.  We therefore obtain $(e_2, d) = (e_1, d) = (f'(\epsilon_1), f'(d)) \in f'(\rolestyle{R}^{\canmod}) = \rolestyle{R}^{\I'}$.
\item
Otherwise, $e_1 \notin \domain{*}$, which means that $f'(\epsilon_1) \not \in \domain{*}$, and hence  $f'(\epsilon_1) = \epsilon_1$ and
 $f'(\epsilon_1) \in  \Delta^{\canmod} \setminus \ainds(\Amc)$.
 Then, the definition of $\rolestyle{R}^\canmod$ provides a role $\rolestyle{S} \in \rni$ such that $\tbox \models \rolestyle{S} \incl \rolestyle{R}$ and 
either $\epsilon_1 = \delta \rolestyle{S}^- $ or $\delta = \epsilon_1 \rolestyle{S}$. Let us consider these two cases in turn:
\begin{itemize}
\item
If $\epsilon_1 = \delta \rolestyle{S}^-$, then the $1$-root of $f'(\epsilon_1)=\epsilon$ is $f'(\delta)$ and $w_{1, e_1}^{e_1} = \rolestyle{S}^-$. We thus have: $\chi_{1, f'(\epsilon_1)}(\varepsilon) = f'(\delta) = d$ (where $\varepsilon$ denotes the empty word). But since $f'(\epsilon_1) \sim_1 e_1 \sim_1 e_2$, we have $\chi_{1, e_2} = \chi_{1, f'(\epsilon_1)}$ and $w_{1, e_2}^{e_2} = w_{1, e_1}^{e_1}$. Combining the preceding facts, we obtain $(e_2, d) = (r_{1, e_2} w_{1, e_2}^{e_2}, \chi_{1, e_2}(\varepsilon)) = (f'(r_{1, e_2} \rolestyle{S}^-), f'(r_{1, e_2})) \in f'(\rolestyle{R}^{\canmod}) = \rolestyle{R}^{\I'}$.
\item
If $\delta = \epsilon_1 \rolestyle{S}$, then we have $\chi_{1, f'(\epsilon_1)}(w_{1, f'(\epsilon_1)}^{f'(\epsilon_1)}\rolestyle{S}) = f'(\delta) = d$. But since $f'(\epsilon_1) \sim_1 e_1 \sim_1 e_2$, we have $\chi_{1, e_2} = \chi_{1, f'(\epsilon_1)}$ and $w_{1, e_2}^{e_2} = w_{1, f'(\epsilon_1)}^{f'(\epsilon_1)}$. Hence : $(e_2, d) = (r_{1, e_2} w_{1, e_2}^{e_2}, \chi_{1, f'(\epsilon_1)}(w_{1, f'(\epsilon_1)}^{f'(\epsilon_1)}\rolestyle{S})) = (r_{1, e_2} w_{1, e_2}^{e_2}, \chi_{1, e_2}(w_{1, e_2}^{e_2}\rolestyle{S})) = (f'(r_{1, e_2}w_{1, e_2}^{e_2} ), f'(r_{1, e_2} w_{1, e_2}^{e_2} \rolestyle{S})) \in f'(\rolestyle{R}^{\canmod}) = \rolestyle{R}^{\I'}$.
\end{itemize}
\end{itemize}
\end{proof}

\begin{proof}[Proof of Theorem~\ref{morphismneighbourhoods}]
Let $e, e' \in \domain{\I'}$, and let $\rolestyle{R} \in \rnames$ be such that $(\overline{e}, \overline{e'}) \in \rolestyle{R}^\J$.

If $\overline{e} \in \overline{\domain{*}}$, then Lemma~\ref{caschi} applies by setting $(e_1, e_2, d, \rolestyle{R}) := (e', \rho_d(\overline{e'}), e, \rolestyle{R})$ and therefore $(\rho_d(\overline{e}), \rho_d(\overline{e'}) \in \rolestyle{R}^{\I'}$.

Otherwise, if $\overline{e'} \in \overline{\domain{*}}$, then Lemma~\ref{caschi} also applies, by setting $(e_1, e_2, d, \rolestyle{R}) := (e, \rho_d(\overline{e}), e', \rolestyle{R})$ and again $(\rho_d(\overline{e}), \rho_d(\overline{e'}) \in \rolestyle{R}^{\I'}$.

Otherwise, $\overline{e}, \overline{e'} \notin \overline{\domain{*}}$. Notice that both cannot be in $\N_{|q|}(\overline{d}, \J) \setminus \N_{|q|-1}(\overline{d}, \J)$ at the same time. Indeed, if both were, there would be a $2|q|+1$-chain from $\overline{d}$ to $\overline{d}$ (recall $\overline{e}, \overline{e'} \notin \overline{\domain{*}}$). However, the depth modulo $2|q| + 3$ encoded in each equivalent class along this chain only increases or decreases by $1$ at each step (since none of its element belongs to $\overline{\domain{*}}$). Hence, $|d| \mod 2|q| + 3$ would equal itself up to $2|q|+1$ such $1$-steps, which is impossible modulo $2|q|+3$. Therefore, we can assume that $e \in \N_{|q|-1}(\overline{d}, \J)$. We have a $k$-chain $C_{\overline{d} \rightarrow \overline{e}}$ from $\overline{d}$ to $\overline{e}$, with $k < |q|$. Complete it by $\rolestyle{R}$ into a $k+1$ chain $C_{\overline{d} \rightarrow \overline{e'}}$ to reach $\overline{e'}$. Since $k + 1 \leq |q|$, we have by definition $\rho_d(e') = \rolestyle{S} \diamond \rho_d(\overline{e})$, with $\rolestyle{S}$ the primary role of the edge $(e, e')$. In both cases, $\rho_d(\overline{e'})$ ending by $\rolestyle{S}^-$ or not, it ensures that $(\rho_d(e), \rho_d(e') \in \rolestyle{S}^{\I'} \subseteq \rolestyle{R}^{\I'}$.

The preservation of positive concepts follows. Indeed, if we have an element $\overline{e}$ and a concept name $\rolestyle{A}$ such that $\overline{e} \in \rolestyle{A}^{\J}$, then, from the definition of $\rolestyle{A}^{\J}$, either $\overline{e}$ is the interpretation of an individual name $\ind{e}$ and $\rolestyle{B}(\ind{e}) \in \abox$ for some concept $\rolestyle{B}$ such that $\tbox \models \rolestyle{B} \incl \rolestyle{A}$, or there exists another element $\overline{e'}$ connected to $\overline{e}$ by a positive role $\rolestyle{S}$ such that $\tbox \models \exists \rolestyle{S}^- \incl \rolestyle{A}$.

In the first case, we have in particular $\overline{e} \in \overline{\domain{*}}$, thus $\rho_d(\overline{e}) = e = \ind{e}^{\I'} \in \rolestyle{A}^{\I'}$ since $\I'$ is a model.

In the second case, since $\rho_d$ preserves positive roles, we have $(\rho_d(e'), \rho_d(e)) \in \rolestyle{S}^{\I'}$, and therefore $\rho_d(e) \in \rolestyle{A}^{\I'}$ since $\I'$ is a model.

\end{proof}

\end{toappendix}

\subsection{Case of Finite-Depth TBoxes}
We give an improved upper bound for finite-depth TBoxes (which arguably cover many practical ontologies \cite{DBLP:journals/jair/GrauHKKMMW13}), pinpointing the exact combined complexity. 

\begin{theorem}
\label{fcm}
For finite-depth \dlh\ TBoxes, CCQ answering is $\coNEXP$-complete w.r.t.\ combined complexity.
\end{theorem}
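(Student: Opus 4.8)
The plan is to establish matching $\coNEXP$ lower and upper bounds. The lower bound should come for free: the $\coNEXP$-hardness of \dlh\ CCQ answering from \cite{kostylevreutter:count}, which transfers to our setting (cf.\ Table~\ref{resultsgeneral}), is witnessed by a reduction whose TBox has \emph{finite} depth---the concept inclusions there generate only short anonymous chains, while the additional expressive power is supplied by role inclusions, which do not increase depth. I would verify this by inspecting their construction; granting it, only the matching upper bound remains, namely that the complement problem---does there exist a \emph{countermodel}, i.e.\ a model of $\Kmc$ with fewer than $m$ c-matches of $q(\ta)$?---lies in nondeterministic exponential time. As in the general case this reduces to bounding the size of a countermodel, and the whole improvement over Theorem~\ref{upperboundgeneral} is that for finite-depth TBoxes we can skip the (double-exponential) merging step and take the interleaving \emph{itself} as the small countermodel.

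The first key step is to bound the canonical model. For a finite-depth TBox the role-generation graph---with an edge from $R$ to $R'$ whenever $\Tmc \models \exists R^- \incl \exists R'$ and $R^- \neq R'$---is acyclic, since a cycle would already produce words of unbounded length in $\domain{\canmod}$ (contradicting finite depth). Hence every word $a R_1 \dots R_n \in \domain{\canmod}$ follows a simple path in this graph and so has length $n \leq 2|\rnames|$, and as each element has at most $|\rni|$ successors, $\canmod$ is finite with $|\domain{\canmod}| = O(|\ainds(\Amc)| \cdot |\rni|^{2|\rnames|})$. This is single-exponential in the combined input size and only polynomial in $|\Amc|$.

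Next I would reuse the interleaving construction from the proof of Theorem~\ref{upperboundgeneral}. Starting from an arbitrary countermodel $\I$, I fix a homomorphism $f$ of $\canmod$ into $\I$ (Lemma~\ref{cm-lemma}) and form the interleaving $\I'$, the image of $\canmod$ under $f'$; by Theorem~\ref{interleaving} this $\I'$ is again a countermodel. By the definition of $f'$ its domain satisfies $\domain{\I'} = f'(\domain{\canmod}) \subseteq \domain{*} \cup \domain{\canmod}$. The set $\domain{*}$ consists of the interpretations of the $|\ainds(\Amc)|$ individuals together with the images of the c-matches of $q(\ta)$ in $\I$; since $\I$ is a countermodel it has fewer than $m$ c-matches, each contributing at most $|q|$ elements, so $|\domain{*}| \leq |\ainds(\Amc)| + |q|(m-1)$. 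As $m$ is given in binary, this is single-exponential, and combined with the bound on $|\domain{\canmod}|$ we conclude that $\I'$ is a countermodel of single-exponential size---crucially, with no merging required.

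These bounds yield the decision procedure for the complement: nondeterministically guess an interpretation of single-exponential size, then verify in deterministic exponential time that it is a model of $\Kmc$ and that it admits fewer than $m$ c-matches of $q(\ta)$, the latter by enumerating the at most $|\domain{\I'}|^{|q|}$ candidate maps of $\tz$, testing each for extendability to a full match, and comparing the resulting count against $m$. This places the complement in nondeterministic exponential time, hence CCQ answering in $\coNEXP$, matching the lower bound. The main obstacle is precisely the depth bound of the second step: pinning down that finite depth caps the length of anonymous chains (and therefore the size of both $\canmod$ and $\I'$) is exactly the property that fails for general TBoxes and forces the double-exponential merging of Theorem~\ref{upperboundgeneral}; a secondary point demanding care is confirming that the inherited hardness reduction can indeed be realized with a finite-depth TBox.
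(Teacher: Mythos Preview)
Your proposal is correct and follows essentially the same route as the paper: bound $|\domain{\canmod}|$ single-exponentially using finite depth, observe that the interleaving $\I'$ of any countermodel is itself a countermodel of single-exponential size, and conclude $\coNEXP$ membership by guess-and-check; the lower bound is inherited because the existing $\coNEXP$-hardness reduction already employs a finite-depth TBox. One minor simplification relative to your write-up: since $\domain{\I'} = f'(\domain{\canmod})$ is the image of a function on $\domain{\canmod}$, you get $|\domain{\I'}| \leq |\domain{\canmod}|$ directly, so the detour bounding $|\domain{*}|$ via $m$ is unnecessary (though not incorrect).
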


\begin{proof}[Proof sketch]Fix a KB $\Kmc=(\Tmc, \Amc)$. 
If 
$\tbox$ has finite depth, then $\canmod$ contains at most $|\ainds(\Amc)| \times |\tbox|^{|\tbox|}$ elements, 
which implies that, for every model $\Imc$ of $\Kmc$, the interleaving of $\Imc$  
is finite and of single exponential size in $|\kb|$. 
Since the interleaving of a countermodel is itself a countermodel, 
this shows that the smallest countermodel is of single-exponential size, from which derives the improved $\coNEXP$ upper bound.
\end{proof}

We note that the proofs of the $\coNP$ and $\PIP$ lower bounds listed in Table \ref{resultsgeneral} 
already use finite-depth TBoxes.

\section{Rooted Counting CQs}\label{sec:rooted}


\newcommand{\rootedproblem}{\problemstyle{\problem^{rooted}}}
\newcommand{\stronglyproblem}{\problemstyle{\problem{strg. root.}}}
We next explore whether structural restrictions on CCQs allow us to obtain lower complexity. 
As the lower bounds from \cite{kostylevreutter:count} use disconnected counting variables,
a natural idea is to consider 
the subclass of
 \emph{rooted} queries that were introduced in 
\cite{DBLP:conf/kr/BienvenuLW12} 
and are believed to capture a large portion of real-world CQs. 

Rooted CCQs can be defined analogously to rooted CQs. The definition utilizes the notion of a Gaifman graph of a CCQ, 
whose vertices are the query terms, and which has an undirected edge $\{t_1,t_2\}$ iff $t_1,t_2$ co-occur in a role atom. 
\label{subsectionrooted}
\begin{definition}
\label{defrooted}
A CCQ $q(\tx) := \exists \ty\exists \tz \psi(\tx, \ty, \tz)$ is \emph{rooted} if every connected component of the
Gaifman graph of $q$ contains at least one answer variable or individual name.
\end{definition}
\noindent Example queries $q_2$ and $q_3$ are rooted, while $q_1$ is not. 

Rootedness has been shown to lower the complexity of reasoning in several settings. Most relevant to us 
is a recent result by \citeauthor{nikolauetal:bag}\ (\citeyear{nikolauetal:bag}) that rooted CQ answering under bag semantics\footnote{Bag semantics, which underly practical database systems, interprets relations using multisets rather than sets \cite{DBLP:conf/vldb/Albert91}.} 
 has tractable data complexity in \dlc, and furthermore, the same holds for rooted versions of the $Count()$-queries of \citeauthor{kostylevreutter:count} under suitable restrictions on the TBox. These techniques can be adapted to show tractability for arbitrary \dlc TBoxes: 

\begin{theoremrep}\textbf{(Implicit in \cite{nikolauetal:bag,DBLP:conf/semweb/CimaNKKGH19})}\label{tcz}
In \dlc, \exhaustive rooted CCQ answering is $\TC^0$-complete\footnote{We recall that 
$\TC^0$ is a circuit complexity class defined similarly to $\AC^0$ but additionally allowing threshhold gates. It is known that $\AC^0 \subsetneq \TC^0 \subseteq \mathsf{NC}^1 \subseteq \mathsf{LogSpace} \subseteq \mathsf{PTime}$.} w.r.t.\ data complexity.
\end{theoremrep}
\begin{proofsketch}
\citeauthor{nikolauetal:bag}\ prove that answering rooted CQs under bag semantics can be done via a rewriting to BCALC,  whose evaluation problem is known to be in $\TC^0$ due to \cite{DBLP:conf/icdt/Libkin01}, see \cite{DBLP:conf/semweb/CimaNKKGH19} for discussion. Moreover, they further show that for a syntactically restricted class of \dlc\ TBoxes, it is possible to reduce \exhaustive rooted CCQ answering to rooted CQ answering under bag semantics. To obtain $\TC^0$ membership for unrestricted TBoxes, the BCALC rewriting can be adapted to set-based rather than bag interpretations. In the long version, we provide an alternative self-contained proof which directly constructs a family of $\TC^0$ circuits.  
A matching lower bound has not been stated, 
but can be shown by a simple reduction (using an empty TBox) from the $\TC^0$-complete problem that asks, given a binary string $s$ and number $k$, whether the number of $1$-bits in $s$ exceeds $k$ \cite{aehlig:tc0}. 
%
\end{proofsketch}

\begin{proof}
We start with the $\TC^0$ hardness. The reduction from the $\numones$ problem works as follows: given an instance $(s, k)$, we create an ABox $\abox_s := \{ \rolestyle{R}(\ind{a}, \ind{s}_k) \suchthat s_k \in s \wedge s_k = 1 \}$, along with the empty TBox $\tbox = \emptyset$ and \exhaustive rooted CCQ $q := \exists z\ \rolestyle{R}(\ind{a}, z)$. It is clear that $(\emptyset, [k, +\infty]) \in \qIntervals{q}{(\tbox, \abox_s)} \Longleftrightarrow (s, k) \in \numones$. It can be verified that this simple reduction can be implemented by $\AC^0$ circuits (so constitutes an $\AC^0$-reduction, as required).  

As explained in the body of the paper, $\TC^0$ membership follows from results in \cite{nikolauetal:bag}. While that paper only formally states membership in $\mathsf{LogSpace}$, a follow-up paper on bag semantics \cite{DBLP:conf/semweb/CimaNKKGH19} states $\TC^0$ membership for \dlf\ (which properly contains \dlc), by making use of prior complexity results for bag relational algebra. We believe it is nevertheless instructive to have a direct proof and therefore describe what follows how to construct a family of $\TC^0$ circuits to decide our problem.

We need a family of circuits in order to be able to handle ABoxes of different sizes. More precisely, we will create one circuit for each possible number $\ell$ of individual names. We can assume w.l.o.g.\ that the same set of individuals, denoted $\ainds_\ell$, is used for all of the ABoxes having $\ell$ individuals. Let us now explain how to represent an input $(\abox^*, \ta^*, m^*)$ to the circuit that handles $\ell$-individual ABoxes. 
%
\begin{itemize}
\item
Each atomic role $\rolestyle{P}$ appearing in $\Tmc$ and/or $q$ is represented by input gates $\inputgate_{\rolestyle{P}(\ind{a},\ind{b}) \in \abox ?}$ for $\ind{a},\ind{b} \in \individuals_\ell$. The gate $\inputgate_{\rolestyle{P}(\ind{a},\ind{b}) \in \abox ?}$ is set to $1$ iff $\rolestyle{P}(\ind{a},\ind{b}) \in \abox^*$.
\item
Each atomic concept $\rolestyle{A}$ appearing in $\Tmc$ and/or $q$ is represented by input gates $\inputgate_{\rolestyle{A}(\ind{a}) \in \abox ?}$ for $\ind{a} \in \individuals_\ell$. The gate $\inputgate_{\rolestyle{A}(\ind{a}) \in \abox ?}$ is set to $1$ iff $\rolestyle{A}(\ind{a}) \in \abox^*$.
\item
The tuple $\ta^*$ is represented by input gates $\inputgate_{\ta_k = \ind{a}}$ for $1\leq k \leq |\tx|$ and $\ind{a} \in \individuals_\ell$. The gate  is set to $1$ iff $\ta^*_k = \ind{a}$. 
\item 
The integer $m^*$ is represented in binary by input gates $\inputgate_{b_k = 1}$ for each $0 \leq k < \log_2(|\ainds(\Amc^*)| + | \tbox |)^{|q|})$.  The gate $\inputgate_{b_k = 1}$ is set to $1$ iff the $k^{th}$ bit of $m^*$ is $1$ (with $0^{th}$-bit being the least significant bit).
\end{itemize}
Regarding the last point, we use the observation from \cite{kostylevreutter:count} that if $(\ta^*, [m^*, +\infty]) \in \qIntervals{q}{{(\Tmc, \Amc^*)}}$, then $m^*$ cannot exceed $(|\ainds(\Amc^*)| + | \tbox |)^{|q|}=(|\ainds_\ell| + | \tbox |)^{|q|}$. This is a direct consequence of the fact that every satisfiable \dlh\ KB $\Kmc = (\Tmc, \Amc)$ has a model with at most $|\ainds(\Amc)| + | \tbox |$ elements.

We now describe the other parts of the circuit. 
%
%
%
%
%
%
%
We introduce, for each relevant positive concept $\rolestyle{C}$ (i.e., atomic concept or existential concept $\exists R$ that uses concept and role names from $\Tmc$ and/or $q$) and each individual name $\ind{a} \in \ainds_\ell$, a disjunctive gate $\orgate_{\kb \models \rolestyle{C}(\ind{a}) ?}$ taking as inputs:
\begin{itemize}
\item
$\inputgate_{\rolestyle{A}(\ind{a}) \in \abox ?}$ for each atomic concept $\rolestyle{A}$ such that $\tbox \models \rolestyle{A} \incl \rolestyle{C}$.
\item
$\inputgate_{\rolestyle{P}(\ind{a}, \ind{b}) \in \abox ?}$ for all $\ind{b} \in \individuals(\abox)$ such that $\tbox \models \exists \rolestyle{P} \incl \rolestyle{C}$.
\item
$\inputgate_{\rolestyle{P}(\ind{b}, \ind{a}) \in \abox ?}$ for all $\ind{b} \in \individuals(\abox)$ such that $\tbox \models \exists \rolestyle{P^-} \incl \rolestyle{C}$.
\end{itemize}

The preceding gates determine the ABox part of the canonical model. We next need to decide the existence of each element of the form $\ind{a} w$, where $a \in \individuals(\abox)$ and $w \in \Gamma_{q, \tbox} \setminus \varepsilon$ (by Lemma \ref{words}, these are the only anonymous elements that can occur in a match for $q$). 
For each such $\ind{a} w$, we denote by $\rolestyle{R}_w$ the first role name of $w$ and introduce a conjunctive gate $\andgate_{
 \ind{a} w \in \domain{\canmod} ?}$ which takes as input:
\begin{itemize}
\item The negation $\notgate_{\forall \ind{b} \in \individuals(\abox)\ \neg \rolestyle{R}(\ind{a}, \ind{b}) ?}$ of a disjunctive gate $\orgate_{\exists \ind{b} \in \individuals(\abox)\ \rolestyle{R}(\ind{a}, \ind{b}) ?}$ taking as inputs the gates:
\begin{itemize}
\item $\inputgate_{\rolestyle{P}(\ind{a}, \ind{b})}$ for each $\ind{b} \in \individuals_\ell$, if $\rolestyle{R} = \rolestyle{P} \in \rnames$. 
\item $\inputgate_{\rolestyle{P}(\ind{b}, \ind{a})}$ for each $\ind{b} \in \individuals_\ell$ if $\rolestyle{R} = \rolestyle{P^-}$ with $\rolestyle{P} \in \rnames$.
\end{itemize}
which verifies that there is not already a $\rolestyle{R}_w$-successor to $\ind{a}$. 
\item The gate $\orgate_{\kb \models \exists \rolestyle{R}_w(\ind{a}) ?}$ that checks that a witnessing $\rolestyle{R}_w$-successor is needed.
\end{itemize}

The circuit next determines for each mapping $\match : \tx \cup \tz \mapsto \{\ind{a} w| \ind{a} \in  \individuals_\ell, w \in \Gamma_{q, \tbox}\}$, whether $\match$ is a match for $q(\ta^*)$. Notice that, regardless of the input ABox, 
we can restrict to a set of relevant mappings by keeping only those which map the answer variables $\tx$ to individuals from $\individuals_\ell$ and which map variables $v_1,v_2$ occurring in a role atom $\rolestyle{R}(v_1, v_2)$ from $q$ onto either:
\begin{itemize}
\item a pair of individual names, or
\item a pair $w_1, w_2$ such that $w_2 = w_1\rolestyle{R}$ or $w_1 = \rolestyle{R^-}w_2$.
\end{itemize}
Similarly, we can restrict the set of relevant mappings by keeping only those which map variable $v$ occuring in a concept atom $\rolestyle{A}(v)$ from $q$ onto either an individual name, or an element $aw\rolestyle{R}$, where $\kb \models \exists\rolestyle{R}^- \sqsubseteq \rolestyle{A}$.
Clearly, any mapping $\match$ that does not respect these conditions cannot be a match, due to 
 the definition of $\rolestyle{R}^{\canmod}$. 
This restriction simplifies the process of checking if a mapping is a match for $q(\ta^*)$: we are only left with verifying the existence of the anonymous elements in its image, as well as the validity of the atoms mapped onto the ABox part of the canonical model. 

For each relevant mapping $\match$, we introduce a conjunctive gate $\andgate_{\match \text{ match ?}}$ taking as inputs all gates:
\begin{itemize}
\item
$\inputgate_{\ta_k = \match(x_k) ?}$ for each $1\leq k \leq |\tx|$ (to check $\tx$ is mapped on $\ta^*$).
\item 
$\andgate_{
\match(z) \in \domain{\canmod} ?}$ for each $z \in \tz$ such that $\match(z) \notin \individuals_\ell$ (to check for existence of $\match(z)$ under input $\abox^*$).
\item
$\inputgate_{\rolestyle{R}(\match(v_1), \match(v_2)) \in \abox ?}$ for each $v_1, v_2 \in \tx \cup \tz$ such that $\rolestyle{R}(v_1,v_2) \in q$ and $\match(v_1), \match(v_2) \in \individuals(\abox)$ (to check the validity of the mapping for pairs of variables mapped on individual names).
\item 
$\orgate_{\kb \models \rolestyle{A}(\match(v)) ?}$ for each $v \in \tx \cup \tz$ such that $\rolestyle{A}(v) \in q$ and $\match(v) \in \individuals(\abox)$  (to check the validity of the mapping for variables mapped on individual names).
\end{itemize}

We will next use threshold gates in order to compute the total number of matches. Introduce, for each $k = 0, \dots, (\individuals_\ell \times \Gamma_{q, \tbox})^{|q|}$, a threshold gate $\tgate{k}_{\qAnswers{q}{\canmod}_{\ta} \geq k ?}$ taking as input every $\andgate_{\match \text{ match ?}}$. The gate $\tgate{k}_{\qAnswers{q}{\canmod}_{\ta} \geq k ?}$ returns 1 iff at least $k$ of its inputs are 1. By construction, the latter holds iff there are at least $k$ matches for $q(\ta^*)$. 

In parallel, we introduce a conjunctive gate $\andgate_{m = k ?}$ for each $k = 0, \ldots, (\individuals_\ell \times \Gamma_{q, \tbox})^{|q|}$ taking as inputs:
\begin{itemize}
\item  the input gates $\inputgate_{b_j = 1 ?}$ such that the $j^{th}$ bit of the binary encoding of $k$ is $1$
\item the negation of each input gate $\inputgate_{b_j = 1 ?}$ such that the $j^{th}$ bit of the binary encoding of $k$ is $0$
\end{itemize}
The gate $\andgate_{m = k?}$ returns $1$ iff $m^*=k$. 

We combine the preceding two types of gates to compare $m^*$ and the computed number of matches. For each $k = 0, \ldots, (\individuals_\ell \times \Gamma_{q, \tbox})^{|q|}$, we introduce a conjunctive gate $\andgate_{\qAnswers{q}{\canmod}_\ta \geq m ?}$ taking as input $\tgate{k}_{\qAnswers{q}{\canmod}_{\ta} \geq k ?}$ and $\andgate_{m = k ?}$. 

Finally, our output gate is a disjunctive gate $\orgate_{\mathsf{output}}$ taking as inputs all gates $\andgate_{\qAnswers{q}{\canmod}_\ta \geq m ?}$. By construction, this gate outputs 1 iff there are at least $m^*$ matches of $q(\ta^*)$  in the canonical model of the considered KB.


The depth of the circuit is $7$, and is hence constant, showing membership in $\TC^0$.
\end{proof}

\begin{toappendix}
\begin{example}

Let $\tbox$ be the following $\DLc$ TBox
$$\tbox = \{ \rolestyle{D} \incl \exists \rolestyle{R}, \exists \rolestyle{R}^- \incl \exists \rolestyle{R} \}$$
and $q$ be the \exhaustive rooted CCQ given by 
$$q(x) = \exists z ~ \rolestyle{R}(x, z)$$
Observe that even if the second axiom from $\tbox$ suggests the need to consider suffixes $\rolestyle{R} \dots \rolestyle{R}$ of arbitrary length, we only have $\Gamma_{q, \tbox} = \{ \varepsilon, \rolestyle{R} \}$.

We propose to illustrate the construction of the circuit designed for $2$-individual ABoxes, with individual names $\ind{a}$ and $\ind{b}$.
We thus require $\lceil { \log_2( (2 + 2)^1 )} \rceil + 1 = 3$ input gates representing the input integer, and we have $6$ relevant matches given by:
\begin{itemize}
\item $(\match_1) \qquad x \mapsto \ind{a} \qquad z \mapsto \ind{a}$
\item $(\match_2) \qquad x \mapsto \ind{a} \qquad z \mapsto \ind{b}$
\item $(\match_3) \qquad x \mapsto \ind{a} \qquad z \mapsto \ind{aR}$
\item $(\match_4) \qquad x \mapsto \ind{b} \qquad z \mapsto \ind{b}$
\item $(\match_5) \qquad x \mapsto \ind{b} \qquad z \mapsto \ind{a}$
\item $(\match_6) \qquad x \mapsto \ind{b} \qquad z \mapsto \ind{bR}$
\end{itemize}
The corresponding circuit is depicted in Figure~\ref{circuitexample}.

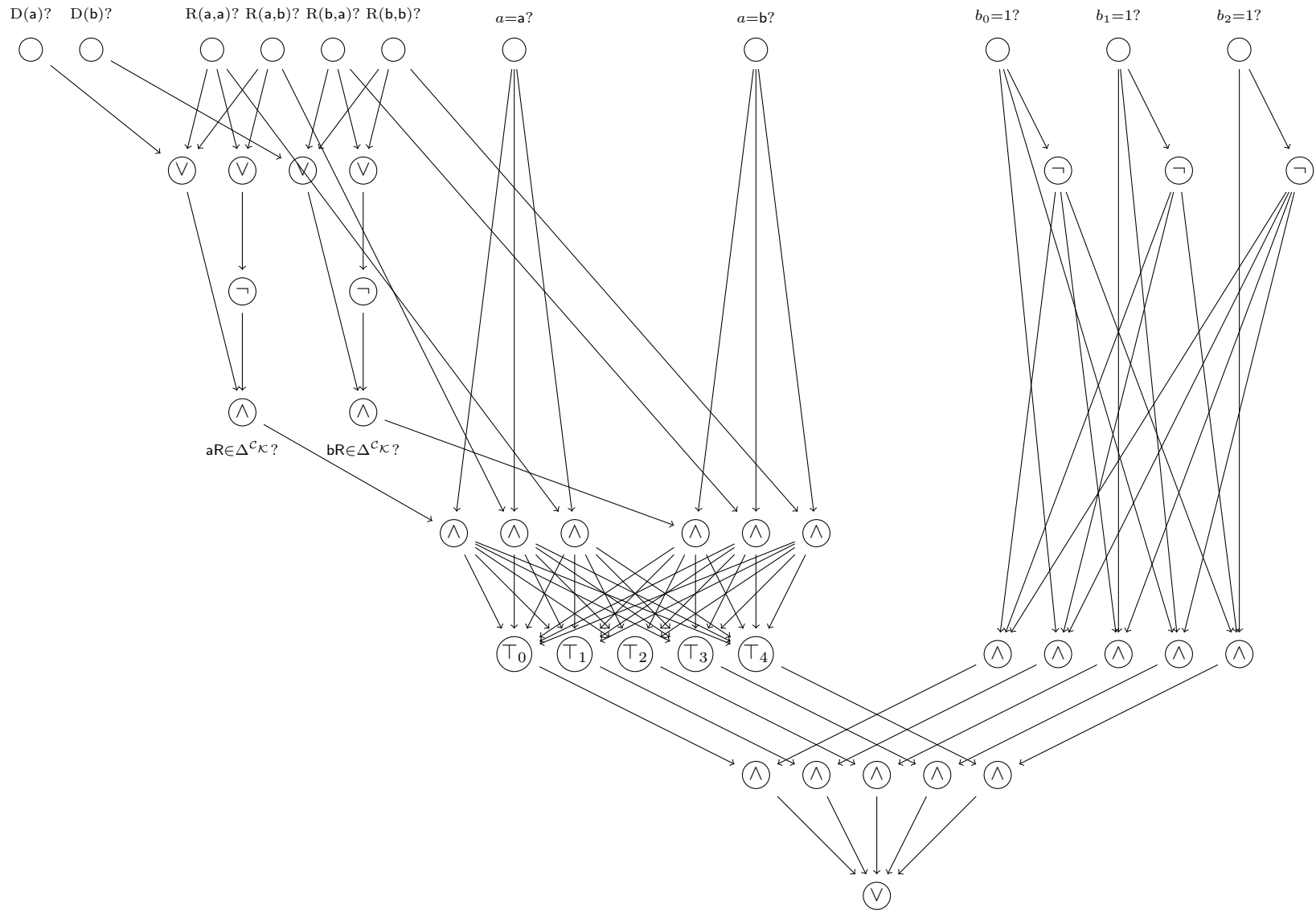
\begin{sidewaysfigure}
\begin{tikzpicture}
\node at ( 1, 14) (ica) 		[label=above:$ \scriptstyle{\rolestyle{D}(\ind{a})?}$]		{$\inputgate$};
\node at ( 2,14) (icb) 		[label=above:$ \scriptstyle{\rolestyle{D}(\ind{b})?}$]		{$\inputgate$};
\node at ( 4,14) (iraa) 		[label=above:$\scriptstyle {\rolestyle{R}(\ind{a}, \ind{a})?}$]		{$\inputgate$};
\node at ( 5,14) (irab) 		[label=above:$\scriptstyle {\rolestyle{R}(\ind{a}, \ind{b})?}$]		{$\inputgate$};
\node at ( 6,14) (irba) 		[label=above:$\scriptstyle {\rolestyle{R}(\ind{b}, \ind{a})?}$]		{$\inputgate$};
\node at ( 7,14) (irbb) 		[label=above:$\scriptstyle {\rolestyle{R}(\ind{b}, \ind{b})?}$]		{$\inputgate$};
\node at ( 9,14) (iaa) 		[label=above:$\scriptstyle {a = \ind{a}?}$]		{$\inputgate$};
\node at (13,14) (iab) 		[label=above:$\scriptstyle {a = \ind{b}?}$]		{$\inputgate$};
\node at (17,14) (in0) 		[label=above:$\scriptstyle {b_0 = 1 ?}$]		{$\inputgate$};
\node at (19,14) (in1) 		[label=above:$\scriptstyle {b_1 = 1 ?}$]		{$\inputgate$};
\node at (21,14) (in2) 		[label=above:$\scriptstyle {b_2 = 1 ?}$]		{$\inputgate$};

\node at (3.5,12) (cera) 		[]		{$\orgate$};
\node at (5.5,12) (cerb) 		[]		{$\orgate$};
\node at (4.5,12) (eira) 		[]		{$\orgate$};
\node at (6.5,12) (eirb) 		[]		{$\orgate$};

\node at ( 18,12) (nin0) 		[]		{$\notgate$};
\node at ( 20,12) (nin1) 		[]		{$\notgate$};
\node at ( 22,12) (nin2) 		[]		{$\notgate$};

\node at (4.5,10) (neira) 		[]		{$\notgate$};
\node at (6.5,10) (neirb) 		[]		{$\notgate$};

\node at ( 4.5,8) (ar) 		[label=below:$\scriptstyle {\ind{aR} \in \domain{\canmod} ?}$]		{$\andgate$};
\node at ( 6.5,8) (br) 		[label=below:$\scriptstyle {\ind{bR} \in \domain{\canmod} ?}$]		{$\andgate$};

\node at (10,6) (m1) 		[]		{$\andgate$};
\node at (9,6) (m2) 		[]		{$\andgate$};
\node at (8,6) (m3) 		[]		{$\andgate$};
\node at (14,6) (m4) 		[]		{$\andgate$};
\node at (13,6) (m5) 		[]		{$\andgate$};
\node at (12,6) (m6) 		[]		{$\andgate$};

\node at ( 9,4) (t0) 		[]		{$\tgate{0}$};
\node at (10,4) (t1) 		[]		{$\tgate{1}$};
\node at (11,4) (t2) 		[]		{$\tgate{2}$};
\node at (12,4) (t3) 		[]		{$\tgate{3}$};
\node at (13,4) (t4) 		[]		{$\tgate{4}$};

\node at ( 17,4) (i0) 		[]		{$\andgate$};
\node at ( 18,4) (i1) 		[]		{$\andgate$};
\node at ( 19,4) (i2) 		[]		{$\andgate$};
\node at ( 20,4) (i3) 		[]		{$\andgate$};
\node at ( 21,4) (i4) 		[]		{$\andgate$};

\node at ( 13,2) (mt0) 		[]		{$\andgate$};
\node at ( 14,2) (mt1) 		[]		{$\andgate$};
\node at ( 15,2) (mt2) 		[]		{$\andgate$};
\node at (16,2) (mt3) 		[]		{$\andgate$};
\node at (17,2) (mt4) 		[]		{$\andgate$};

\node at ( 15, 0) (o) 		[]		{$\orgate$};

\path[every node/.style={sloped}]

(ica) edge [->] node[below] 			{} (cera)
(iraa) edge [->] node[below] 			{} (cera)
(irab) edge [->] node[below] 			{} (cera)

(icb) edge [->] node[below] 			{} (cerb)
(irbb) edge [->] node[below] 			{} (cerb)
(irba) edge [->] node[below] 			{} (cerb)

(iraa) edge [->] node[below] 			{} (eira)
(irab) edge [->] node[below] 			{} (eira)

(irbb) edge [->] node[below] 			{} (eirb)
(irba) edge [->] node[below] 			{} (eirb)

(in0) edge [->] node[below] 			{} (nin0)
(in1) edge [->] node[below] 			{} (nin1)
(in2) edge [->] node[below] 			{} (nin2)

(eira) edge [->] node[below] 			{} (neira)
(eirb) edge [->] node[below] 			{} (neirb)

(neira) edge [->] node[below] 			{} (ar)
(cera) edge [->] node[below] 			{} (ar)

(neirb) edge [->] node[below] 			{} (br)
(cerb) edge [->] node[below] 			{} (br)

(iaa) edge [->] node[below] 			{} (m1)
(iraa) edge [->] node[below] 			{} (m1)

(iaa) edge [->] node[below] 			{} (m2)
(irab) edge [->] node[below] 			{} (m2)

(iaa) edge [->] node[below] 			{} (m3)
(ar) edge [->] node[below] 			{} (m3)

(iab) edge [->] node[below] 			{} (m4)
(irbb) edge [->] node[below] 			{} (m4)

(iab) edge [->] node[below] 			{} (m5)
(irba) edge [->] node[below] 			{} (m5)

(iab) edge [->] node[below] 			{} (m6)
(br) edge [->] node[below] 			{} (m6)

(m1) edge [->] node[below] 			{} (t0)
(m2) edge [->] node[below] 			{} (t0)
(m3) edge [->] node[below] 			{} (t0)
(m4) edge [->] node[below] 			{} (t0)
(m5) edge [->] node[below] 			{} (t0)
(m6) edge [->] node[below] 			{} (t0)

(m1) edge [->] node[below] 			{} (t1)
(m2) edge [->] node[below] 			{} (t1)
(m3) edge [->] node[below] 			{} (t1)
(m4) edge [->] node[below] 			{} (t1)
(m5) edge [->] node[below] 			{} (t1)
(m6) edge [->] node[below] 			{} (t1)

(m1) edge [->] node[below] 			{} (t2)
(m2) edge [->] node[below] 			{} (t2)
(m3) edge [->] node[below] 			{} (t2)
(m4) edge [->] node[below] 			{} (t2)
(m5) edge [->] node[below] 			{} (t2)
(m6) edge [->] node[below] 			{} (t2)

(m1) edge [->] node[below] 			{} (t3)
(m2) edge [->] node[below] 			{} (t3)
(m3) edge [->] node[below] 			{} (t3)
(m4) edge [->] node[below] 			{} (t3)
(m5) edge [->] node[below] 			{} (t3)
(m6) edge [->] node[below] 			{} (t3)

(m1) edge [->] node[below] 			{} (t4)
(m2) edge [->] node[below] 			{} (t4)
(m3) edge [->] node[below] 			{} (t4)
(m4) edge [->] node[below] 			{} (t4)
(m5) edge [->] node[below] 			{} (t4)
(m6) edge [->] node[below] 			{} (t4)

(nin0) edge [->] node[below] 			{} (i0)
(nin1) edge [->] node[below] 			{} (i0)
(nin2) edge [->] node[below] 			{} (i0)

(in0) edge [->] node[below] 			{} (i1)
(nin1) edge [->] node[below] 			{} (i1)
(nin2) edge [->] node[below] 			{} (i1)

(nin0) edge [->] node[below] 			{} (i2)
(in1) edge [->] node[below] 			{} (i2)
(nin2) edge [->] node[below] 			{} (i2)

(in0) edge [->] node[below] 			{} (i3)
(in1) edge [->] node[below] 			{} (i3)
(nin2) edge [->] node[below] 			{} (i3)

(nin0) edge [->] node[below] 			{} (i4)
(nin1) edge [->] node[below] 			{} (i4)
(in2) edge [->] node[below] 			{} (i4)

(i0) edge [->] node[below] 			{} (mt0)
(t0) edge [->] node[below] 			{} (mt0)

(i1) edge [->] node[below] 			{} (mt1)
(t1) edge [->] node[below] 			{} (mt1)

(i2) edge [->] node[below] 			{} (mt2)
(t2) edge [->] node[below] 			{} (mt2)

(i3) edge [->] node[below] 			{} (mt3)
(t3) edge [->] node[below] 			{} (mt3)

(i4) edge [->] node[below] 			{} (mt4)
(t4) edge [->] node[below] 			{} (mt4)

(mt0) edge [->] node[below] 			{} (o)
(mt1) edge [->] node[below] 			{} (o)
(mt2) edge [->] node[below] 			{} (o)
(mt3) edge [->] node[below] 			{} (o)
(mt4) edge [->] node[below] 			{} (o)
;

\end{tikzpicture}
\caption{The $\TC^0$ circuit built for $2$-individual ABoxes w.r.t. $\tbox$ and $q$.}
\label{circuitexample}
\end{sidewaysfigure}

\end{example}
\medskip
\end{toappendix}

The preceding 
result naturally 
leads us to ask whether rootedness also bring benefits for general CCQs.  Unfortunately, we show that restricting to rooted CCQs (without exhaustiveness) 
does not allow us to escape existing hardness results:


\begin{theoremrep}
\label{rootedconphard}
In \dlc, rooted CCQ answering is $\coNP$-complete w.r.t.\ data complexity.
\end{theoremrep}
\def\TmcCol{\Tmc_{\mathsf{col}}}
\begin{proofsketch}
The proof borrows some ideas from the proofs of Lemmas 12 and 16 from \cite{kostylevreutter:count}. It proceeds by reduction from the well-known $\coNP$-complete  $\tcol$ problem: given an undirected graph $\graph = (\vertices, \edges)$,
return yes iff $\graph$ has \emph{no} 3-coloring, i.e., a mapping from $\vertices$ to $\{\ind{red},\ind{green},\ind{blue}\}$ such that adjacent vertices map to different colors (equivalently: there is no monochromatic edge). 

The reduction uses atomic roles $\rolestyle{Edge}$ and $\rolestyle{Vertex}$ to encode the graph and $\rolestyle{HasCol}$ to assign colors. 
The TBox $\TmcCol$ has a single axiom: $\exists \rolestyle{Vertex}^- \sqsubseteq \exists \rolestyle{HasCol}$.
The ABox $\abox_{\graph}$ contains an individual $\mathsf{v}$ for each vertex $v \in \vertices$ and an assertion $\rolestyle{Edge}(\ind{u}, \ind{v})$ for each edge $\{u, v\} \in \edges$. All vertices are connected to a special root individual $\ind{a}$: 
$\rolestyle{Vertex}(\ind{a}, \ind{u})$, for each $u \in \vertices$. The three colors are represented by individuals $\ind{r}, \ind{g}$ and $\ind{b}$. To ensure that the query has matches in every model, we include a `dummy' vertex individual $\ind{a_v}$ and the following assertions: $\rolestyle{Vertex}(\ind{a}, \ind{a_v})$, $\rolestyle{Edge}(\ind{a_v}, \ind{a_v})$, $\rolestyle{HasCol}(\ind{a_v}, \ind{r})$, $\rolestyle{HasCol}(\ind{a_v}, \ind{g})$, and $\rolestyle{HasCol}(\ind{a_v}, \ind{b})$.

The 
query $q$ is the conjunction of the two subqueries:
\begin{align*}
q^{edge} = &\, \exists y_c ~ \exists z_1 ~ \exists z_2 ~ \rolestyle{Vertex}(\ind{a}, z_1)  \wedge  \rolestyle{Vertex}(\ind{a}, z_2) \wedge \\
& \rolestyle{Edge}(z_1, z_2)  \wedge  \rolestyle{HasCol}(z_1, y_c)  \wedge  \rolestyle{HasCol}(z_2, y_c)\smallskip\\
q^{col}= &\, \exists y ~ \exists z  ~ \rolestyle{Vertex}(\ind{a}, y) \wedge \rolestyle{HasCol}(y, z)
\end{align*}
serving respectively to detect monochromatic edges and to check 
whether any additional colors have been introduced. 

By construction, there are at least 3 c-matches for $q(\emptyset)$
in any model of the KB $\Kmc_{\mathsf{col}}=(\TmcCol, \abox_{\graph})$. Moreover, it can be verified 
that  $(\emptyset, [4, +\infty])$ is a certain answer to $q$ w.r.t.\ $\Kmc_{\mathsf{col}}$ iff 
$\graph$ is not 3-colorable.
 \end{proofsketch}
\begin{proof}
We briefly recall the reduction sketched in the body of the paper. Starting from an instance $\graph = (\vertices, \edges)$ of the decision problem $\tcol$, we consider the ABox $\abox_\graph$ given by:

\begin{align*}
\abox_\graph =& \{\rolestyle{Vertex}(\ind{a}, \ind{u}) \suchthat u \in \vertices \} \cup \{ \rolestyle{Edge}(\ind{u_1}, \ind{u_2}) \suchthat (u_1, u_2) \in \edges \} \\
& \cup \{ \rolestyle{Vertex}(\ind{a}, \ind{a_v}), \rolestyle{Edge}(\ind{a_v}, \ind{a_v}), \rolestyle{HasCol}(\ind{a_v}, \ind{r}), \rolestyle{HasCol}(\ind{a_v}, \ind{g}), \rolestyle{HasCol}(\ind{a_v}, \ind{b}) \}
\end{align*}
and the TBox $\tbox := \{ \exists \rolestyle{Vertex^-} \incl \exists \rolestyle{HasCol} \}$, and we denote by $\kb_\graph = (\tbox, \abox_\graph)$ the resulting KB. A part of the canonical model of $\kb_\graph$ is depicted in Figure~\ref{canmodconp}.

\begin{figure}[h]
\centering
\begin{tikzpicture}
[every node/.style={scale=0.8, sloped}]
\node at ( 0, 0) (root)	[label=below:$\ind{a}$]	  		{$\bullet$};
\node at (-2, 1) (u)	[label=above:$\ind{u_1}$]	  		{$\bullet$};
\node at (-2,-1) (v)	[label=below:$\ind{u_2}$]			{$\bullet$};
\node at (-4, 1) (uc) 	[label=above:$\ind{u_1}\rolestyle{HasCol}$]							{$\circ$};
\node at (-4,-1) (vc) 	[label=below:$\ind{u_2}\rolestyle{HasCol}$]							{$\circ$};
\node at ( 4, 1) (r) 	[label=right:$\ind{r}$]		{$\bullet$};
\node at ( 4, 0) (g)   	[label=right:$\ind{g}$]	{$\bullet$};
\node at ( 4,-1) (b)   [label=right:$\ind{b}$]	{$\bullet$};
\node at ( 2, 0) (a)	[label=below:$\ind{a_v}$]			{$\bullet$};
	
\path 
(root) edge [->] node[above] {$\rolestyle{Vertex}$} (u)
(root) edge [->] node[below] {$\rolestyle{Vertex}$} (v)
(root) edge [->] node[above] {$\rolestyle{Vertex}$} (a)

(u) edge [->] node[below] 	{$\rolestyle{Edge}$} (v)
(u) edge [->] node[below] 	{$\rolestyle{HasCol}$} (uc)
(v) edge [->] node[above] 	{$\rolestyle{HasCol}$} (vc)


(a) edge [loop above] node [above]	{$\rolestyle{Edge}$} (a)
(a) edge [->] node [above]	{$\rolestyle{HasCol}$} (r)
(a) edge [->] node [above]	{$\rolestyle{HasCol}$} (g)
(a) edge [->] node [below]	{$\rolestyle{HasCol}$} (b);

\end{tikzpicture}
\caption{A part of $\canmodof{\kb_\graph}$ with $(u_1, u_2) \in \edges$.}
\label{canmodconp}
\end{figure}

We consider the two following rooted CCQs:
\begin{align*}
q^{edge} &= \exists y_c ~ \exists z_1 ~ \exists z_2 ~ \rolestyle{Vertex}(\ind{a}, z_1)  \wedge  \rolestyle{Vertex}(\ind{a}, z_2) \wedge \rolestyle{Edge}(z_1, z_2)  \wedge  \rolestyle{HasCol}(z_1, y_c)  \wedge  \rolestyle{HasCol}(z_2, y_c) \\
q^{col} &= \exists y ~ \exists z  ~ \rolestyle{Vertex}(\ind{a}, y) \wedge \rolestyle{HasCol}(y, z) 
\end{align*}
We let $q$ be the query obtained by taking the conjunction of these two queries and keeping all of the variables existentially quantified. The query $q$ is displayed in Figure \ref{rootedqueryconp}. The three counting variables ($z_1, z_2, z$) are indicated by large gray dots.

\begin{figure}[h]
\centering
\begin{tikzpicture}
[every node/.style={scale=0.8, sloped}]
\node at ( 0, 3) (r) [label=below:$\ind{a}$]  {$\gx$};
\node at (-1.5, 4) (z1)	[label=above:$z_1$]	{$\gz$};
\node at (-1.5, 2) (z2) 	[label=below:$z_2$]	{$\gz$};
\node at (-3, 3) (yc) 	[label=left:$y_c$]	{$\gy$};
\node at (2, 3) (yv)	[label=below:$y$]  {$\gy$};
\node at ( 4, 3) (zc) 	[label=right:$z$]	{$\gz$};

\path
(r) edge [->] node[above] {$\rolestyle{Vertex}$} 	(z1)
(r) edge [->] node[below] {$\rolestyle{Vertex}$} 	(z2)
(r) edge [->] node[above] {$\rolestyle{Vertex}$} 	(yv)
(yv) edge [->] node[above] {$\rolestyle{HasCol}$} 	(zc)
(z1) edge [->] node[below] {$\rolestyle{Edge}$} 	(z2)
(z1) edge [->] node[above] {$\rolestyle{HasCol}$} 	(yc)
(z2) edge [->] node[below] {$\rolestyle{HasCol}$} 	(yc); 
\end{tikzpicture}
\caption{The rooted CCQ $q$, which is the conjunction of $q^{edge}$ (left part) and $q^{col}$ (right part).}
\label{rootedqueryconp}
\end{figure}

\noindent It is not hard to see that $(\ta_\emptyset, [3, +\infty]) \in \qIntervals{q}{\kb_\graph}$. 
Indeed, there are at least $9$ matches of $q$ in any model $\I$ of $\kb$, given by:
$$
z_1, z_2, y  \mapsto  \ind{a_v} \qquad
y_c  \mapsto  \ind{r} \,|\, \ind{g} \,|\, \ind{b}\qquad
z  \mapsto  \ind{r} \,| \ind{g} \,|\, \ind{b}
$$
These $9$ matches give rise to $3$ c-matches for $q$, corresponding to the three ways of mapping counting variable $z$. 
To complete the proof, we establish the following claim. \medskip 

\noindent \textbf{Claim.} $(\ta_\emptyset, [4, +\infty]) \in \qIntervals{q}{\kb_\graph} \Longleftrightarrow \graph \notin \tcol$.

\paragraph{$(\Rightarrow)$}
Assume $(\ta_\emptyset, [4, +\infty]) \in \qIntervals{q}{\kb_\graph}$, and take some possible coloring $\tau : \vertices \rightarrow \{ \ind{r}, \ind{g}, \ind{b} \}$ of the graph $\graph$. Let $\I^\graph_\tau$ be the model of $\kb_\graph$ whose domain is $\ainds(\Amc_\graph)$ and which interprets roles $\rolestyle{Vertex}$ and $\rolestyle{Edge}$ exactly following the ABox, and which interprets $\rolestyle{HasCol}$ according to $\tau$:
 $$\rolestyle{HasCol}^{\I^\graph_\tau}= \{(\ind{a_v}, \ind{r}), (\ind{a_v}, \ind{g}), (\ind{a_v}, \ind{b}) \} \cup \{(\ind{v},\tau(v)) \mid v \in \vertices \}$$
Intuitively, $\I_\tau$ is obtained from the canonical model by replacing the element $\ind{v}\rolestyle{HasCol}$ with $\tau(v)$.
 
By hypothesis, there is a fourth c-match $\match$ for $q$ in $\I^\graph_\tau$. It is easily verified that the additional match can only result from the atom $\rolestyle{Edge}(z_1, z_2)$ being mapped onto an edge $\rolestyle{Edge}(\ind{u_1}, \ind{u_2})$ that is different from $\rolestyle{Edge}(\ind{a_v}, \ind{a_v})$. From the definition of $\I^\graph_\tau$, this implies that the edge $(u_1, u_2)$ of $\graph$ is monochromatic, both vertices sharing the color $\match(y_c)$. Thus, $\tau$ is not a $3$-coloring. As this construction holds for any possible coloring $\tau$, we obtain $\graph \notin \tcol$.
 
\paragraph{$(\Leftarrow)$}
Assume $\graph \notin \tcol$, and take some model $\I$ of $\kb_\graph$. By Lemma \ref{cm-lemma}, there is a homomorphism $f : \canmodof{\kb_\graph} \rightarrow \I$ (which preserves individual names). Define $\tau: \vertices \rightarrow \domain{\I}$ as follows: $\tau(u)= f(\ind{u}\rolestyle{HasCol})$. There are two cases to consider:
\begin{itemize}
\item If there exists $u \in \vertices$ such that $\tau(u) \notin \{ \ind{r}, \ind{g}, \ind{b} \}$, then this provides a match of $q$ in $\I$ given by $z \mapsto \tau(u)$ and $y \mapsto \ind{u}^\I$, whose restriction to the counting variables is a new c-match.
\item Else, since $\graph \notin \tcol$, there exists an edge $(u_1, u_2) \in \edges$ such that $\tau(u_1) = \tau(u_2)$. It provides a new match given by: 
 $$
  z  \mapsto  \ind{r} \quad
 y \mapsto  \ind{a_v} \quad
 z_1 \mapsto  \ind{u_1} \quad
 z_2  \mapsto  \ind{u_2} \quad
  y_c  \mapsto  \tau(u_1) ~ ( = \tau(u_2)) 
 $$
\end{itemize}
%
In both cases, there is a fourth c-match for $q$. We thus obtain 
 $(\ta_\emptyset, [4, +\infty]) \in \qIntervals{q}{\kb_\graph}$.
\end{proof}

\begin{theoremrep}
\label{rootedconexphard}
In \dlh, rooted CCQ answering is $\coNEXP$-hard w.r.t.\ combined complexity.
\end{theoremrep}
\begin{proofsketch} The proof adapts a reduction from the exponential grid tiling problem (Lemma 18 from \cite{kostylevreutter:count}),
the key difference being the use 
of existential query variables to access (and count) the colors and bits. 
%
%
 \end{proofsketch}
 
\begin{proof}
\newcommand{\Ho}{\mathcal{H}} 
\newcommand{\V}{\mathcal{V}} 
\newcommand{\C}{\mathcal{C}} 
\newcommand{\Dir}{\mathcal{D}} 
The proof is by reduction from the exponential grid tiling problem ($\tiling$). We recall that an instance of this problem consists of a set $\C$ of colors, two relations $\Ho, \V \subseteq \C \times \C$ that give the horizontal and vertical tiling conditions, and a number $n$. The task is to decide whether there exists a valid $(\Ho, \V)$-tiling of an $2^n \times 2^n$ grid, i.e., a mapping $\tau:\{0, \dots, 2^n -1 \} \times \{0, \dots, 2^n -1 \} \mapsto \C$ such that  $(\tau(i,j), \tau(i+1,j)) \in \Ho$ for every $0 \leq i < 2^n-1$ and $(\tau(i,j), \tau(i,j+1)) \in \V$ for every $0 \leq j < 2^n-1$. 
%
%
%
%
In what follows, we consider an instance $(n, \C, \Ho, \V)$ be an instance of the $\tiling$ problem. 

To be able to test for the existence of a tiling of a $2^n \times 2^n$ grid, we must start by ensuring we can find such a grid in each model. Furthermore, we will need to detect horizontal and vertical adjacency in this grid, it is thus appropriate to use horizontal/vertical coordinates. To ensure a polynomial reduction, we need to use a binary encoding of these coordinates.
We start from a root $\ind{a}$ and an initial element $\ind{b}$ and use TBox axioms to build two witnesses to represent the two possible values for the $n^{th}$ bit of the horizontal coordinates:
$$\rolestyle{Roots}(\ind{a}, \ind{b}) \qquad \exists \rolestyle{Roots}^- \incl \exists \rolestyle{H}^n_0 \qquad \exists \rolestyle{Roots}^- \incl \exists \rolestyle{H}^n_1$$
We use further axioms to generate all possible horizontal coordinates, and we proceed similarly with the vertical coordinates, until we generate all possible pairs of coordinates. Concretely, we include the following axioms: 
$$\exists (\rolestyle{H}^i_b)^- \incl \exists \rolestyle{H}^{i-1}_{b'} \qquad \exists (\rolestyle{H}^1_b)^- \incl \exists \rolestyle{V}^n_{b'} \qquad \exists (\rolestyle{V}^i_b)^- \incl \exists \rolestyle{V}^{i-1}_{b'} \qquad \qquad \text{ for all } b,b' \in \{0,1\}, 1 <i \leq n$$
The preceding axioms will generate a binary tree of height $2n$ in the canonical model, whose leaves represent all possible grid positions. We use the following two axioms assign a color to each of the points representing a grid position:
$$\exists (\rolestyle{V}^1_0)^- \incl \exists \rolestyle{HasCol} \qquad \exists (\rolestyle{V}^1_1)^- \incl \exists \rolestyle{HasCol}$$
To help us compare positions, 
we will include the following TBox axioms, for all $b \in \{0,1\}$ and $1 \leq i \leq n$:
$$\exists (\rolestyle{H}_b^i)^- \incl \exists \rolestyle{HasBit}_b \qquad \exists (\rolestyle{V}_b^i)^- \incl \exists \rolestyle{HasBit}_b$$
We will also introduce a general role ($\rolestyle{HV}$) to more compactly navigate the tree:
$$\rolestyle{H}^i_b \incl \rolestyle{HV} \qquad \rolestyle{V}^i_b \incl \rolestyle{HV} \qquad  (b \in \{ 0, 1 \}, 1 \leq i \leq n)$$
This completes our description of the TBox.
We will finish our description of the ABox later in the proof, but it will be useful to know that it will contain an ABox individual $\ind{c}$ for every color $c \in \C$ and two ABox individuals ($\ind{one}, \ind{zero}$) to represent bits. 






Let us now define the query $q$. To keep track of the colors used in a candidate tiling, we will use the following subquery:
$$
q^{col} = \exists y_0^{col} \dots \exists y^{col}_{2n} \exists z^{col}\ \rolestyle{Roots}(\ind{a}, y_0^{col}) \wedge \bigwedge_{i = 0}^{2n-1} \rolestyle{HV}(y_i^{col}, y_{i+1}^{col}) \wedge \ \rolestyle{HasCol}(y_{2n}^{col}, z^{col})
$$
Observe that $z^{col}$ is the only counting variable. 
%
We also need to be able to detect if other bits than the intended ones ($\ind{one}, \ind{zero}$) are being used to satisfy the axioms $\rolestyle{H}_b^- \incl  \exists \rolestyle{HasBit}_b$ and $\rolestyle{V}_b^- \incl  \exists \rolestyle{HasBit}_b$. For this purpose, we will introduce the two following subqueries:
\begin{align*}
q^{0} = &\exists y_0^{0} \dots \exists y^{0}_{2n} \exists z^{0}\ \rolestyle{Roots}(\ind{a}, y_0^{0}) \wedge \bigwedge_{i = 0}^{2n-1} \rolestyle{HV}(y_i^{0}, y_{i+1}^{0}) \wedge \ \rolestyle{HasBit}_0(y_{2n}^{0}, z^{0}) \\
q^{1} = &\exists y_0^{1} \dots \exists y^{1}_{2n} \exists z^{1}\ \rolestyle{Roots}(\ind{a}, y_0^{1}) \wedge \bigwedge_{i = 0}^{2n-1} \rolestyle{HV}(y_i^{1}, y_{i+1}^{1}) \wedge \ \rolestyle{HasBit}_1(y_{2n}^{1}, z^{1})
\end{align*}
We note that each of the preceding queries has a single counting variable ($z^{0}$ or $z^{1}$). The axioms for $\rolestyle{HV}$ together with the construction of the ABox will ensure that every element used as a bit (i.e., in the second argument of $\rolestyle{HasBit}$) gives rise to a c-match for one of these two queries. 

We next discuss the parts of the query that are used to check the tiling conditions. To detect adjacency, we remark that two
grid positions $(h_1, v_1), (h_2, v_2) \in \{0, \dots, 2^n -1 \} \times \{0, \dots, 2^n -1 \}$ are vertically adjacent iff:
\begin{itemize}
\item $h_1 = h_2$, so the binary encodings of $h_1$ and $h_2$ are the same;
\item $v_2 = v_1 + 1$, so the binary encodings of $v_2$ and $v_1$ are the same until, at some point, $v_2$ ends with $1\cdot 0^k$ while $v_1$ ends with $0 \cdot 1^k$.
\end{itemize}
To detect a violation of the vertical tiling condition (i.e.\ two vertically adjacent tiles with colors $c$ and $c'$ such that $(c, c') \notin \V$), we need $n$ queries, one for each possible position where the bit from the vertical coordinates differ. 
For each $1 \leq k \leq n$, we create a subquery $q^{\V, (c, c'), k}$ defined as follows. Note that the variables in $q^{\V, (c, c'), k}$ all have the superscript $\cdot^{\V, (c, c'), k}$, which means they do not occur in any other subquery, but these superscripts are omitted in the definition for the sake of readability.
\begin{align*}
q^{\V, (c, c'), k} = & \, \exists z \exists y_{l, 1} \dots \exists y_{l, 2n} \exists y_{r, 1} \dots \exists y_{r, 2n}  \exists y_{s, 1} \dots \exists y_{s, n+k} \\
&\,  \rolestyle{Roots}(\ind{a}, z) \wedge \rolestyle{HV}(z, y_{l, 1}) \wedge \rolestyle{HV}(z, y_{r, 1}) \wedge \left( \bigwedge_{i = 1}^{2n-1}  \rolestyle{HV}(y_{l, i}, y_{l, i+1}) \wedge \rolestyle{HV}(y_{r, i}, y_{r, i+1}) \right)   \\
& \, \wedge \rolestyle{HasCol}(y_{l, 2n}, \ind{c}) \wedge \rolestyle{HasCol}(y_{r, 2n}, \ind{c'}) \wedge \left( \bigwedge_{i = 1}^{n+k-1}  \rolestyle{HasBit}(y_{l, i}, y_{s, i}) \wedge \rolestyle{HasBit}(y_{r, i}, y_{s, i}) \right) \\
&\,  \wedge \rolestyle{HasBit}(y_{l, n+k}, \ind{zero}) \wedge \rolestyle{HasBit}(y_{r, n+k}, \ind{one}) \wedge \left( \bigwedge_{i = n+k+1}^{2n}  \rolestyle{HasBit}(y_{l, i}, \ind{one}) \wedge \rolestyle{HasBit}(y_{r, i}, \ind{zero}) \right) \\
\end{align*}
Note that $z$ is the only counting variable of $q^{\V, (c, c'), k}$. We can similarly define a set of subqueries $q^{\Ho, (c, c'), k}$ ($1 \leq k \leq n$) that detect violations of the horizontal tiling conditions. 

Finally, we let $q$ be the conjunction of the all of the preceding subqueries. It is displayed in Figure~\ref{conexp_rooted_query}. The set of counting variables of $q$ is the union of the counting variables of its subqueries. We observe that $q$ is rooted, as it has a single connected component which contains the individual $\ind{a}$. 

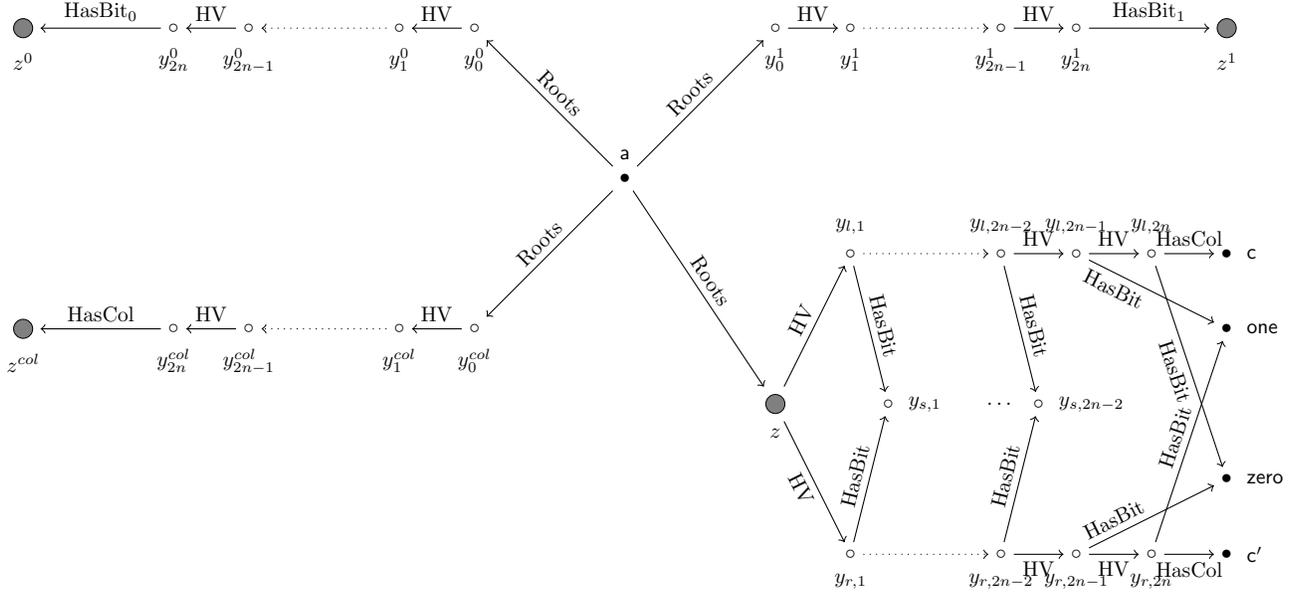
\begin{figure}[!h]
\centering
\begin{tikzpicture}
\tikzset{every node/.style={scale=.8, sloped}}
\node at ( 0, 0) (a) [label=above:$\ind{a}$] {$\gx$};

\node at (-2, 2) (y00) [label=below:$y^0_0$] {$\gy$};
\node at (-3, 2) (y01) [label=below:$y^0_1$] {$\gy$};
\node at (-5, 2) (y0n) [label=below:$y^0_{2n-1}$] {$\gy$};
\node at (-6, 2) (y0nn) [label=below:$y^0_{2n}$] {$\gy$};
\node at (-8, 2) (z0) [label=below:$z^0$] {$\gz$};

\node at ( 2, 2) (y10) [label=below:$y^1_0$] {$\gy$};
\node at ( 3, 2) (y11) [label=below:$y^1_1$] {$\gy$};
\node at ( 5, 2) (y1n) [label=below:$y^1_{2n-1}$] {$\gy$};
\node at ( 6, 2) (y1nn) [label=below:$y^1_{2n}$] {$\gy$};
\node at ( 8, 2) (z1) [label=below:$z^1$] {$\gz$};

\node at (-2,-2) (yc0) [label=below:$y^{col}_0$] {$\gy$};
\node at (-3,-2) (yc1) [label=below:$y^{col}_1$] {$\gy$};
\node at (-5,-2) (ycn) [label=below:$y^{col}_{2n-1}$] {$\gy$};
\node at (-6,-2) (ycnn) [label=below:$y^{col}_{2n}$] {$\gy$};
\node at (-8,-2) (zc) [label=below:$z^{col}$] {$\gz$};

\node at ( 2, -3) (z) [label=below:$z$] {$\gz$};

\node at ( 3,-1) (yl1) [label=above:$y_{l,1}$] {$\gy$};
\node at ( 5,-1) (yln) [label=above:$y_{l, 2n-2}$] {$\gy$};
\node at ( 6,-1) (ylnn) [label=above:$y_{l, 2n-1}$] {$\gy$};
\node at ( 7,-1) (ylnnn) [label=above:$y_{l, 2n}$] {$\gy$};

\node at ( 3,-5) (yr1) [label=below:$y_{r, 1}$] {$\gy$};
\node at ( 5,-5) (yrn) [label=below:$y_{r, 2n-2}$] {$\gy$};
\node at ( 6,-5) (yrnn) [label=below:$y_{r, 2n-1}$] {$\gy$};
\node at ( 7,-5) (yrnnn) [label=below:$y_{r, 2n}$] {$\gy$};

\node at ( 8,-1) (c) [label=right:$\ind{c}$] {$\gx$};
\node at ( 8,-5) (cc) [label=right:$\ind{c'}$] {$\gx$};

\node at ( 3.5,-3) (y1) [label=right:$y_{s, 1}$] {$\gy$};
\node at ( 5,-3) (yi) [] {$\dots$};
\node at ( 5.5,-3) (yn) [label=right:$y_{s, 2n-2}$] {$\gy$};
\node at ( 8,-2) (o) [label=right:$\ind{one}$] {$\gx$};
\node at ( 8,-4) (i) [label=right:$\ind{zero}$] {$\gx$};

\path[every node/.style={scale=.8, sloped}]
(a) edge [->] node[above] {$\rolestyle{Roots}$} (y00)
(y00) edge [->] node[above] {$\rolestyle{HV}$} (y01)
(y01) edge [->, dotted] (y0n)
(y0n) edge [->] node[above] {$\rolestyle{HV}$} (y0nn)
(y0nn) edge [->] node[above] {$\rolestyle{HasBit_0}$} (z0)

(a) edge [->] node[above] {$\rolestyle{Roots}$} (y10)
(y10) edge [->] node[above] {$\rolestyle{HV}$} (y11)
(y11) edge [->, dotted] (y1n)
(y1n) edge [->] node[above] {$\rolestyle{HV}$} (y1nn)
(y1nn) edge [->] node[above] {$\rolestyle{HasBit_1}$} (z1)

(a) edge [->] node[above] {$\rolestyle{Roots}$} (yc0)
(yc0) edge [->] node[above] {$\rolestyle{HV}$} (yc1)
(yc1) edge [->, dotted] (ycn)
(ycn) edge [->] node[above] {$\rolestyle{HV}$} (ycnn)
(ycnn) edge [->] node[above] {$\rolestyle{HasCol}$} (zc)

(a) edge [->] node[above] {$\rolestyle{Roots}$} (z)

(z) edge [->] node[above] {$\rolestyle{HV}$} (yl1)
(yl1) edge [->, dotted] (yln)
(yln) edge [->] node[above] {$\rolestyle{HV}$} (ylnn)
(ylnn) edge [->] node[above] {$\rolestyle{HV}$} (ylnnn)
(ylnnn) edge [->] node[above] {$\rolestyle{HasCol}$} (c)

(yl1) edge [->] node[above] {$\rolestyle{HasBit}$} (y1)
(yln) edge [->] node[above] {$\rolestyle{HasBit}$} (yn)
(ylnn) edge [->] node[below, near start] {$\rolestyle{HasBit}$} (o)
(ylnnn) edge [->] node[below] {$\rolestyle{HasBit}$} (i)

(z) edge [->] node[below] {$\rolestyle{HV}$} (yr1)
(yr1) edge [->, dotted] (yrn)
(yrn) edge [->] node[below] {$\rolestyle{HV}$} (yrnn)
(yrnn) edge [->] node[below] {$\rolestyle{HV}$} (yrnnn)
(yrnnn) edge [->] node[below] {$\rolestyle{HasCol}$} (cc)

(yr1) edge [->] node[above] {$\rolestyle{HasBit}$} (y1)
(yrn) edge [->] node[above] {$\rolestyle{HasBit}$} (yn)
(yrnn) edge [->] node[above, near start] {$\rolestyle{HasBit}$} (i)
(yrnnn) edge [->] node[above] {$\rolestyle{HasBit}$} (o)

;
\end{tikzpicture}
\caption{A part of the rooted query $q$, being the conjunction of $q^0$ (above left), $q^1$ (above right), $q^{col}$ (below left), several $q^{\V, (c, c'), k}$ (below right, only one is depicted with omitted superscripts), and several $q^{\Ho, (c, c'), k}$ (none is depicted).}
\label{conexp_rooted_query}
\end{figure}


We can now define the ABox, which introduces individuals for the intended colors and bits and a further individual $\ind{d}$ that serves to ensure that all parts of the query can be matched: 
\begin{align*}
\abox =
& \,\{ \rolestyle{Roots}(\ind{a}, \ind{b}), \rolestyle{Roots}(\ind{a}, \ind{d}), \rolestyle{HV}(\ind{b}, \ind{b}), \rolestyle{HasBit}_0(\ind{d}, \ind{zero}), \rolestyle{HasBit}_1(\ind{d}, \ind{one}) \} \\
&\, \cup  \{ \rolestyle{H}_b^k(\ind{d}, \ind{d}) \suchthat b \in \{ 0, 1 \}, k = 1, \dots n \} \cup  \{ \rolestyle{V}_b^k(\ind{d}, \ind{d}) \suchthat b \in \{ 0, 1 \}, k = 1, \dots n \} \\
&\, \cup \{ \rolestyle{HasCol}(\ind{d}, \ind{c}) \suchthat c \in \C\}.
\end{align*}

\begin{figure}[!h]
\centering
\begin{tikzpicture}
\tikzset{every node/.style={scale=.8, sloped}}
\node at ( 1, 8) (a) [label=above:$\ind{a}$] {$\bullet$};
\node at (-8, 8) (b) [label=below:$\ind{b}$] {$\bullet$};
\node at (-9, 7) (ho) [] {$\circ$};
\node at (-7, 7) (hi) [] {$\circ$};
\node at (-9.5, 6) (hoo) [] {};
\node at (-8.5, 6) (hoi) [] {};
\node at (-7.5, 6) (hio) [] {};
\node at (-6.5, 6) (hii) [] {};
\node at (-6, 5) (hn) [] {$\circ$};
\node at (-7, 4) (hno) [] {$\circ$};
\node at (-5, 4) (hni) [] {$\circ$};
\node at (-5.5, 3) (vo) [] {$\circ$};
\node at (-4.5, 3) (vi) [] {$\circ$};
\node at (-5.75, 2) (voo) [] {};
\node at (-5.25, 2) (voi) [] {};
\node at (-4.75, 2) (vio) [] {};
\node at (-4.25, 2) (vii) [] {};
\node at (-4, 1) (vn) [] {$\circ$};
\node at (-5, 0) (vno) [] {$\circ$};
\node at (-3, 0) (vni) [] {$\circ$};

\node at (-10,6.5) (hoc) [] {$\circ$};
\node at (-8, 3.5) (hnoc) [] {$\circ$};
\node at (-6.5, 2.5) (voc) [] {$\circ$};
\node at (-6,-0.5) (vnoc) [] {$\circ$};

\node at (-6, 6.5) (hic) [] {$\circ$};
\node at (-4, 3.5) (hnic) [] {$\circ$};
\node at (-3.5, 2.5) (vic) [] {$\circ$};
\node at (-2,-0.5) (vnic) [] {$\circ$};

\node at (-5,-1) (vnot) [] {$\circ$};
\node at (-3,-1) (vnit) [] {$\circ$};

\node at (-12, 3) (o) [label=left:$\ind{zero}$] {$\bullet$};
\node at (-1, 3) (i) [label=right:$\ind{one}$] {$\bullet$};
\node at ( 1, 0) (d) [label=left:$\ind{d}$] {$\bullet$};
\node at (-6,-3) (c1) [label=below:$\ind{c_1}$] {$\bullet$};
\node at (-4, -3) (ci) [] {$\dots$};
\node at ( -2,-3) (cp) [label=below:$\ind{c_p}$] {$\bullet$};

\path[every node/.style={scale=.8, sloped}]
(a) edge [->] node[above] {$\rolestyle{Roots}$} (b)
(a) edge [->] node[above] {$\rolestyle{Roots}$} (d)
(b) edge [loop above] node[above] {$\rolestyle{HV}$} (b)

(b) edge [->] node[above] {$\rolestyle{H^n_0}$}		(ho)
(b) edge [->] node[above] {$\rolestyle{H^n_1}$} 	(hi)
(ho) edge [->,dotted] node[above] {}	(hoo)
(ho) edge [->,dotted] node[above] {} 	(hoi)
(hi) edge [->,dotted] node[above] {}	(hio)
(hi) edge [->,dotted] node[above] {} 	(hii)
(hn) edge [->] node[above] {$\rolestyle{H^0_0}$}	(hno)
(hn) edge [->] node[above] {$\rolestyle{H^0_1}$} 	(hni)
(hno) edge [->,dotted] node[above] {}	(-7.5, 3)
(hno) edge [->,dotted] node[above] {} 	(-6.5, 3)
(hni) edge [->] node[above, near end] {$\rolestyle{V^n_0}$}	(vo)
(hni) edge [->] node[above, near end] {$\rolestyle{V^n_1}$} 	(vi)
(vo) edge [->,dotted] node[above] {}	(voo)
(vo) edge [->,dotted] node[above] {} 	(voi)
(vi) edge [->,dotted] node[above] {}	(vio)
(vi) edge [->,dotted] node[above] {} 	(vii)
(vn) edge [->] node[above] {$\rolestyle{V^0_0}$}	(vno)
(vn) edge [->] node[above] {$\rolestyle{V^0_1}$} 	(vni)

(hii) edge [->,dotted] node[above] {} 	(hn)
(vii) edge [->,dotted] node[above] {} 	(vn)

(vni) edge[->] node[below] {$\rolestyle{HasCol}$} (vnit)
(vno) edge[->] node[above] {$\rolestyle{HasCol}$} (vnot)

(d) edge [loop right] node[above] {$\rolestyle{H_b^k}, \rolestyle{V_b^k}$} (d)

(d) edge[->] node[above, near end] {$\rolestyle{HasCol}$} (c1)
(d) edge[->] node[below, near end] {$\rolestyle{HasCol}$} (cp)

(d) edge [->] node[above] {$\rolestyle{HasBit_1}$}		(i)

(ho) edge [->] node[above] {$\rolestyle{HasBit_0}$}		(hoc)
(hno) edge [->] node[above] {$\rolestyle{HasBit_0}$}		(hnoc)
(vo) edge [->] node[above] {$\rolestyle{HasBit_0}$}		(voc)
(vno) edge [->] node[above] {$\rolestyle{HasBit_0}$}		(vnoc)

(hi) edge [->] node[above] {$\rolestyle{HasBit_1}$}		(hic)
(hni) edge [->] node[above] {$\rolestyle{HasBit_1}$}		(hnic)
(vi) edge [->] node[above] {$\rolestyle{HasBit_1}$}		(vic)
(vni) edge [->] node[above] {$\rolestyle{HasBit_1}$}		(vnic)
;

\draw[rounded corners, ->] (d) -- ( 1, -4) -- node [below] {$\rolestyle{HasBit_0}$} ( -12 ,-4)  -- (o) ;

\end{tikzpicture}
\caption{A part of the canonical model $\canmod$.}
\label{conexp_rooted_canmod}
\end{figure}

Let $p =  |\C|$, and let $\Kmc$ be the KB with the preceding TBox and ABox. A part of the canonical model $\canmod$ is displayed in Figure~\ref{conexp_rooted_canmod}. To complete the proof, it suffices to establish the following claim:

\paragraph*{Claim} $(\emptyset, [p+1, +\infty]) \in \qIntervals{q}{\kb} \Longleftrightarrow (n, \C, \Ho, \V) \notin \tiling.$\smallskip\\
The proof of this claim is similar in spirit to the proof of Theorem~\ref{rootedconphard}. First observe that there are always at least $p$ c-matches given by mapping the counting variables as follows:
$$ z^{col} \mapsto \ind{c}_1 \suchthat \dots \suchthat \ind{c}_p \qquad z^0 \mapsto \ind{zero} \qquad z^1 \mapsto \ind{one} \qquad z^{\Ho, (c, c'), k}, z^{\V, (c, c'), k}  \mapsto \ind{d}$$
and mapping all of the existential variables to $\ind{d}$. 
 
\paragraph{$(\Rightarrow)$}
Assume $(\emptyset, [p+1, +\infty]) \in \qIntervals{q}{\kb}$, and take some potential tiling $\tau : \{0, \dots 2^n-1 \} \times \{0, \dots 2^n-1 \} \rightarrow \{ \ind{c} \suchthat c \in \C \}$. Let $\I_\tau$ be the model of $\kb$ that is obtained from $\canmod$  as follows:
\begin{itemize}
\item $\Delta^{\Imc_\tau}$ contains all elements from $\domain{\canmod}$ except those anonymous elements whose last symbol is
$\rolestyle{HasCol}$, $\rolestyle{HasBit}_0$, or $\rolestyle{HasBit}_1$ (i.e.\ witnesses for axioms involving $\exists \rolestyle{HasCol}$, $\exists \rolestyle{HasBit}_0$, or $\exists \rolestyle{HasBit}_1$);
 \item the roles  $\rolestyle{HasCol}$, $\rolestyle{HasBit}_0$, $\rolestyle{HasBit}_1$ are interpreted as follows:
\begin{align*}
\rolestyle{HasBit}_0^{\I_\tau} := & \{ (\ind{d}, \ind{zero}) \} \cup \{ (\ind{b} \rolestyle{H}^n_{h_n} \dots \rolestyle{H}^k_{h_k} \rolestyle{H}^{k-1}_0, \ind{zero}) \suchthat h_n, \dots h_k \in \{ 0, 1 \}, k = 1, \dots n+1 \}  \\
& \cup \{ (\ind{b} \rolestyle{H}^n_{h_n} \dots \rolestyle{H}^1_{h_1} \rolestyle{V}^n_{v_n} \dots \rolestyle{V}^k_{v_k} \rolestyle{V}^{k-1}_0, \ind{zero}) \suchthat h_n, \dots h_1, v_n, \dots v_k \in \{ 0, 1 \}, k = 0, \dots n+1 \} \\
\rolestyle{HasBit}_1^{\I_\tau} := & \{ (\ind{d}, \ind{one}) \} \cup \{ (\ind{b} \rolestyle{H}^n_{h_n} \dots \rolestyle{H}^k_{h_k} \rolestyle{H}^{k-1}_1, \ind{one}) \suchthat h_n, \dots h_k \in \{ 0, 1 \}, k = 1, \dots n+1 \}  \\
& \cup \{ (\ind{b} \rolestyle{H}^n_{h_n} \dots \rolestyle{H}^1_{h_1} \rolestyle{V}^n_{v_n} \dots \rolestyle{V}^k_{v_k} \rolestyle{V}^{k-1}_1 , \ind{one}) \suchthat h_n, \dots h_1, v_n, \dots v_k \in \{ 0, 1 \}, k = 0, \dots n+1 \} \\
\rolestyle{HasCol}^{\I_\tau} := & \{ (\ind{d}, \ind{c}_k) \suchthat k = 1, \dots p \} \\
& \cup  \{ (\ind{b} \rolestyle{H}^n_{h_n} \dots \rolestyle{H}^1_{h_1} \rolestyle{V}^n_{v_n} \dots \rolestyle{V}^1_{v_1}, \tau(h_n \dots h_1, v_n \dots v_1) \suchthat h_n, \dots h_1, v_n, \dots v_1 \in \{ 0, 1 \} \}
\end{align*}
where by a slight abuse of notation, we use $\tau(h_n \dots h_1, v_n \dots v_1)$ to mean $\tau(h,v)$, with $h$ and $v$ the numbers whose binary encodings are $h_n \dots h_1$ and $v_n \dots v_1$ respectively;
\item the remaining roles are interpreted exactly as in $\canmod$.
\end{itemize}

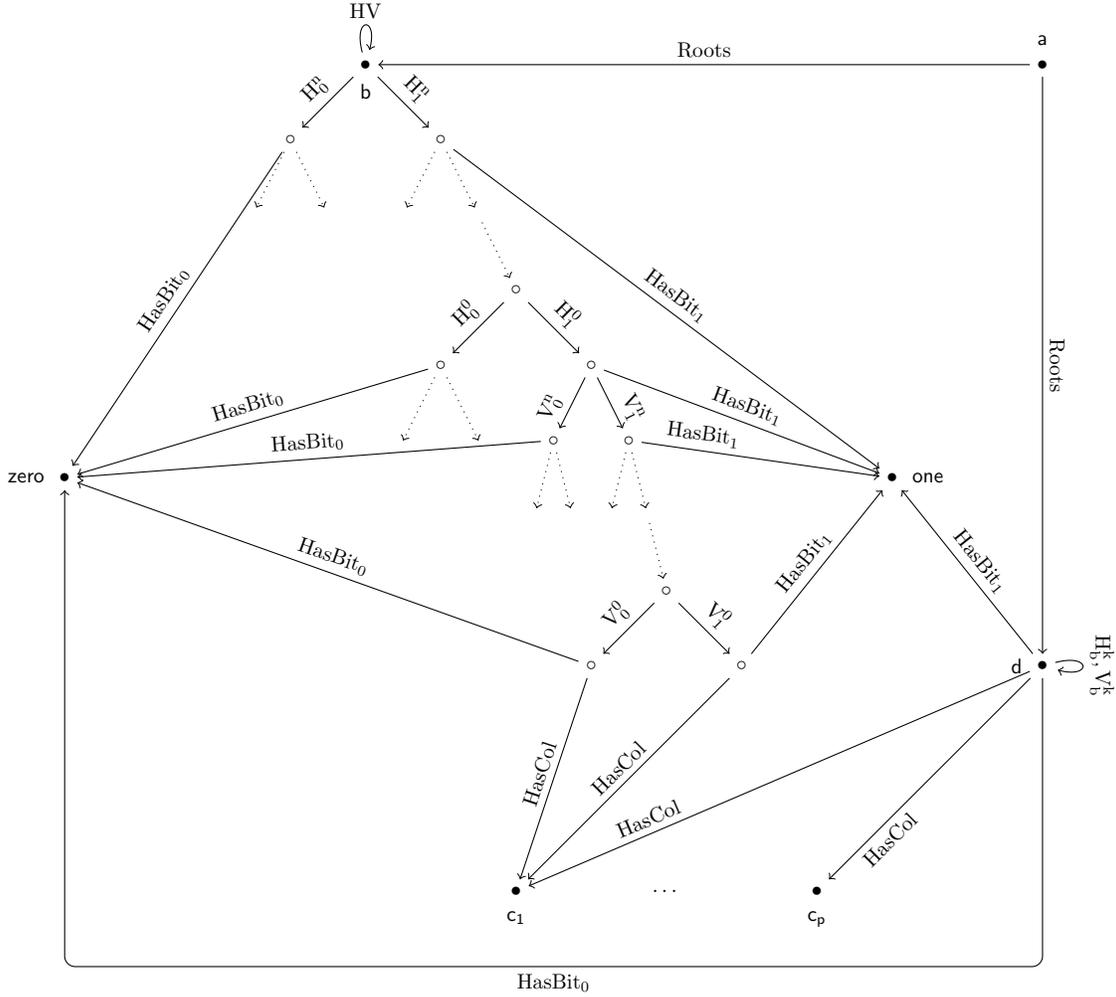
\begin{figure}[!h]
\centering
\begin{tikzpicture}
\tikzset{every node/.style={scale=.8, sloped}}
\node at ( 1, 8) (a) [label=above:$\ind{a}$] {$\bullet$};
\node at (-8, 8) (b) [label=below:$\ind{b}$] {$\bullet$};
\node at (-9, 7) (ho) [] {$\circ$};
\node at (-7, 7) (hi) [] {$\circ$};
\node at (-9.5, 6) (hoo) [] {};
\node at (-8.5, 6) (hoi) [] {};
\node at (-7.5, 6) (hio) [] {};
\node at (-6.5, 6) (hii) [] {};
\node at (-6, 5) (hn) [] {$\circ$};
\node at (-7, 4) (hno) [] {$\circ$};
\node at (-5, 4) (hni) [] {$\circ$};
\node at (-5.5, 3) (vo) [] {$\circ$};
\node at (-4.5, 3) (vi) [] {$\circ$};
\node at (-5.75, 2) (voo) [] {};
\node at (-5.25, 2) (voi) [] {};
\node at (-4.75, 2) (vio) [] {};
\node at (-4.25, 2) (vii) [] {};
\node at (-4, 1) (vn) [] {$\circ$};
\node at (-5, 0) (vno) [] {$\circ$};
\node at (-3, 0) (vni) [] {$\circ$};

\node at (-12, 2.5) (o) [label=left:$\ind{zero}$] {$\bullet$};
\node at (-1, 2.5) (i) [label=right:$\ind{one}$] {$\bullet$};

\node at ( 1, 0) (d) [label=left:$\ind{d}$] {$\bullet$};
\node at (-6,-3) (c1) [label=below:$\ind{c_1}$] {$\bullet$};
\node at (-4, -3) (ci) [] {$\dots$};
\node at ( -2,-3) (cp) [label=below:$\ind{c_p}$] {$\bullet$};

\path[every node/.style={scale=.8, sloped}]
(a) edge [->] node[above] {$\rolestyle{Roots}$} (b)
(a) edge [->] node[above] {$\rolestyle{Roots}$} (d)
(b) edge [loop above] node[above] {$\rolestyle{HV}$} (b)

(b) edge [->] node[above] {$\rolestyle{H^n_0}$}		(ho)
(b) edge [->] node[above] {$\rolestyle{H^n_1}$} 	(hi)
(ho) edge [->,dotted] node[above] {}	(hoo)
(ho) edge [->,dotted] node[above] {} 	(hoi)
(hi) edge [->,dotted] node[above] {}	(hio)
(hi) edge [->,dotted] node[above] {} 	(hii)
(hn) edge [->] node[above] {$\rolestyle{H^0_0}$}	(hno)
(hn) edge [->] node[above] {$\rolestyle{H^0_1}$} 	(hni)
(hno) edge [->,dotted] node[above] {}	(-7.5, 3)
(hno) edge [->,dotted] node[above] {} 	(-6.5, 3)
(hni) edge [->] node[above, near end] {$\rolestyle{V^n_0}$}	(vo)
(hni) edge [->] node[above, near end] {$\rolestyle{V^n_1}$} 	(vi)
(vo) edge [->,dotted] node[above] {}	(voo)
(vo) edge [->,dotted] node[above] {} 	(voi)
(vi) edge [->,dotted] node[above] {}	(vio)
(vi) edge [->,dotted] node[above] {} 	(vii)
(vn) edge [->] node[above] {$\rolestyle{V^0_0}$}	(vno)
(vn) edge [->] node[above] {$\rolestyle{V^0_1}$} 	(vni)

(hii) edge [->,dotted] node[above] {} 	(hn)
(vii) edge [->,dotted] node[above] {} 	(vn)

(vni) edge[->] node[above] {$\rolestyle{HasCol}$} (c1)
(vno) edge[->] node[above] {$\rolestyle{HasCol}$} (c1)

(d) edge [loop right] node[above] {$\rolestyle{H_b^k}, \rolestyle{V_b^k}$} (d)

(d) edge[->] node[above, near end] {$\rolestyle{HasCol}$} (c1)
(d) edge[->] node[below, near end] {$\rolestyle{HasCol}$} (cp)

(d) edge [->] node[above] {$\rolestyle{HasBit_1}$}		(i)

(ho) edge [->] node[above] {$\rolestyle{HasBit_0}$}		(o)
(hno) edge [->] node[above] {$\rolestyle{HasBit_0}$}		(o)
(vo) edge [->] node[above] {$\rolestyle{HasBit_0}$}		(o)
(vno) edge [->] node[above] {$\rolestyle{HasBit_0}$}		(o)

(hi) edge [->] node[above] {$\rolestyle{HasBit_1}$}		(i)
(hni) edge [->] node[above] {$\rolestyle{HasBit_1}$}	(i)
(vi) edge [->] node[above, near start] {$\rolestyle{HasBit_1}$}		(i)
(vni) edge [->] node[above] {$\rolestyle{HasBit_1}$}	(i)
;

\draw[rounded corners, ->] (d) -- ( 1, -4) -- node [below] {$\rolestyle{HasBit_0}$} ( -12 ,-4)  -- (o) ;

\end{tikzpicture}
\caption{A part of $\I_\tau$ with $\tau(2^n -1, 2^n-2) = \tau(2^n-1, 2^n-1) = c_1$.}
\label{conexp_rooted_example}
\end{figure}

The model $\I_\tau$ is displayed in Figure~\ref{conexp_rooted_example}.
By hypothesis, there is an additional c-match $\match$ for $q$ in $\I_\tau$. It is easily verified  that the additional match can only result from an atom $\rolestyle{Roots}(\ind{a}, z^{\Dir, (c, c'), k})$, with $\Dir \in \{ \Ho, \V \}$, $(c, c') \in (\C \times \C) \setminus \Dir$ and $k \in \{ 1, ... n \}$, being mapped onto $\rolestyle{Edge}(\ind{a}, \ind{b})$. From the definition of $\I_\tau$, this implies that there are two horizontally (or vertically) adjacent tiles, which positions are given by the elements $\match(y_{l, 2n}^{\Dir, (c, c'), k})$ and $\match(y_{r, 2n}^{\Dir, (c, c'), k})$, whose respective colors $c$ and $c'$ violate $\Dir$. Thus, $\tau$ is not an $(\Ho, \V)$-tiling. As this construction holds for any possible tiling $\tau$, we can infer that $(n, \C, \Ho, \V) \notin \tiling$.

\paragraph{$(\Leftarrow)$}
Assume $(n, \C, \Ho, \V) \notin \tiling$, and take some model $\I$ of $\kb$. By Lemma \ref{cm-lemma}, there is a homomorphism $f : \canmodof{\kb} \rightarrow \I$. 
Define $\tau: \{0, \dots, 2^n-1 \} \times \{0, \dots n \} \rightarrow \domain{\I}$ as follows: $\tau(h_n\dots h_1, v_n \dots v_1) := f(\ind{b} \rolestyle{H}^n_{h_n} \dots \rolestyle{H}^1_{h_1} \rolestyle{V}^n_{v_n} \dots \rolestyle{V}^1_{v_1}\rolestyle{HasCol})$ (again slightly abusing notation by applying working with binary encodings of numbers). There are three cases to consider:
\begin{itemize}
\item If there exists $(h_n\dots h_1, v_n \dots v_1)$ such that $\tau(h_n\dots h_1, v_n \dots v_1) \notin \{ \ind{c} \mid c \in \C \}$, then this provides a match of $q$ in $\I$ in which the subquery $q^{col}$ is mapped as follows:
$$z^{col} \mapsto \tau(h_n\dots h_1, v_n \dots v_1) \qquad y^{col}_0 \mapsto \ind{b} \qquad y^{col}_1 \mapsto f(\ind{b} \rolestyle{H}^n_{h_n}) \qquad \dots \qquad y^{col}_{2n} \mapsto f(\ind{b} \rolestyle{H}^n_{h_n} \dots \rolestyle{H}^1_{h_1} \rolestyle{V}^n_{v_n} \dots \rolestyle{V}^1_{v_1}),$$
and whose restriction to the counting variables provides a new c-match.
\item Otherwise, suppose there exists an element that is in the range of $\rolestyle{HasBit}_0$ that is not $\ind{zero}$, or that is in the range of $\rolestyle{HasBit}_1$ but not equal to $\ind{one}$, then this also provides a new c-match of $q$, in which either $z^{0}$ or $z^{1}$ is mapped to this element. Note that this kind of `error' may occur at any level of the tree of positions. This is why we included the ABox assertion $\rolestyle{HV}(\ind{b}, \ind{b})$, which makes it possible for us to loop as long as needed in order to obtain a sufficiently long chain of $\rolestyle{HV}$ to satisfy the query $q^0$ or $q^1$. 
\item Else, since $(n, \C, \Ho, \V) \notin \tiling$, there exist two adjacent positions with coordinates $(h_n\dots h_1, v_n \dots v_1)$ and $(h'_n\dots h'_1, v'_n \dots v'_1)$ such that $(\tau(h_n\dots h_1, v_n \dots v_1), \tau(h'_n\dots h'_1, v'_n \dots v'_1)) \in (\C \times \C) \setminus \Dir$, for $\Dir$ either $\Ho$ or $\V$. Letting $k$ be the bit from which the encoding of the non-$\Dir$ coordinate differs, we obtain a new c-match for $q$, in which the subquery $q^{\Dir, (\tau(h_n\dots h_1, v_n \dots v_1), \tau(h'_n\dots h'_1, v'_n \dots v'_1)), k)}$ is satisfied by mapping $z^{\Dir, (\tau(h_n\dots h_1, v_n \dots v_1), \tau(h'_n\dots h'_1, v'_n \dots v'_1)), k)}$  to  $\ind{b}$. 
\end{itemize}
In every case, there is an additional c-match for $q$. We thus obtain 
 $(\ta_\emptyset, [p+1, +\infty]) \in \qIntervals{q}{\kb}$.
\end{proof}

\section{\Exhaustive Rooted Counting CQs}
We have seen in Section \ref{sec:rooted} that the rootedness restriction is not by itself sufficient to lower the complexity of CCQ answering, whereas imposing both rootedness and exhaustiveness can sometimes yield better results. 
This motivates us to take a closer look at the case of \exhaustive rooted CCQs. The emerging complexity landscape is summarized in Table \ref{resstrongly}. 

Note that exhaustive CCQs constitute a very natural form of counting query, which ask for the number of different query matches for a given answer tuple. The query $q_2$ from Example~\ref{exqueries} is an \exhaustive rooted CCQ. 


%
%



\subsection{\Exhaustive Rooted CCQs in \dlc}\label{srooted-core}
We first consider \dlc\ KBs and pinpoint the precise combined complexity, 
which had not yet been considered. 


An essential ingredient is the following result that shows that it is possible to 
focus on query matches in the canonical model. 
It can be obtained by adapting a similar result 
about canonical bag interpretations \cite{nikolauetal:bag}. 

\begin{table}[t]
\centering
\begin{tabular}{lcc}
 & \textbf{Data}      & \textbf{Combined}		\\
\midrule
\dlc & \bf $\TC^0$-c   	 & \bf $\PP$-c \\
\dlh & \bf $\coNP$-c  & $\PIP$\textbf{-h}, $\PP$\textbf{-h} \& in $\coNEXP$ \\
\bottomrule
\end{tabular}
\caption{Complexity results for \exhaustive rooted CCQs}
\label{resstrongly}
\end{table}


%
%
\begin{toappendix}
We start by recalling an important property of the canonical model construction for $\DLc$ KBs.
\begin{lemma}
\label{canmodunique}
For any role $R \in \rni$ and anonymous element $d_1$ in the canonical model $\canmod$ of a $\DLc$ KB $\kb$, there is at most one element $d_2 \in \canmod$ such that $(d_1, d_2) \in R^{\canmod}$.
\end{lemma}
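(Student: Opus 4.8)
The plan is to exploit two features of the \dlc\ canonical model: the anonymous part of $\canmod$ is a forest, and \dlc\ TBoxes can entail no non-trivial role inclusion. First I would fix the anonymous element $d_1$, which by construction has the form $d_1 = a R_1 \cdots R_n$ with $a \in \ainds(\Amc)$ and $n \geq 1$; I write $R_n$ for its final role and $e := a R_1 \cdots R_{n-1}$ for its unique predecessor (taking $e = a$ if $n = 1$). Since every role edge touching an anonymous element of $\canmod$ links a word to a one-letter extension of it, any $d_2$ with $(d_1, d_2) \in R^{\canmod}$ must be either a \emph{child} $d_1 S \in \Delta^{\canmod}$ of $d_1$ or the parent $e$.

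The next step is to pin down the label of each potential edge. I would use that in \dlc\ we have $\Tmc \models S \sqsubseteq T$ iff $S = T$ (as \dlc\ has no role inclusion axioms), so that the definition of $\cdot^{\canmod}$ gives $(d_1, d_1 S) \in R^{\canmod}$ iff $S = R$, and $(d_1, e) \in R^{\canmod}$ iff $R = R_n^-$. Consequently there is at most one candidate child, namely $d_2 = d_1 R$ (present only if $d_1 R \in \Delta^{\canmod}$), and at most one candidate parent, namely $d_2 = e$ (relevant only when $R = R_n^-$).

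The heart of the argument is then to show these two candidates cannot both arise for the same $R$. Here I would invoke the ``no immediate backtracking'' condition $R_i^- \neq R_{i+1}$ from the definition of $\Delta^{\canmod}$: if the parent is an $R$-successor, i.e.\ $R = R_n^-$, then the word $d_1 R = d_1 R_n^-$ violates this condition (it would require $R_n^- \neq R_n^-$), so no child contributes; and if $R \neq R_n^-$, the parent contributes nothing, leaving only the unique child $d_1 R$. In either case at most one $d_2$ satisfies $(d_1, d_2) \in R^{\canmod}$, which is the claim.

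The step I expect to require the most care is the bookkeeping of inverse roles when unfolding the definition of $\cdot^{\canmod}$ for both $R = P$ and $R = P^-$, and making precise that the no-backtracking condition is exactly what enforces mutual exclusion of the forward and backward cases. It is worth stressing that the \dlc\ restriction is essential: without it, \dlh\ role inclusions can label a single tree edge with several distinct roles, so the analogous statement fails in \dlh.
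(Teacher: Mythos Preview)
Your proposal is correct and follows essentially the same approach as the paper: both arguments observe that an $R$-neighbour of an anonymous element must be either its parent or a child, use the absence of role inclusions in \dlc\ to force the edge label to be exactly $R$, and then invoke the no-backtracking condition $R_i^- \neq R_{i+1}$ in the definition of $\Delta^{\canmod}$ to rule out the coexistence of both cases. Your write-up is slightly more explicit in naming the word decomposition $d_1 = a R_1 \cdots R_n$ and its predecessor, but the content is identical.
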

\begin{proof} We provide a proof for the sake of completeness.
From the definition of $R^{\canmod}$, if $d_1$ is an anonymous domain element and $(d_1, d_2) \in R^{\canmod}$, then either:
\begin{itemize}
\item $d_1 = d_2 \rolestyle{S^-}$ for some role $\rolestyle{S}$ such that $\tbox \models \rolestyle{S} \incl \rolestyle{R}$, or
\item $d_2 = d_1 \rolestyle{S}$ for some role $\rolestyle{S}$ such that $\tbox \models \rolestyle{S} \incl \rolestyle{R}$.
\end{itemize}
In both cases, since  $\Tmc$ is a $\DLc$ TBox, the condition on $\rolestyle{S}$ holds only if $\rolestyle{S} = \rolestyle{R}$. Moreover, we observe that if the first case holds, i.e., $d_1 = d_2 \rolestyle{R^-}$, then the definition of $\domain{\canmod}$ prevents the creation of an element $d_1 \rolestyle{R}$. It follows that only one of the preceding cases can hold, and so there can be at most one $d_2$ with $(d_1, d_2) \in R^{\canmod}$. 
\end{proof}

\end{toappendix}

\begin{theoremrep}
\label{canomini}
For every \dlc\ KB $\kb$ and \exhaustive rooted CCQ $q$, it holds that $
\qIntervals{q}{\kb} = \qIntervals{q}{\canmod}$.
\end{theoremrep}

\begin{proofsketch}
Exploiting the structure of 
 \dlc\ canonical models, one can show 
that if $\match_1, \match_2$
are distinct matches of an \exhaustive rooted CCQ $q$ in $\canmod$, 
then there exists a variable $v$ such that $\match_1(v) \neq \match_2(v)$ and $\match_1(v), \match_2(v) \in \ainds(\Amc)$. 
It follows that if we take an arbitrary model $\Imc$ of $\Kmc$, and let $f$ be a homomorphism of $\canmod$ into $\Imc$,
then $f$ injectively maps query matches in $\canmod$ to query matches in $\Imc$. 
\end{proofsketch}

\begin{proof}
The general argument was given in the sketch. All that remains to show is that if $\match_1, \match_2$
are distinct matches of a \exhaustive rooted CCQ $q$ in $\canmod$, 
then there exists a variable $v$ such that $\match_1(v) \neq \match_2(v)$ and $\match_1(v), \match_2(v) \in \ainds(\Amc)$. 

Suppose for a contradiction that this is not the case. There there are distinct matches $\match_1, \match_2$
 of $q$ in $\canmod$ such that for all variables  $v$ such that $\match_1(v) \neq \match_2(v)$, 
either $\match_1(v) \not \in \ainds(\Amc)$ or $\match_2(v) \not \in \ainds(\Amc)$. As $q$ is \exhaustive rooted, every variable $v$ is connected to either an answer variable or individual in the Gaifman graph. Let $d(v)$ denote the length of the shortest path from $v$ to an answer variable of individual. Note that $d(v)=0$ iff $v$ is an answer variable. Since $\match_1$ and $\match_2$
are distinct, there exists a variable $v$ such that $\match_1(v) \neq \match_2(v)$. Choose such a variable $v^*$ with minimal $d$-value, i.e., if $d(u) < d(v^*)$, then $\match_1(u) = \match_2(u)$. By assumption, either  $\match_1(v^*) \not \in \ainds(\Amc)$ or $\match_2(v^*) \not \in \ainds(\Amc)$. We'll suppose the former (the other case is treated analogously). Note that $v^*$ cannot be an answer variable (else we would have $\match_1(v^*) \in \ainds(\Amc)$). It follows that $d(v^*)>0$, and so we can find another variable $u^*$ and role name $R \in \rni$, with $d(u^*) = d(v^*)-1$ and either $R(u^*,v^*) \in q$ or $R^-(v^*,u^*) \in q$ (recall that if $R=P^-$, then $R^-=P$). As $\match_1$ and $\match_2$ are matches of $q$ in $\canmod$, we therefore have $(\match_1(u^*), \match_1(v^*)) \in R^{\canmod}$ and $(\match_2(u^*), \match_2(v^*)) \in R^{\canmod}$. Moreover, since $d(u^*) < d(v^*)$, we have $\match_1(u^*) = \match_2(u^*)$.
There are two cases to consider:
\begin{itemize}
\item Case 1: $\match_1(u^*)=\match_2(u^*) = c \in \ainds(\Amc)$. From the proof of Lemma \ref{canmodunique}, we know that $\match_1(v^*)= c R$.  The fact that $c R \in \Delta^{\canmod}$ implies that there is no individual $b$ such that $(c,b) \in R^{\canmod}$. Thus, we must have $\match_2(v^*)= c R$, which yields $\match_1(v^*)=\match_2(v^*)$, contradicting our earlier assumption.
\item Case 2: $\match_1(u^*)=\match_2(u^*)  \not \in \ainds(\Amc)$. 
By Lemma \ref{canmodunique}, there is a unique element $e$ such that $(\match_1(u^*), e) \in R^{\canmod}$. We thus obtain 
$\match_1(v^*) = e = \match_2(v^*)$, a contradiction. 
\end{itemize}
As both cases lead to a contradiction, it must therefore be the case that the statement holds. 
\end{proof}

We will also use the next lemma, implicit in \cite{DBLP:conf/ijcai/BienvenuOSX13}, constraining the possible images of matches in $\canmod$:

\begin{lemma}\label{words}
For every \dlc\ TBox $\Tmc$ and CCQ $q$, we can construct in polynomial time 
a set of words 
$\Gamma_{q,\Tmc}$ such that for every KB $\Kmc=(\Tmc, \Amc)$, match $\sigma$ of $q$ in $\canmod$, and variable $v$ of~$q$:
$\sigma(v)= a w$ for some $a \in \ainds(\Amc)$ and $w \in \Gamma_{q,\Tmc}$.
\end{lemma}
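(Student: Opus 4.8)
The plan is to define $\Gamma_{q,\Tmc}$ as the set of all ``generatable'' role-words of bounded length and to show, by tracing paths through the Gaifman graph of $q$, that no match can reach a deeper anonymous element. Concretely, I would set
\[
\Gamma_{q,\Tmc} = \{\varepsilon\} \cup \{R_1 \cdots R_n \mid 1 \le n \le |q|,\ \Tmc \models \exists R_i^- \sqsubseteq \exists R_{i+1} \text{ and } R_i^- \ne R_{i+1} \text{ for all } 1 \le i < n\},
\]
ranging over the roles occurring in $\Tmc$. Each word has length at most $|q|$, and membership is decided by the (polynomially many) entailment checks $\Tmc \models \exists R_i^- \sqsubseteq \exists R_{i+1}$, so the set is finite and effectively constructible.

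First I would record the \emph{locality} of the canonical model of a \dlc\ KB. Since $\Tmc$ has no role inclusions, the only role edges of $\canmod$ are the ABox edges between individuals and the parent--child edges between an anonymous element $a w$ and $a w R$ (cf.\ the definition of $R^{\canmod}$ and Lemma~\ref{canmodunique}). Consequently, whenever $(d, d') \in R^{\canmod}$ the word-lengths (depths) of $d$ and $d'$ differ by at most one, and an anonymous element $a w$ is entered only from its unique parent $a w'$ with $w = w' R$. This simultaneously yields the generatability constraint on the word of any anonymous image and the fact that a role atom moves a match up or down the tree by a single level.

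Next I would bound the depth. Fix a match $\sigma$ and a variable $v$. Because every connected component of the Gaifman graph contains an answer variable or an individual, there is a path $v = v_0, v_1, \ldots, v_k$ in the Gaifman graph of length $k \le |q|$ ending at a term $v_k$ that is an individual or an answer variable; in either case $\sigma(v_k) \in \ainds(\Amc)$ has depth $0$ (answer variables are mapped according to the answer tuple $\ta \in \ainds(\Amc)^{|\tx|}$, and individuals map to themselves). Since consecutive terms on this path co-occur in a role atom, the locality observation shows the depth changes by at most one per step, so $\sigma(v)$ has depth at most $k \le |q|$. Writing $\sigma(v) = a w$, the word $w$ spells out a root-to-$\sigma(v)$ prefix-path in the tree, hence is generatable with $|w| \le |q|$, i.e.\ $w \in \Gamma_{q,\Tmc}$.

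The main obstacle is the depth bound, and in particular the reliance on an \emph{anchor}: an anonymous image can be kept shallow only because the component it lives in is tied, through role atoms, to an individual or answer variable mapped into $\ainds(\Amc)$. This is precisely where the presence of such an anchor in each component is used; without it, a single concept atom can be satisfied arbitrarily deep in an infinite $R$-chain, so no finite $\Gamma_{q,\Tmc}$ would suffice. The remaining verifications---that parent--child is the only source of anonymous role edges and that the generation conditions can be read off any prefix-path---are routine given Lemma~\ref{canmodunique} and the definition of $\canmod$.
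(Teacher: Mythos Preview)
The paper does not give its own proof of this lemma; it simply records it as ``implicit in \cite{DBLP:conf/ijcai/BienvenuOSX13}'', so there is no reference argument to compare against. Your approach is the natural one and is essentially correct for the way the lemma is actually used, but two tensions with the \emph{stated} form of the lemma deserve comment.

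First, your depth bound hinges on rootedness (an anchor in every connected component) and on answer variables being mapped into $\ainds(\Amc)$. Neither hypothesis appears in the lemma: it speaks of an arbitrary CCQ and an arbitrary match, and by the paper's own definitions a match is any homomorphism, with no constraint on where $\tx$ lands. You yourself observe in the closing paragraph that without a depth-$0$ anchor the claim fails---a single concept atom can be satisfied arbitrarily deep in an infinite $R$-chain, so no finite $\Gamma_{q,\Tmc}$ suffices. Strictly speaking, then, you have proved (and can only prove) a restricted version of the displayed statement; but it is precisely the version the paper needs, since the lemma is invoked only in the section on exhaustive rooted CCQs and only for matches of $q(\ta)$ with $\ta$ a tuple of ABox individuals.

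Second, a size caveat: your $\Gamma_{q,\Tmc}$ contains all admissible role-words of length at most $|q|$ over roles from $\Tmc$, so its cardinality is $O(|\Tmc|^{|q|})$, exponential in $|q|$. The lemma's ``polynomial time'' phrasing is thus also loose for combined complexity. What the paper's applications actually require is that each word has length at most $|q|$ (so a single word can be written down, guessed, and verified in polynomial time---enough for the $\PP$ upper bound) or that $q$ and $\Tmc$ are fixed (the $\TC^0$ argument). Your construction delivers exactly that.
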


We are now ready to show that the problem is $\PP$-complete in combined complexity, 
and hence in $\mathsf{PSpace}$.

\begin{theoremrep}
\label{ppc}
In \dlc, \exhaustive rooted CCQ answering is $\PP$-complete w.r.t.\ combined complexity.
\end{theoremrep}
\begin{proofsketch}
The class $\PP$ contains all decision problems for which there exists a non-deterministic Turing machine (TM) such that,   when the input is a 
`yes' instance, then at least half of the computation paths accept, while on `no' instances, less than half of the computation paths accept. 

The lower bound is obtained by a reduction from the following $\PP$-complete problem \cite{bailey:ppcomplete}: given a propositional formula $\psi$ in  CNF and number $n$, decide whether $\psi$ has at least $n$ satisfying assignments.

We sketch the TM used to show $\PP$ membership, which takes as input a \dlc\ KB $\Kmc=(\Tmc, \Amc)$, an \exhaustive rooted CCQ $q(\tx)$, and candidate answer $(\ta,[m,+\infty])$: 
\paragraph*{Phase 1.} The TM 
constructs the set $\Gamma_{q,\Tmc}$ 
 from Lemma \ref{words}. 
\paragraph*{Phase 2.} 
The TM guesses a mapping $\match$ of the variables in $q$ to elements from $\{a w \mid a \in \ainds(\Amc), w \in \Gamma_{q,\Tmc} \}$. It then compares $m$ with the number  $C = |\Gamma_{q,\Tmc}|^{|q|}$ of possible mappings and proceeds accordingly:
 \begin{itemize}[leftmargin=.42cm,itemsep=-.05cm]
   \item if $m \geq \frac{C}{2} + 1$, the TM guesses an integer $i$ with $0 \leq i \leq 2m-3$ and accepts iff $\match$ is a c-match of $q(\ta)$ 
   and $i < C$;
   \item if $m < \frac{C}{2} +1$, the TM guesses an integer $i$ with $0 \leq i \leq 2C -2m + 1$ and accepts iff $\match$ is c-match for $q(\ta)$ or $i < C - 2m +2$.
 \end{itemize}
The guessed integer and comparisons ensure a suitable number of accepting paths. It can be verified that at least half of the paths are accepting iff  $(\ta,[m,+\infty]) \in \qIntervals{q}{\canmod}$.
\end{proofsketch}

\begin{toappendix}
We start by completing the argument for the $\PP$ upper bound.

\begin{proof}
%
%
Recall the algorithm described in the proof sketch.\smallskip\\
\noindent\textbf{Phase 1} The TM 
deterministically constructs the set $\Gamma_{q,\Tmc}$ of words from Lemma \ref{words}. 
\smallskip\\
\noindent\textbf{Phase 2} 
The TM guesses a mapping $\match$ of the variables in $q$ to elements from $\{a w \mid a \in \ainds(\Amc), w \in \Gamma_{q,\Tmc} \}$. It then compares $m$ with the number  $C = |\Gamma_{q,\Tmc}|^{|q|}$ of possible mappings and proceeds accordingly:
 \begin{itemize}[leftmargin=.42cm,itemsep=-.05cm]
   \item if $m \geq \frac{C}{2} + 1$, the TM guesses an integer $i$ with $0 \leq i \leq 2m-3$ and accepts iff $\match$ is a c-match of $q(\ta)$ 
   and $i < C$;
   \item if $m < \frac{C}{2} +1$, the TM guesses an integer $i$ with $0 \leq i \leq 2C -2m + 1$ and accepts iff $\match$ is c-match for $q(\ta)$ or $i < C - 2m +2$.
 \end{itemize}

Due to Theorem \ref{canomini} and Lemma~\ref{words}, an input is a `yes' instance iff $\qAnswers{q}{\canmod}_\ta \geq m$ (recall that $\qAnswers{q}{\canmod}_\ta$ denotes the exact number of c-matches for $q(\ta)$ in $\canmod$). To finish the proof of $\PP$ membership, we need to examine the number of accepting computation paths for the described TM and show that when $\qAnswers{q}{\canmod}_\ta \geq m$, at least half of the computation paths accept, and when $\qAnswers{q}{\canmod}_\ta < m$, less than half of the computation paths accept.  Let us consider the two cases from Phase 2:
\begin{itemize}
\item
If $m \geqslant \frac{C}{2} + 1$, then the number of accepting computation paths is $\qAnswers{q}{\canmod}_\ta \times C$, corresponding to cases where the TM guesses a mapping that is a c-match, then guess a number $0 \leq i <C$. The total number of computation paths is $C \times (2m - 2)$, corresponding to a guess of one of the $C$ mappings, then the guess of an integer $0 \leq i \leq 2m-3$.  
\item
If $m < \frac{C}{2} + 1$, then the number of accepting computation paths is $$\qAnswers{q}{\canmod}_\ta \times (2C - 2m + 2) + (C - \qAnswers{q}{\canmod}_\ta) \times (C - 2m + 2)= C(C - 2m + \qAnswers{q}{\canmod}_\ta + 2),$$ corresponding to the sum of the number of cases where we guess a c-match followed by an integer $0 \leq i \leq 2C -2m + 1$ and the  number of cases where we guess a mapping that is not a c-match followed by an integer $i$ with $0 \leq i < C - 2m +2$. The total number of computation paths is $C \times (2C - 2m + 2)$ (guess one of the $C$ mappings, then guess an integer $0 \leq i \leq 2C -2m + 1$).
\end{itemize}
In both cases, it is easily verified that: $$\qAnswers{q}{\canmod}_\ta \geq m \Longleftrightarrow \frac{\# \mathsf{accepting \ computation \ paths}}{\# \mathsf{possible \ computation \ paths}} > \frac{1}{2}.$$
(Note that in the first case, we always have $m \geq 2$, so the value $2m-2$ in the denominator is positive, while in the second case, $C \geq 1$ implies that the value $(2C - 2m + 2)$ in the denominator is positive.)
\end{proof}

We next give the proof of $\PP$-hardness.

\begin{proof}
We recall that the lower bound is by reduction from the following $\PP$-complete problem: given a propositional formula $\psi$ in  CNF and number $n$, decide whether $\psi$ has at least $n$ satisfying assignments.

Consider an instance of this problem, given by the formula $\psi := \exists \tu \bigwedge^l_{k = 1} \xi_k $ (with $\xi_k$ is a $3$-clause) and number $N$. We consider the KB $\kb_\psi= (\emptyset, \abox_\psi)$, which has an empty TBox, and whose ABox $\abox_\psi$ contains the following assertions:
%
%
%
%
\begin{itemize}
\item
$\rolestyle{Clause_k}(\ind{a}, \ind{\xi^p_k})$ for each clause $\xi_k$ and each $p \in \{1, ... 7\}$, with each $\ind{\xi^p_k}$ representing one of the 7 satisfying assignments for the clause $\xi_k$;
\item $\rolestyle{Asn_1}(\ind{\xi^p_k}, \xi^p_k(\omega_k^1))$, $\rolestyle{Asn_2}(\ind{\xi^p_k}, \xi^p_k(\omega_k^2))$ and $\rolestyle{Asn_3}(\ind{\xi^p_k}, \xi^p_k(\omega_k^3))$ for each $p = 1, ... 7$ and each clause $\xi_k$, where $\xi^p_k(\omega_k^i)$ is the truth value ($\true$ or $\false$) assigned by $\xi^p_k$ to the $i$th variable occurring in the $k$th clause.
\end{itemize}
It may be helpful to refer to Figure \ref{pict-cm}, which depicts the canonical model of $\canmodof{\kb_\psi}$ for an example formula $\psi$.

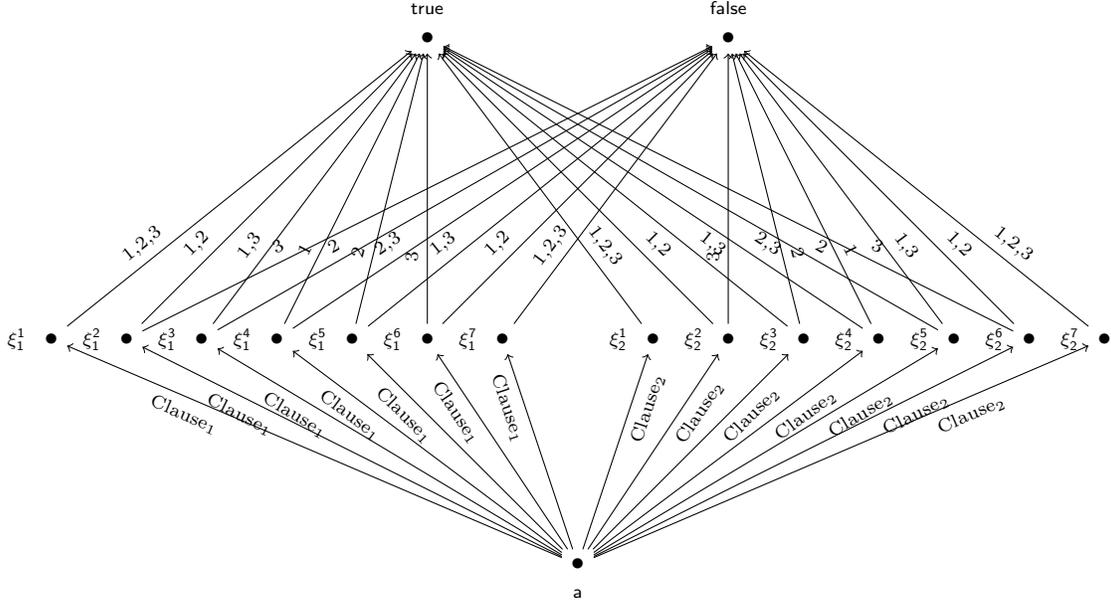
\begin{figure}[t]
\centering
\begin{tikzpicture}


\node at ( 0, -3) (psi) 		[label=below:$\scriptstyle \ind{a}$]		{$\bullet$};

\node at (-7, 0) (sig11)	[label=left:$\scriptstyle \ind{\xi_1^1}$]   {$\bullet$};
\node at (-6, 0) (sig12)	[label=left:$\scriptstyle \ind{\xi_1^2}$]   {$\bullet$};
\node at (-5, 0) (sig13)	[label=left:$\scriptstyle \ind{\xi_1^3}$]   {$\bullet$};
\node at (-4, 0) (sig14)	[label=left:$\scriptstyle \ind{\xi_1^4}$]   {$\bullet$};
\node at (-3, 0) (sig15)	[label=left:$\scriptstyle \ind{\xi_1^5}$]   {$\bullet$};
\node at (-2, 0) (sig16)	[label=left:$\scriptstyle \ind{\xi_1^6}$]   {$\bullet$};
\node at (-1, 0) (sig17)	[label=left:$\scriptstyle \ind{\xi_1^7}$]   {$\bullet$};

\node at ( 1, 0) (sig21)	[label=left:$\scriptstyle \ind{\xi_2^1}$]   {$\bullet$};
\node at ( 2, 0) (sig22)	[label=left:$\scriptstyle \ind{\xi_2^2}$]   {$\bullet$};
\node at ( 3, 0) (sig23)	[label=left:$\scriptstyle \ind{\xi_2^3}$]   {$\bullet$};
\node at ( 4, 0) (sig24)	[label=left:$\scriptstyle \ind{\xi_2^4}$]   {$\bullet$};
\node at ( 5, 0) (sig25)	[label=left:$\scriptstyle \ind{\xi_2^5}$]   {$\bullet$};
\node at ( 6, 0) (sig26)	[label=left:$\scriptstyle \ind{\xi_2^6}$]   {$\bullet$};
\node at ( 7, 0) (sig27)	[label=left:$\scriptstyle \ind{\xi_2^7}$]   {$\bullet$};

\node at (-2, 4) (t) 	[label=above:$\scriptstyle \ind{true}$]		{$\bullet$};
\node at ( 2, 4) (f) 	[label=above:$\scriptstyle \ind{false}$]		{$\bullet$};

\path[every node/.style={sloped}]
(psi) edge [->] node[below, near end] 			{$\scriptstyle \rolestyle{Clause_1}$} (sig11)
(psi) edge [->] node[below, near end] 			{$\scriptstyle \rolestyle{Clause_1}$} (sig12)
(psi) edge [->] node[below, near end] 			{$\scriptstyle \rolestyle{Clause_1}$} (sig13)
(psi) edge [->] node[below, near end] 			{$\scriptstyle \rolestyle{Clause_1}$} (sig14)
(psi) edge [->] node[below, near end] 			{$\scriptstyle \rolestyle{Clause_1}$} (sig15)
(psi) edge [->] node[below, near end] 			{$\scriptstyle \rolestyle{Clause_1}$} (sig16)
(psi) edge [->] node[below, near end] 			{$\scriptstyle \rolestyle{Clause_1}$} (sig17)

(psi) edge [->] node[below, near end] 			{$\scriptstyle \rolestyle{Clause_2}$} (sig21)
(psi) edge [->] node[below, near end] 			{$\scriptstyle \rolestyle{Clause_2}$} (sig22)
(psi) edge [->] node[below, near end] 			{$\scriptstyle \rolestyle{Clause_2}$} (sig23)
(psi) edge [->] node[below, near end] 			{$\scriptstyle \rolestyle{Clause_2}$} (sig24)
(psi) edge [->] node[below, near end] 			{$\scriptstyle \rolestyle{Clause_2}$} (sig25)
(psi) edge [->] node[below, near end] 			{$\scriptstyle \rolestyle{Clause_2}$} (sig26)
(psi) edge [->] node[below, near end] 			{$\scriptstyle \rolestyle{Clause_2}$} (sig27)

(sig11) edge [->] node[above, near start] 	{$\scriptstyle 1,2,3$} (t)

(sig12) edge [->] node[above, near start] 	{$\scriptstyle 1,2$} (t)
(sig12) edge [->] node[above, near start] 	{$\scriptstyle 3$} (f)

(sig13) edge [->] node[above, near start] 	{$\scriptstyle 1,3$} (t)
(sig13) edge [->] node[above, near start] 	{$\scriptstyle 2$} (f)

(sig14) edge [->] node[above, near start] 	{$\scriptstyle 1$} (t)
(sig14) edge [->] node[above, near start] 	{$\scriptstyle 2, 3$} (f)

(sig15) edge [->] node[above, near start] 	{$\scriptstyle 2$} (t)
(sig15) edge [->] node[above, near start] 	{$\scriptstyle 1, 3$} (f)

(sig16) edge [->] node[above, near start] 	{$\scriptstyle 3$} (t)
(sig16) edge [->] node[above, near start] 	{$\scriptstyle 1, 2$} (f)

(sig17) edge [->] node[above, near start] 	{$\scriptstyle 1,2,3$} (f)

(sig21) edge [->] node[above, near start] 	{$\scriptstyle 1,2,3$} (t)

(sig22) edge [->] node[above, near start] 	{$\scriptstyle 1,2$} (t)
(sig22) edge [->] node[above, near start] 	{$\scriptstyle 3$} (f)

(sig23) edge [->] node[above, near start] 	{$\scriptstyle 1,3$} (t)
(sig23) edge [->] node[above, near start] 	{$\scriptstyle 2$} (f)

(sig24) edge [->] node[above, near start] 	{$\scriptstyle 2,3$} (t)
(sig24) edge [->] node[above, near start] 	{$\scriptstyle 1$} (f)

(sig25) edge [->] node[above, near start] 	{$\scriptstyle 2$} (t)
(sig25) edge [->] node[above, near start] 	{$\scriptstyle 1,3$} (f)

(sig26) edge [->] node[above, near start] 	{$\scriptstyle 3$} (t)
(sig26) edge [->] node[above, near start] 	{$\scriptstyle 1,2$} (f)

(sig27) edge [->] node[above, near start] 	{$\scriptstyle 1,2,3$} (f)
;

\end{tikzpicture}
\caption{The canonical model $\canmodof{\kb_\psi}$ with $\psi = (u_1 \vee \neg u_2 \vee \neg u_3) \wedge (\neg u_1 \vee u_3 \vee u_4)$}
\label{pict-cm}
\end{figure}
As for the query, we consider the following \exhaustive rooted CCQ (depicted in Figure \ref{ex-pp-query}):
$$
q_\psi := \exists z_{\xi_1} \dots \exists z_{\xi_l} \ \exists z_{u_1} \dots \exists z_{u_n} \bigwedge_{k = 1}^l \left( \rolestyle{Clause_k}(\ind{a}, z_{\xi_k}) \wedge \bigwedge_{i = 1}^3 \left( \rolestyle{Asn_i}(z_{\xi_k}, z_{\omega_k^i}) \right) \right)
$$
\begin{figure}[!h]
\centering
\begin{tikzpicture}
[every node/.style={scale=0.9, sloped}]
\node at ( 0, 1) (psi) 		[label=below:$\ind{a}$]			{$\gx$};
\node at (-2, 3) (xi1)		[label=left:$z_{\xi_1}$]   		{$\gz$};
\node at ( 2, 3) (xi2)		[label=right:$z_{\xi_2}$]   	{$\gz$};
\node at (-3, 5) (u1) 	[label=above:$z_{u_1}$]	{$\gz$};
\node at (-1, 5) (u2) 	[label=above:$z_{u_2}$]	{$\gz$};
\node at ( 1, 5) (u3) 	[label=above:$z_{u_3}$]	{$\gz$};
\node at ( 3, 5) (u4) 	[label=above:$z_{u_4}$]	{$\gz$};

\path
(psi) edge [->] node[below] {$\rolestyle{Clause_1}$} 	(xi1) 
(psi) edge [->] node[below] {$\rolestyle{Clause_2}$} 	(xi2)

(xi1) edge [->] node[near start, below] 			{$\rolestyle{Asn_1}$} (u1)
(xi1) edge [->] node[near start, above] 			{$\rolestyle{Asn_2}$} (u2)
(xi1) edge [->] node[near start, below] {$\rolestyle{Asn_3}$} (u3)
(xi2) edge [->] node[near start, below] 			{$\rolestyle{Asn_1}$} (u1)
(xi2) edge [->] node[near start, above] 			{$\rolestyle{Asn_2}$} (u3)
(xi2) edge [->] node[near start, below] {$\rolestyle{Asn_3}$} (u4);

\end{tikzpicture}
\caption{The query $q_\psi$ with $\psi = (u_1 \vee \neg u_2 \vee \neg u_3) \wedge (\neg u_1 \vee u_3 \vee u_4)$}
\label{ex-pp-query}
\end{figure}
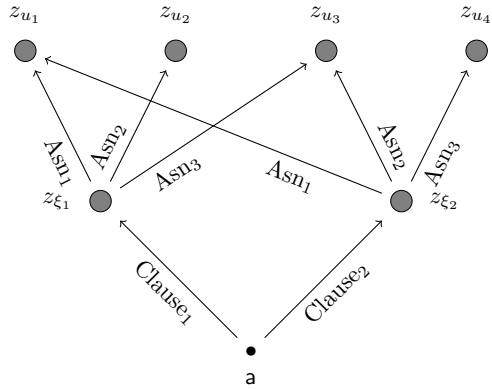

To complete the proof, we establish the following claim. 

\noindent \paragraph{Claim.} $ (\emptyset, [N, +\infty]) \in \qIntervals{q_\psi}{\kb_\psi} \Longleftrightarrow  \, \psi$ has at least $N$ satisfiying assignments

\paragraph{$(\Rightarrow)$}

Assume $ (\emptyset, [N, +\infty]) \in \qIntervals{q_\psi}{\kb_\psi}$. This implies in particular that there are $N$ c-matches for $q_\psi$ in $\canmodof{\kb_\psi}$. Since the TBox is empty, the domain of $\canmodof{\kb_\psi}$ is $\ainds(\Amc_\psi)$, and  $\canmodof{\kb_\psi}$ makes true precisely the assertions in $\Amc_\psi$. By examining $q_\psi$ and $\Amc_\psi$, we see that each of the matches of $q_\psi$ in $\canmodof{\kb_\psi}$ maps each of the variables $z_{u_i}$ to either $\true$ or $\false$. We can therefore associate with each match $\match$ the following truth assignment for the variables $u_1, \ldots, u_n$: $\tau_\match(u_i) = \match(z_{u_i})$. 
By further examining the definition of the individuals $\ind{\xi_k^p}$ and the roles $\rolestyle{Asn_1}, \rolestyle{Asn_2}, \rolestyle{Asn_3}$, it is easy to verify that each $\tau_\match$ is a satisfying assignment for $\psi$. Moreover, since we know we have $N$ such assignments, it only remains to show that each match $\match$ yields a distinct assignment $\tau_\match$. To see why this is the case, observe that once we know the images of all of the variables $z_{u_i}$, there is a unique way of mapping the variables $z_{\xi_p}$. It follows that $\psi$ has at least $N$ satisfying assignments.


\paragraph{$(\Leftarrow)$}
Assume $\psi$ has at least $N$ satisfying assignments. 
Therefore, we have $\tau_1, \dots \tau_N$ distinct assignments for $u_1, \dots u_n$ satisfying $\psi$. This ensures that, if we define $\match_{\tau_m}(z_{u_i}) = \tau_m(u_i)$, we can always extend the mapping $\match_{\tau_m}(z_{u_i})$ into a match for $q_\psi$, yielding  $N$ distinct matches. Note that this holds in any model since we only need the `ABox part' of the model, hence $(\emptyset, [N, +\infty]) \in \qIntervals{q_\psi}{\kb_\psi}$.

\end{proof}

\end{toappendix}

\subsection{\Exhaustive Rooted CCQs in $\DLr$}
We now turn to \dlh\ KBs. Our first result is negative: \exhaustive rooted CCQs do not enjoy 
lower data complexity. This is shown by another reduction from $\tcol$ which involves 
ideas from our proof of Theorem \ref{rootedconphard} and the proof of Lemma 16 from  \cite{kostylevreutter:count}.

\begin{theoremrep}
In \dlh, \exhaustive rooted CCQ answering is $\coNP$-complete w.r.t.\ data complexity.
\end{theoremrep}

\begin{proof}
The main idea is the same as in proof of Theorem~\ref{rootedconphard}. However, due to the lack of existential variables, we can no longer `reach' the colors without taking into account the paths leading to them. To address this difficulty, we translate into our context an idea from \cite{kostylevreutter:count}, which takes advantage of role inclusions. 

Starting from an instance $\graph = (\vertices, \edges)$ of the decision problem $\tcol$, we consider the ABox $\abox_\graph$ given by:
\begin{align*}
\abox_\graph =& \{\rolestyle{Vertex}(\ind{a}, \ind{u}) \suchthat u \in \vertices \} \cup \{ \rolestyle{Edge}(\ind{u_1}, \ind{u_2}) \suchthat (u_1, u_2) \in \edges \} \\
& \cup \{ \rolestyle{Vertex}(\ind{a}, \ind{a_v}), \rolestyle{Edge}(\ind{a_v}, \ind{a_v}), \rolestyle{HasCol}(\ind{a_v}, \ind{r}) \} \\
& \cup \{\rolestyle{Colors}(\ind{u}, \ind{r}) \suchthat u \in \vertices \} \cup \{\rolestyle{Colors}(\ind{u}, \ind{g}) \suchthat u \in \vertices \} \cup \{\rolestyle{Colors}(\ind{u}, \ind{b}) \suchthat u \in \vertices \}
\end{align*}
and the TBox $\tbox := \{ \exists \rolestyle{Vertex^-} \incl \exists \rolestyle{HasCol}, \rolestyle{HasCol} \incl \rolestyle{Colors} \}$, and we denote by $\kb_\graph = (\tbox, \abox_\graph)$ the resulting KB. A part of the canonical model of $\kb_\graph$ is depicted in Figure~\ref{stronglycanmodconp}. As in the proof of Theorem~\ref{rootedconphard},
we use $\exists \rolestyle{Vertex^-} \incl \exists \rolestyle{HasCol}$ to assign colors to vertices, and the more general role $\rolestyle{Colors}$ will be used to detect colors.  

\begin{figure}[h]
\centering
\begin{tikzpicture}
[every node/.style={scale=0.8, sloped}]
\node at ( 2, 0) (root) [label={above:$\ind{a}$}]	  {$\bullet$};
\node at ( 4, 0) (av) [label={right:$\ind{a_v}$}]	  {$\bullet$};
\node at (-2, 1) (u)	[label=above right:$\ind{u_1}$]	  		{$\bullet$};
\node at (-2,-1) (v)	[label=below right:$\ind{u_2}$]			{$\bullet$};
\node at (-2.5, 3) (uc) 	[label={above:$\ind{u_1}\rolestyle{HasCol}$}]					{$\circ$};
\node at (-2.5,-3) (vc) 	[label={below:$\ind{u_2}\rolestyle{HasCol}$}]					{$\circ$};
\node at (-5, 1.5) (r) 	[label=above:$\ind{b}$]	{$\bullet$};
\node at (-5, 0) (g)   	[label=above:$\ind{g}$]	{$\bullet$};
\node at (-5,-1.5) (b)   [label=above:$\ind{r}$]	{$\bullet$};

\path 
(root) edge [->] node[above] {$\rolestyle{Vertex}$} (u)
(root) edge [->] node[below] {$\rolestyle{Vertex}$} (v)

(u) edge [->] node[below] 	{$\rolestyle{Edge}$} (v)
(u) edge [->] node[above] 	{$\rolestyle{HasCol}$} (uc)
(v) edge [->] node[below] 	{$\rolestyle{HasCol}$} (vc)

(root) edge [->] node[above] {$\rolestyle{Vertex}$} (av)
(av) edge [loop above] node [above]	{$\rolestyle{Edge}$} (av)
(u) edge [->] node [near end, above]	{$\rolestyle{Colors}$} (r)
(u) edge [->] node [near end, above]	{$\rolestyle{Colors}$} (g)
(u) edge [->] node [near end, below]	{$\rolestyle{Colors}$} (b)
(v) edge [->] node [near end, above]	{$\rolestyle{Colors}$} (r)
(v) edge [->] node [near end, below]	{$\rolestyle{Colors}$} (g)
(v) edge [->] node [near end, below]	{$\rolestyle{Colors}$} (b);

\draw[->, rounded corners] (av) -- (3,-4) -- node [above] {$\rolestyle{HasCol}$} (-4, -4) -- (b);

\end{tikzpicture}
\caption{A part of $\canmodof{\kb_\graph}$ with $(u_1, u_2) \in \edges$.}
\label{stronglycanmodconp}
\end{figure}

We consider the two following \exhaustive rooted CCQs:
\begin{align*}
q^{edge} &= \exists z_c ~ \exists z_1 ~ \exists z_2 ~ \rolestyle{Vertex}(\ind{a}, z_1)  \wedge  \rolestyle{Vertex}(\ind{a}, z_2) \wedge \rolestyle{Edge}(z_1, z_2)  \wedge  \rolestyle{HasCol}(z_1, z_c)  \wedge  \rolestyle{HasCol}(z_2, z_c) \\
q^{col} &= \exists z_v ~ \exists z  ~ \rolestyle{Vertex}(\ind{a}, z_v) \wedge \rolestyle{Colors}(z_v, z) 
\end{align*}
and let $q$ be the query obtained by taking the conjunction of these two queries and keeping all of the variables existentially quantified. The query $q$ is displayed in Figure \ref{stronglyrootedqueryconp}. Observe that while it is similar to the query from the proof of Theorem~\ref{rootedconphard} (see Figure \ref{rootedqueryconp}), the two existential variables in that query ($y_c,y$) have been replaced with counting variables ($z_c, z_v$), and one of the $\rolestyle{HasCol}$ atom has been changed to a $\rolestyle{Colors}$ atom.

\begin{figure}[h]
\centering
\begin{tikzpicture}
[every node/.style={scale=0.8, sloped}]
\node at ( 0, 3) (r) [label=below:$\ind{a}$]  {$\gx$};
\node at (-1.5, 4) (z1)	[label=above:$z_1$]	{$\gz$};
\node at (-1.5, 2) (z2) 	[label=below:$z_2$]	{$\gz$};
\node at (-3, 3) (yc) 	[label=left:$z_c$]	{$\gz$};
\node at (2, 3) (yv)	[label=below:$z_v$]  {$\gz$};
\node at ( 4, 3) (zc) 	[label=right:$z$]	{$\gz$};

\path
(r) edge [->] node[above] {$\rolestyle{Vertex}$} 	(z1)
(r) edge [->] node[below] {$\rolestyle{Vertex}$} 	(z2)
(r) edge [->] node[above] {$\rolestyle{Vertex}$} 	(yv)
(yv) edge [->] node[above] {$\rolestyle{Colors}$} 	(zc)
(z1) edge [->] node[below] {$\rolestyle{Edge}$} 	(z2)
(z1) edge [->] node[above] {$\rolestyle{HasCol}$} 	(yc)
(z2) edge [->] node[below] {$\rolestyle{HasCol}$} 	(yc); 
\end{tikzpicture}
\caption{The \exhaustive rooted CCQ $q$, which is the conjunction of $q^{edge}$ (left part) and $q^{col}$ (right part).}
\label{stronglyrootedqueryconp}
\end{figure}
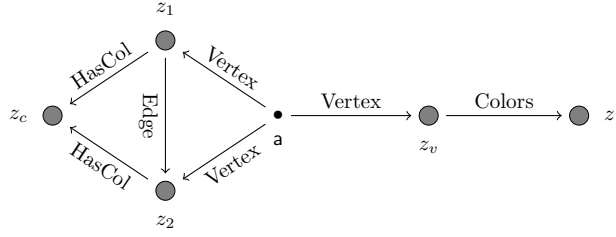

%
%
%
%
%
%
%

\noindent It is not hard to see that $(\ta_\emptyset, [3 |\vertices| + 1, +\infty]) \in \qIntervals{q}{\kb_\graph}$. 
Indeed, there are at least $3 |\vertices|$ matches of $q$ in any model $\I$ of $\kb$, obtained as follows:
$$
z_1, z_2  \mapsto  \ind{a_v} \qquad
z_c  \mapsto  \ind{r} \qquad
z_v \mapsto \ind{u} ~ (u \in \vertices) \qquad
z  \mapsto  \ind{r} \suchthat \ind{g} \suchthat \ind{b}
$$
and one additional match given by:
$$
z_1, z_2, z_v  \mapsto  \ind{a_v} \qquad
z_c, z  \mapsto  \ind{r}
$$
To complete the proof, we establish the following claim. \medskip 

\noindent \textbf{Claim.} $ (\emptyset, [3 |\vertices| + 2, + \infty]) \in \qIntervals{q}{\kb_\graph} \Longleftrightarrow \graph \notin$ $\tcol. $
 
\paragraph{$(\Rightarrow)$} This direction is proven in the same manner as the claim in the proof of Theorem~\ref{rootedconphard}. 
We assume $(\ta_\emptyset, [3|\vertices|+2, +\infty]) \in \qIntervals{q}{\kb_\graph}$ and take a possible coloring $\tau : \vertices \rightarrow \{ \ind{r}, \ind{g}, \ind{b} \}$. We then use $\tau$ to build a model $\I_\tau$  of $\kb_\graph$ and use the existence of an additional match $\match$ to show that $\tau$ contains a monochromatic edge (hence $\graph \notin \tcol$).
%
 
\paragraph{$(\Leftarrow)$} 
Assume $\graph \notin \tcol$, and take some model $\I$ of $\kb_\graph$. By Lemma \ref{cm-lemma}, there is a homomorphism $f : \canmodof{\kb_\graph} \rightarrow \I$. 
Define $\tau:  \vertices \rightarrow \domain{\I}$ as follows: $\tau(u)= f(\ind{u}\rolestyle{HasCol})$. Note that $\tau$ is well defined, as the inclusion $\exists \rolestyle{Vertex^-} \incl \exists \rolestyle{HasCol}$ ensures that there is an element $\ind{u}\rolestyle{HasCol}$ in $\canmodof{\kb_\graph}$.  There are two cases to consider:
\begin{itemize}
\item If there exists $u \in \vertices$ such that $\tau(u) \notin \{ \ind{r}, \ind{g}, \ind{b} \}$, then the axiom $\rolestyle{HasCol} \incl \rolestyle{Colors}$ ensures $(\ind{u}^\I, \tau(u)) \in \rolestyle{Colors}^\I$, which provides an additional match of $q^{color}$ in $\I$ with $z \mapsto \tau(u)$ and $z_v \mapsto \ind{u}^\I$.
\item Else, since $\graph \notin \tcol$, there exists an edge $(u_1, u_2) \in \edges$ such that $\tau(u_1) = \tau(u_2)$. It yields a new match given by: 
 $$
 z  \mapsto  \ind{r} \quad
 z_v \mapsto  \ind{a_v} \quad
 z_1 \mapsto  \ind{u_1} \quad
 z_2  \mapsto  \ind{u_2} \quad
 z_c  \mapsto  \tau(u_1) ~ ( = \tau(u_2)) 
 $$
\end{itemize}
%
In both cases, there is an additional c-match for $q$. We thus obtain 
 $(\ta_\emptyset, [3|\vertices| + 2, +\infty]) \in \qIntervals{q}{\kb_\graph}$.
\end{proof}
%
%

More positively, we can show an improved $\coNEXP$ upper bound in combined complexity for \exhaustive rooted CCQs. 
We briefly sketch the proof, which involves highly non-trivial modifications to the argument used for general CCQs. 

We first introduce a 
more refined notion of interleaving, which replaces the mapping $f'$ 
by the following mapping $f^{*}$:
$$\begin{array}{rcl}
f^*(a) & = & f(a) \\
f^{*}(\omega R) & =& \left\{ \begin{array}{ll}
f(\omega R) & \text{ if } f^{*}(\omega), f(\omega R) \in \Delta^* \\
f^{*}(\omega) R & \text{ otherwise} \\
\end{array} \right.
\end{array}
$$
It is possible to prove that when $q$ is an \exhaustive rooted CCQ, this modified interleaving yields a countermodel. Moreover, 
it has a very particular structure, essentially corresponding to the canonical model of the restriction of $f(\canmod)$ to $\domain{*}$ (viewed as an ABox). Importantly, this means that instead of guessing a whole countermodel, it suffices to guess an initial, exponential-size portion (the $|q|$-neighborhood of $\domain{*}$), providing the basis for a $\coNEXP$ decision procedure. 

\begin{theoremrep}
In \dlh, \exhaustive rooted CCQ answering is in $\coNEXP$ w.r.t.\ combined complexity.
\end{theoremrep}

\begin{proof}
We may assume w.l.o.g.\ that the initial homomorphism $f: \canmod \rightarrow \I$ is chosen to respect the following property ($\bigstar$): if $w_1 \rolestyle{R}, w_2 \rolestyle{R} \in \domain{\canmod}$ and $f(w_1) = f(w_2)$, then $f(w_1 \rolestyle{R}) = f(w_2 \rolestyle{R})$. Such a homomorphism can easily be built starting from an arbitrary homomorphism $g$, by choosing a `main branch' whenever a choice is possible and copying from it. Formally, given a breadth-first ordering $\preceq$ of elements in $\domain{\canmod}$, we start by setting $f := g$. Then, we explore the elements according to $\preceq$ and at each step, say at element $w_1$, we explore all elements $w_2$ such that $w_1 \preceq w_2$. If ever $f(w_1) = f(w_2)$, we redefine $f(w_2 w) := f(w_1 w)$ for every word $w$ such that $w_1 w, w_2 w \in \domain{\canmod}$. Since $\preceq$ is breadth-first, the image $f(w_1)$ will no longer be redefined after step $w_1$, which ensures the resulting homomorphism $f$ is well-defined. 

Recall that we introduced in the body of the paper a more refined notion of interleaving, which replaces the mapping $f'$ by the following mapping $f^{*}$:
$$\begin{array}{rcl}
f^{*} : \domain{\canmod} & \rightarrow & \domain{*} \cup \domain{\canmod} \\
a & \mapsto & f(a) \\
w R & \mapsto & \left\{ \begin{array}{ll}
f(w R) & \text{ if } f^{*}(w), f(w R) \in \Delta^* \\
f^{*}(w) R & \text{ otherwise} \\
\end{array} \right.
\end{array}
$$

We let $\Imc^*$ be the interpretation obtained by applying $f^*$ to $\canmod$. It is helpful to observe that $\Imc^*$ essentially coincides with the canonical model $\Cmc^{\kb^*}$ of the KB $\kb^*$ whose TBox is $\tbox$ and whose ABox $\abox^*$ consists of the facts from $\domain{*} \cap f^*(\Cmc^\kb)$ (treating such elements as ABox individuals). More explicitly, $\abox^*$ contains  the concept assertion $\rolestyle{A}(\ind{t})$ for each atomic concept $A \in \cnames$ and domain element $t \in \domain{*} \cap f^*(\Cmc^\kb)$ such that $t \in f^*(\rolestyle{A}^{\canmod})$, and the role assertion $\rolestyle{R}(\ind{t_1}, \ind{t_2})$ for each atomic role $\rolestyle{R} \in \rnames$ and domain elements $t_1, t_2 \in \domain{*} \cap f^*(\Cmc^\kb)$ such that $(t_1, t_2) \in f^*(\rolestyle{R}^{\canmod})$. 
%

This alternative way of viewing $\Imc^*$, together with our assumption $(\bigstar)$, makes clear that the following mapping is a homomorphism from $\I^*$ to $\I$:  
$$
\begin{array}{rcl}
\rho^* : \domain{\I^*} & \rightarrow & \domain{\I }\\
   	     f^*(d)  &     \mapsto & f(d).
\end{array}
$$
Indeed, $(\bigstar)$ ensures the choice of $d$ doesn't affect the image $f(d)$, thus $\rho^*$ is well-defined. Formally, we proceed by induction on elements of $\I^*$. If $f^*(d) \in \domain{*}$, then by the definition of $f^*$, we must have either $d \in \individuals$ with $f^*(d) = f(d)$, or $d = w \rolestyle{R}$ with $f^*(w) \in \domain{*}$, $f(d) \in \domain{*}$ and $f^*(d) = f(d)$. In both cases, $\rho^*(f^*(d)) = f(d) = f^*(d)$, which is independent from the choice of $d$. Otherwise, suppose $\rho^*$ is well-defined for $\omega$, and consider some $f^*(d) = \omega \rolestyle{R} \notin \domain{*}$ and $d'$ such that $f^*(d') = f^*(d) = \omega \rolestyle{R}$. By the definition of $f^*$, we must have $d = w \rolestyle{R}$ with $f^*(w) = \omega$, and same for $d'$, that is, $d' = w' \rolestyle{R}$ with $f^*(w') = \omega$. By our inductive assumption, we have $\rho^*(\omega) = f(w) = f(w')$. Property ($\bigstar$) now ensures $f(w \rolestyle{R}) = f(w' \rolestyle{R})$, that is $\rho^*( f^*( d ) ) = \rho^*( f^*( d' ))$. 

The mapping $\rho^*$ being a homomorphism then follows from the definition of concept and role interpretations in $\I^*$. 
In particular, this means that $\I^*$ is a model of $\kb$.
Compared to the homomorphism $\rho$ used to connect the interleaving $\Imc'$ with the countermodel $\Imc$, we have lost the property that $\rho^{-1}(\domain{*}) = \domain{*}$. Therefore, proving that $\I^*$ is a countermodel requires a different argument that exploits the \exhaustive rooted assumption on the query.

Consider a match $\match : \tx \cup \tz \rightarrow \domain{\I^*}$ of $q(\ta)$ in $\I^*$. Let us first suppose that there is a counting variable $z \in \tz$ such that $\match(z) \notin \domain{*}$, in which case we must have $\match(z) = \ind{t} w$ for some $\ind{t} \in \domain{*} \cap f^*(\Cmc^\kb)$ and some non-empty word $w$. Since $q$ is \exhaustive rooted, all intermediate elements $\ind{t} w'$ with $w'$ a prefix of $w$, must be reached by some other counting variables. In particular, one of these counting variables, say $z_0$, must map onto $\ind{t} w_0$, with $w_0$ the first symbol of $w$. From the definition of $f^*$, we also have a word $w_\ind{t}$ such that $f^*(w_\ind{t})=f(w_\ind{t}) = \ind{t}$.
However, via the homomorphism $\rho^*$, we can transform $\match$  
into a match  $\rho^* \circ \match :  \tx \cup \tz \rightarrow \domain{\I}$ in the original countermodel $\I$. In particular, we will have $\rho^*(\match(z_0)) = \rho^*(\ind{t}w_0) = \rho^*(f^*(w_\ind{t}) w_0) = \rho^*(f^*(w_\ind{t} w_0)) = f(w_\ind{t} w_0)$. Thus, $f(w_\ind{t} w_0)$ belongs to the image of the match $\rho^* \circ \match$ in $\I$. From the definition of $\Delta^*$, we can thus infer that $f(w_\ind{t} w_0) \in \domain{*}$. But since $f^*(w_\ind{t}) = \ind{t} \in \domain{*}$ and $f(w_\ind{t} w_0) \in \domain{*}$, we have $f^*(w_\ind{t} w_0) = f(w_\ind{t} w_0)$, and therefore the element $\ind{t} w_0$ is not introduced by $f^*$ (it would contradict the property ($\bigstar$)), which contradicts $z_0$ mapping onto this element. Therefore, this situation, that is, the existence of a match in $\I^*$ with a counting variable mapping outside $\domain{*}$, does not occur.  
Hence, we have $\match(\tx \cup \tz) \subseteq \domain{*}$. Then since $\rho^*_{|\domain{*}} = \mathsf{id}$, we have  $\rho^* \circ \match= \match $, which shows that the mapping $\match \mapsto \rho^* \circ \match$ is injective. This means that $\I$  contains at least as many c-matches as $\I^*$, and since $\I$ is a countermodel,  $\I^*$ must also be a countermodel.

As observed earlier, the obtained countermodel $\I^*$ has a particular structure: it can be seen as the canonical model of an ABox $\Amc^*$
whose size is polynomially bounded by the size of $\domain{*}$, itself being single exponential in the size of the input. 
The modified interleaving thus allows us to improve the algorithm used in the general case. Indeed, it is now sufficient for a Turing machine to (i) guess an ABox of single-exponential size in $|\kb|$ and $|q|$,  and (ii) check that the canonical model of the guessed ABox and original TBox contains fewer c-matches for $q(\ta)$ than the integer provided as input. Importantly, due to our assumption that $q$ is \exhaustive rooted, matches cannot reach elements in the canonical model that have depth greater than $|q|$. There are thus only single-exponentially many domain elements that may appear in the image of a match, and so it is possible to enumerate and count all matches in single-exponential time w.r.t.\ $|\kb|$ and $|q|$.
\end{proof}

\section{Best Certain Answers}

The definition of certain answers implies that if $(\ta, [m, M]) \in \qIntervals{q}{\kb}$, then we also have $(\ta, [m', M']) \in \qIntervals{q}{\kb}$ for every $m' \leq m$ and $M' \geq M$. It is naturally of interest to focus on certain answers providing the best bounds, i.e., those of the form 
$ (\ta, [\min_{\I \models \kb} q_\ta^{\I}, \max_{\I \models \kb} q_\ta^{\I} ]) $. 

In this section, we show that the problem of identifying the best lower bound ($\min_{\I \models \kb} q_\ta^{\I}$) 
is $\DP$-complete in data complexity. 
It is easily seen that checking whether $m$ is such an optimal bound can be done in $\DP$, by making a call to a $\coNP$ oracle (is $(\ta, [m, +\infty]) \in \qIntervals{q}{\kb}$?) and an $\NP$ oracle (is $(\ta, [m+1, +\infty]) \notin \qIntervals{q}{\kb}$?).
The $\DP$-hardness of this problem was left as an open question by Kostylev and Reutter. 
\begin{theoremrep}\label{dpthm} The following problem is $\DP$-hard in data complexity: given a \dlc\ KB $\Kmc=(\Tmc, \Amc)$, rooted CCQ $q$,
tuple $\ta$, and number $m$, decide whether $m=\min_{\I \models \kb} q_\ta^{\I}$. 
\end{theoremrep}
\begin{proofsketch}
We give a reduction from the following problem ($\DP$-complete due to \cite{garey:planarnp}): given \emph{planar}
graphs $\graph_1$ and $\graph_2$, decide if $\graph_1 \in \tcol$ and $\graph_2 \notin \tcol$.

Let the TBox $\TmcCol$ and ABoxes $\Amc_{\graph_1}, \Amc_{\graph_2}$ be defined as in the proof of Theorem~\ref{rootedconphard}.
Rename the individuals to ensure $\ainds(\Amc_{\graph_1}) \cap \ainds(\Amc_{\graph_2})=\emptyset$, then set $\Kmc=(\TmcCol, \Amc_{\graph_1} \cup \Amc_{\graph_2})$. 
Let $q^{color}_1$ and $q^{edge}_1$ (resp. $q^{color}_2$ and $q^{edge}_2$) be defined as before, but using disjoint variables and the root individual from the $\Amc_{\graph_1}$ (resp.\ $\Amc_{\graph_2}$). 
The challenge is to make sure that we can determine the 3-colorability status of the two graphs solely by looking at the number of c-matches of the query. To be able to distinguish $\graph_1$ from $\graph_2$, 
we  introduce an asymmetry by duplicating the color counter query for $\graph_1$, i.e., create a copy $q^{color}_0$ of $q^{color}_1$ that uses fresh variables but the same root individual. 
We then take the query 
\[ q() := q^{color}_0~\wedge~q^{color}_1~\wedge~q^{edge}_1~\wedge~q^{color}_2~\wedge~q^{edge}_2.\]

%
%
%
%

We claim $(\ta_\emptyset, [36, +\infty]) \in \qIntervals{q}{\kb}$ iff  $\graph_1 \in \tcol$ and $\graph_2 \not \in \tcol$. 
This is proven by a case analysis, summarized here: 
\begin{center}
\begin{tabular}{lcc}
   & $\graph_1 \in \tcol$      & $\graph_1 \notin \tcol$		\\
\midrule
$\graph_2 \in \tcol$ & $27~(= 3 \times 3 \times 3)$  & $48~(= 4 \times 4 \times 3)$ \\
$\graph_2 \notin \tcol$ & $36~(= 3 \times 3 \times 4)$  & $64~(= 4 \times 4 \times 4)$ \\
\bottomrule 
\end{tabular}
\end{center}
\noindent Each of the four cells displays the least value of $m$ such that $(\ta_\emptyset, [m, +\infty]) \in \qIntervals{q}{\kb}$, under different assumptions on the 3-colorability of $\graph_1$ and $\graph_2$. 
To establish these values, one must first prove that every model has at least this many c-matches, 
and then exhibit a model that realizes the exact number. For the latter, we utilize our assumption that the graphs are planar, hence $4$-colorable \cite{gonthier:fourcolorcoq}, which we use to show that the minimal number of c-matches is realized in a model that encodes proper 3- or 4-colorings of the graphs. 
\end{proofsketch}

\begin{proof}
\renewcommand{\red}{\ind{r}}
\renewcommand{\green}{\ind{g}}
\renewcommand{\blue}{\ind{b}}

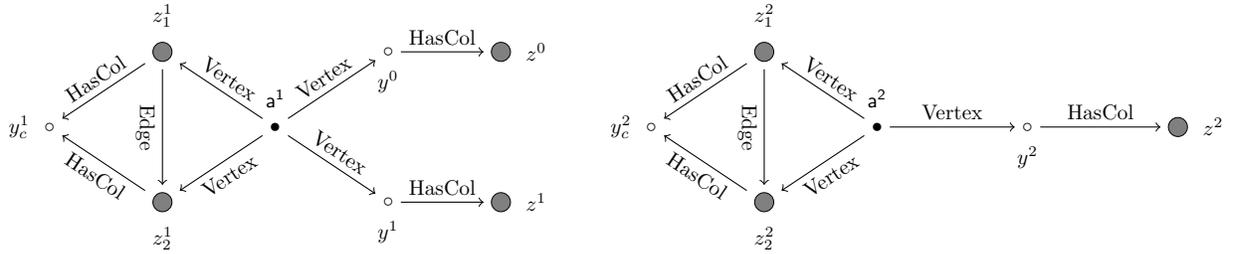
\begin{figure}[h]
\centering
\begin{tikzpicture}
[every node/.style={scale=0.8, sloped}]
\node at ( 0, 3) (r1) [label=above:$\ind{a^1}$]  {$\gx$};
\node at (-1.5, 4) (z11)	[label=above:$z_1^1$]	{$\gz$};
\node at (-1.5, 2) (z21) 	[label=below:$z_2^1$]	{$\gz$};
\node at (-3, 3) (yc1) 	[label=left:$y_c^1$]	{$\gy$};
\node at (1.5, 2) (yv1)	[label=below:$y^1$]  {$\gy$};
\node at (3, 2) (zc1) 	[label=right:$z^1$]	{$\gz$};

\node at (1.5, 4) (yv0)	[label=below:$y^0$]  {$\gy$};
\node at (3, 4) (zc0) 	[label=right:$z^0$]	{$\gz$};

\node at ( 8, 3) (r2) [label=above:$\ind{a^2}$]  {$\gx$};
\node at (6.5, 4) (z12)	[label=above:$z_1^2$]	{$\gz$};
\node at (6.5, 2) (z22) 	[label=below:$z_2^2$]	{$\gz$};
\node at (5, 3) (yc2) 	[label=left:$y_c^2$]	{$\gy$};
\node at (10, 3) (yv2)	[label=below:$y^2$]  {$\gy$};
\node at ( 12, 3) (zc2) 	[label=right:$z^2$]	{$\gz$};

\path
(r1) edge [->] node[above] {$\rolestyle{Vertex}$} 	(z11)
(r1) edge [->] node[below] {$\rolestyle{Vertex}$} 	(z21)
(r1) edge [->] node[above] {$\rolestyle{Vertex}$} 	(yv1)
(yv1) edge [->] node[above] {$\rolestyle{HasCol}$} 	(zc1)
(z11) edge [->] node[below] {$\rolestyle{Edge}$} 	(z21)
(z11) edge [->] node[above] {$\rolestyle{HasCol}$} 	(yc1)
(z21) edge [->] node[below] {$\rolestyle{HasCol}$} 	(yc1)

(r1) edge [->] node[above] {$\rolestyle{Vertex}$} 	(yv0)
(yv0) edge [->] node[above] {$\rolestyle{HasCol}$} 	(zc0)

(r2) edge [->] node[above] {$\rolestyle{Vertex}$} 	(z12)
(r2) edge [->] node[below] {$\rolestyle{Vertex}$} 	(z22)
(r2) edge [->] node[above] {$\rolestyle{Vertex}$} 	(yv2)
(yv2) edge [->] node[above] {$\rolestyle{HasCol}$} 	(zc2)
(z12) edge [->] node[below] {$\rolestyle{Edge}$} 	(z22)
(z12) edge [->] node[above] {$\rolestyle{HasCol}$} 	(yc2)
(z22) edge [->] node[below] {$\rolestyle{HasCol}$} 	(yc2); 
\end{tikzpicture}
\caption{The rooted CCQ $q$, which is the conjunction of $q_1, q_0$ (left part) and $q_2$ (right part).}
\label{dp_rooted_query}
\end{figure}

\begin{figure}[h]
\centering
\begin{tikzpicture}
[every node/.style={scale=0.8, sloped}]
\node at ( 0, 4) (root1)	[label=below:$\ind{a^1}$]	  		{$\bullet$};
\node at (-2, 5) (u1)	[label=above:$\ind{u_1^1}$]	  		{$\bullet$};
\node at (-2, 3) (v1)	[label=below:$\ind{u_2^1}$]			{$\bullet$};
\node at (-4, 5) (uc1) 	[label=above:$\ind{u_1^1}\rolestyle{HasCol}$]							{$\circ$};
\node at (-4, 3) (vc1) 	[label=below:$\ind{u_2^1}\rolestyle{HasCol}$]							{$\circ$};
\node at ( 4, 5) (r1) 	[label=right:$\ind{r^1}$]		{$\bullet$};
\node at ( 4, 4) (g1)   	[label=right:$\ind{g^1}$]	{$\bullet$};
\node at ( 4, 3) (b1)   [label=right:$\ind{b^1}$]	{$\bullet$};
\node at ( 2, 4) (a1)	[label=below:$\ind{a_v^1}$]			{$\bullet$};
	
\node at ( 0, 0) (root2)	[label=below:$\ind{a^2}$]	  		{$\bullet$};
\node at (-2, 1) (u2)	[label=above:$\ind{u_1^2}$]	  		{$\bullet$};
\node at (-2,-1) (v2)	[label=below:$\ind{u_2^2}$]			{$\bullet$};
\node at (-4, 1) (uc2) 	[label=above:$\ind{u_1^2}\rolestyle{HasCol}$]							{$\circ$};
\node at (-4,-1) (vc2) 	[label=below:$\ind{u_2^2}\rolestyle{HasCol}$]							{$\circ$};
\node at ( 4, 1) (r2) 	[label=right:$\ind{r^2}$]		{$\bullet$};
\node at ( 4, 0) (g2)   	[label=right:$\ind{g^2}$]	{$\bullet$};
\node at ( 4,-1) (b2)   [label=right:$\ind{b^2}$]	{$\bullet$};
\node at ( 2, 0) (a2)	[label=below:$\ind{a_v^2}$]			{$\bullet$};
	
\path 
(root1) edge [->] node[above] {$\rolestyle{Vertex}$} (u1)
(root1) edge [->] node[below] {$\rolestyle{Vertex}$} (v1)
(root1) edge [->] node[above] {$\rolestyle{Vertex}$} (a1)

(u1) edge [<->] node[below] 	{$\rolestyle{Edge}$} (v1)
(u1) edge [->] node[below] 	{$\rolestyle{HasCol}$} (uc1)
(v1) edge [->] node[above] 	{$\rolestyle{HasCol}$} (vc1)

(a1) edge [loop above] node [above]	{$\rolestyle{Edge}$} (a1)
(a1) edge [->] node [above, near end]	{$\rolestyle{HasCol}$} (r1)
(a1) edge [->] node [above, near end]	{$\rolestyle{HasCol}$} (g1)
(a1) edge [->] node [below, near end]	{$\rolestyle{HasCol}$} (b1)

(root2) edge [->] node[above] {$\rolestyle{Vertex}$} (u2)
(root2) edge [->] node[below] {$\rolestyle{Vertex}$} (v2)
(root2) edge [->] node[above] {$\rolestyle{Vertex}$} (a2)

(u2) edge [<->] node[below] 	{$\rolestyle{Edge}$} (v2)
(u2) edge [->] node[below] 	{$\rolestyle{HasCol}$} (uc2)
(v2) edge [->] node[above] 	{$\rolestyle{HasCol}$} (vc2)

(a2) edge [loop above] node [above]	{$\rolestyle{Edge}$} (a2)
(a2) edge [->] node [above, near end]	{$\rolestyle{HasCol}$} (r2)
(a2) edge [->] node [above, near end]	{$\rolestyle{HasCol}$} (g2)
(a2) edge [->] node [below, near end]	{$\rolestyle{HasCol}$} (b2)
;

\end{tikzpicture}
\caption{A part of $\canmodof{\kb_{(\graph_1, \graph_2)}}$ with $(u_1^1, u_2^1) \in \edges_1$ and $(u_1^2, u_2^2) \in \edges_2$.}
\label{dp_rooted_canmod}
\end{figure}

We provide more details on the case analysis mentioned in the body of the paper. In what follows, $\I$ will denote an arbitrary model of $\Kmc=(\TmcCol, \Amc_{\graph_1} \cup \Amc_{\graph_2})$. We first remark that every model contains the c-matches given by:
$$z^0, z^1 \mapsto \red^1 \suchthat \green^1 \suchthat \blue^1 \quad z_1^1, z_2^1 \mapsto \ind{a_v}^1
\qquad
z^2 \mapsto \red^2 \suchthat \green^2 \suchthat \blue^2 \quad z_1^2, z_2^2 \mapsto \ind{a_v}^2$$
Hence, $\qAnswers{q}{\I}_\emptyset \geq 3 \times 3 \times 1 \times 3 \times 1 = 27$. 

In what follows, we will use $\I^\graph_\tau$ to denote a minimal model of $\kb_\graph$ complying with a given coloring $\tau$ of a graph $\graph$, constructed as in the proof of Theorem~\ref{rootedconphard}. We observe that if $\tau_1$  and $\tau_2$ are respectively colorings for the graphs $\graph_1$ and $\graph_2$, then the interpretation $\I^{\graph_1}_{\tau_1} \cup \I^{\graph_2}_{\tau_2}$ which is the disjoint union of $\I^{\graph_1}_{\tau_1}$ and $\I^{\graph_2}_{\tau_2}$ is a model of the considered KB $\kb$.
We use such models to establish the minimum number of c-matches in the four different cases:
\begin{itemize}
\item $\graph_1, \graph_2 \in \tcol$: We have already seen that every model of $\kb$ contains at least $27$ c-matches. 
Let $\tau_1$  (resp.\ $\tau_2$) be a $3$-coloring for $\graph_1$ (resp.\ $\graph_2$). Then the model $\I^{\graph_1}_{\tau_1} \cup \I^{\graph_2}_{\tau_2}$ has exactly $27$ c-matches.
\item $\graph_1 \in \tcol, \graph_2 \notin \tcol$: As $\graph_2$ is not 3-colorable, the part of $\I$ describing $\graph_2$ must either introduce a fourth color, providing a new value for $z^2$ (hence at least $3 \times 3 \times 1 \times 4 \times 1 = 36$ c-matches), or contain a monochromatic edge, providing another possible value for $(z_1^2, z_2^2)$ (hence at least $3^2 \times 1 \times 3 \times 2 = 54$ c-matches). Therefore, every model contains at least $36$ c-matches for $q$. To show we cannot ensure more than $36$ c-matches, let 
$\tau_1$  (resp.\ $\tau_2$) be a $3$-coloring (resp.\ $4$-coloring) for $\graph_1$ (resp $\graph_2$).  Then $\I^{\graph_1}_{\tau_1} \cup \I^{\graph_2}_{\tau_2}$ has exactly $36$ c-matches. 
\item $\graph_1 \notin \tcol, \graph_2 \in \tcol$:
The part of $\I$ describing $\graph_1$ must introduce either a fourth color, providing a new value for $z^0$ and $z^1$ (hence  at least $4 \times 4 \times 1 \times 3 \times 1 = 48$ c-matches), or contain a monochromatic edge, providing another possible value for $(z_1^1, z_2^1)$ (hence at least $3 \times 3 \times 2 \times 3 \times 1 = 54$ c-matches). It follows that every model contains at least $48$ c-matches. To show this is the best value that can be attained, let $\tau_1$  (resp.\ $\tau_2$) be a $4$-coloring (resp.\ $3$-coloring) for $\graph_1$ (resp.\ $\graph_2$). Then $\I^{\graph_1}_{\tau_1} \cup \I^{\graph_2}_{\tau_2}$ has exactly $48$ c-matches.
\item $\graph_1, \graph_2 \notin \tcol$:
For each of the two graphs, $\I$ must introduce either a fourth color or a monochromatic edge. There are four cases to consider: 
\begin{center}
\begin{tabular}{lcc}
   &  Fourth color in $\graph_1$'s part     & Monochromatic edge in $\graph_1$'s part	\\
\midrule
Fourth color in $\graph_2$'s part & $4^2 \times 1 \times 4 \times 1 = 64$  & $3^2 \times 2 \times 4 \times 1 = 72$ \\
Monochromatic edge in $\graph_2$'s part & $4^2 \times 1 \times 3 \times 2 = 96$  & $3^2 \times 2 \times 3 \times 2 = 108$ \\
\bottomrule 
\end{tabular}
\end{center}
We therefore see that every model contains at least $64$ c-matches of $q$. To realize the minimal number, we 
%
let $\tau_1$  (resp.\ $\tau_2$) be a $4$-coloring (resp.\ $4$-coloring) for $\graph_1$ (resp.\ $\graph_2$) and observe that $\I^{\graph_1}_{\tau_1} \cup \I^{\graph_2}_{\tau_2}$ has exactly $64$ c-matches.
\end{itemize}
This completes the case analysis, the rest of the argument is contained in the proof sketch. 
\end{proof}


The preceding reduction can be adapted to show 
$\DP$-hardness also
for the two kinds of CCQs from \cite{kostylevreutter:count}, but without the rootedness restriction.


\begin{toappendix}

\paragraph*{$\DP$-hardness for Count queries from \cite{kostylevreutter:count}}
\begin{proof}
We recall that the Count queries from  \cite{kostylevreutter:count} are obtaining by requiring all of the non-answer variables to be counting variables. The queries from the preceding reduction do not satisfy this restriction, as they use existential variables, but we can modify the reduction in order to make it work for such queries. 

In the modified reduction, each vertex is described in the ABox with a specific concept, either $\rolestyle{Vertex_1}$ or $\rolestyle{Vertex}_2$ depending on which graph it appears in.  The TBox contains the following axioms:
$$\{ \rolestyle{Vertex}_1 \incl \exists \rolestyle{HasCol}_1, \rolestyle{Vertex}_2 \incl \exists \rolestyle{HasCol}_2, \exists \rolestyle{HasCol}_1^- \incl \rolestyle{Color}_1, \exists \rolestyle{HasCol}_2^- \incl \rolestyle{Color}_2 \}.$$
The subqueries $q^{edge}_i$  and $q^{col}_i$ are then modified as follows for $i \in \{1,2\}$:
\begin{align*}
q^{edge}_i = & \exists z_c^i ~ \exists z^i_1 ~ \exists z^i_2 ~ \rolestyle{Edge}(z^i_1, z^i_2)  \wedge  \rolestyle{HasCol_i}(z^i_1, z^i_c)  \wedge  \rolestyle{HasCol_i}(z^i_2, z^i_c)\smallskip\\
q^{col}_i = &  \exists z^i  ~  \rolestyle{Color_i}(z^i)
\end{align*}
and $q^{col}_0$ is redefined as: $\exists z^0 ~  \rolestyle{Color_1}(z^0)$. 

It is easily verified that after these modifications, the query $q$ now corresponds to a Count query as defined in \cite{kostylevreutter:count}. The query $q$ is displayed in Figure \ref{dp_count_query}, and the slightly ajusted canonical model $\canmodof{\kb_{(\graph_1, \graph_2)}}$ is displayed in Figure~\ref{dp_count_canmod}. \medskip\\
We can then redo the argument in the same manner as before, and the case analysis will give rise to precisely the same numbers of c-matches. 

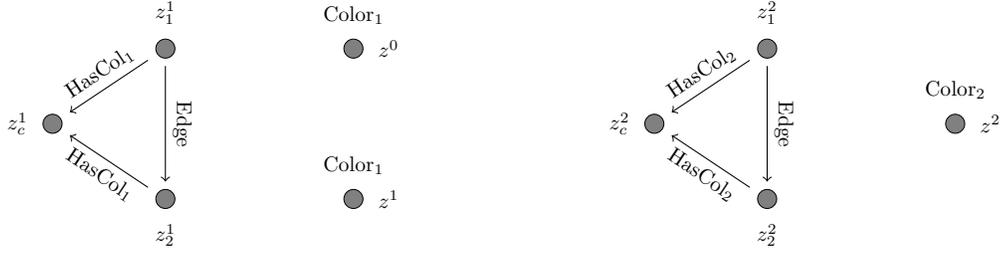
\begin{figure}[h]
\centering
\begin{tikzpicture}
[every node/.style={scale=0.8, sloped}]

\node at (-1.5, 4) (z11)	[label=above:$z_1^1$]	{$\gz$};
\node at (-1.5, 2) (z21) 	[label=below:$z_2^1$]	{$\gz$};
\node at (-3, 3) (zc1) 	[label=left:$z_c^1$]	{$\gz$};
\node at (1, 2) (z1) 	[label=right:$z^1$, label=above:$\rolestyle{Color_1}$] {$\gz$};

\node at (1, 4) (z0) 	[label=right:$z^0$, label=above:$\rolestyle{Color_1}$]	{$\gz$};

\node at (6.5, 4) (z12)	[label=above:$z_1^2$]	{$\gz$};
\node at (6.5, 2) (z22) 	[label=below:$z_2^2$]	{$\gz$};
\node at (5, 3) (zc2) 	[label=left:$z_c^2$]	{$\gz$};
\node at (9, 3) (z2) 	[label=right:$z^2$, label=above:$\rolestyle{Color_2}$]	{$\gz$};

\path
(z11) edge [->] node[above] {$\rolestyle{Edge}$} 	(z21)
(z11) edge [->] node[above] {$\rolestyle{HasCol_1}$} 	(zc1)
(z21) edge [->] node[below] {$\rolestyle{HasCol_1}$} 	(zc1)

(z12) edge [->] node[above] {$\rolestyle{Edge}$} 	(z22)
(z12) edge [->] node[above] {$\rolestyle{HasCol_2}$} 	(zc2)
(z22) edge [->] node[below] {$\rolestyle{HasCol_2}$} 	(zc2)
;
\end{tikzpicture}
\caption{The Count CQ $q$, which is the conjunction of $q^{edge}_1, q^{col}_1, q^{col}_0$ (left part) and $q^{edge}_2, q^{col}_2$ (right part).}
\label{dp_count_query}
\end{figure}

\begin{figure}[h]
\centering
\begin{tikzpicture}
[every node/.style={scale=0.8, sloped}]
\node at (-2, 5) (u1)	[label=above:$\ind{u_1^1}$, label=right:$\rolestyle{Vertex_1}$]	  		{$\bullet$};
\node at (-2, 3) (v1)	[label=below:$\ind{u_2^1}$, label=right:$\rolestyle{Vertex_1}$]			{$\bullet$};
\node at (-4, 5) (uc1) 	[label=above:$\ind{u_1^1}\rolestyle{HasCol_1}$]							{$\circ$};
\node at (-4, 3) (vc1) 	[label=below:$\ind{u_2^1}\rolestyle{HasCol_1}$]							{$\circ$};
\node at ( 4, 5) (r1) 	[label=below:$\ind{r^1}$, label=right:$\rolestyle{Color_1}$]		{$\bullet$};
\node at ( 4, 4) (g1)   	[label=below:$\ind{g^1}$, label=right:$\rolestyle{Color_1}$]	{$\bullet$};
\node at ( 4, 3) (b1)   [label=below:$\ind{b^1}$, label=right:$\rolestyle{Color_1}$]	{$\bullet$};
\node at ( 2, 4) (a1)	[label=below:$\ind{a_v^1}$]			{$\bullet$};

\node at (-2, 1) (u2)	[label=above:$\ind{u_1^2}$, label=right:$\rolestyle{Vertex_2}$]	  		{$\bullet$};
\node at (-2,-1) (v2)	[label=below:$\ind{u_2^2}$, label=right:$\rolestyle{Vertex_2}$]			{$\bullet$};
\node at (-4, 1) (uc2) 	[label=above:$\ind{u_1^2}\rolestyle{HasCol_2}$]							{$\circ$};
\node at (-4,-1) (vc2) 	[label=below:$\ind{u_2^2}\rolestyle{HasCol_2}$]							{$\circ$};
\node at ( 4, 1) (r2) 	[label=below:$\ind{r^2}$, label=right:$\rolestyle{Color_2}$]		{$\bullet$};
\node at ( 4, 0) (g2)   	[label=below:$\ind{g^2}$, label=right:$\rolestyle{Color_2}$]	{$\bullet$};
\node at ( 4,-1) (b2)   [label=below:$\ind{b^2}$, label=right:$\rolestyle{Color_2}$]	{$\bullet$};
\node at ( 2, 0) (a2)	[label=below:$\ind{a_v^2}$]			{$\bullet$};
	
\path
(u1) edge [<->] node[above] 	{$\rolestyle{Edge}$} (v1)
(u1) edge [->] node[below] 	{$\rolestyle{HasCol_1}$} (uc1)
(v1) edge [->] node[above] 	{$\rolestyle{HasCol_1}$} (vc1)

(a1) edge [loop above] node [above]	{$\rolestyle{Edge}$} (a1)
(a1) edge [->] node [above, near end]	{$\rolestyle{HasCol_1}$} (r1)

(u2) edge [<->] node[above] 	{$\rolestyle{Edge}$} (v2)
(u2) edge [->] node[below] 	{$\rolestyle{HasCol_2}$} (uc2)
(v2) edge [->] node[above] 	{$\rolestyle{HasCol_2}$} (vc2)

(a2) edge [loop above] node [above]	{$\rolestyle{Edge}$} (a2)
(a2) edge [->] node [above, near end]	{$\rolestyle{HasCol_2}$} (r2)
;

\end{tikzpicture}
\caption{A part of $\canmodof{\kb_{(\graph_1, \graph_2)}}$ with $(u_1^1, u_2^1) \in \edges_1$ and $(u_1^2, u_2^2) \in \edges_2$.}
\label{dp_count_canmod}
\end{figure}

\end{proof}

\paragraph*{$\DP$-hardness for Cntd queries from \cite{kostylevreutter:count}}
\begin{proof}
We recall that the Cntd queries from  \cite{kostylevreutter:count} correspond to CCQs with exactly one counting variable. 
As in the previous reductions, we aim to force additional matches whenever an input graph is not $3$-colorable, and the challenge is to track of the amount of colors used to color the two graphs. 

Having only a single counting variable forces us to count colors used for $\graph_1$ in exactly the same as we count those used for~$\graph_2$. In particular, the asymmetry we introduced in the query must now be introduced into the ABox. This is done by considering a copy of our first graph. However, this is not enough as two different graphs could use the same additional color, making it impossible to detect with our single counting variable that both graphs are using more than three colors. Therefore, we will  provide a set of basic colors \emph{for each graph} and additionally check whether a graph uses a color that is intended for another graph. Concretely, we achieve this by connecting vertices from different graphs using a new role $\rolestyle{Diff}$, and by adding a new subquery that will generate new c-matches whenever two vertices connected by $\rolestyle{Diff}$ use the same color.

Let us now give a more formal description of the construction.  As mentioned earlier, we will introduce a copy $\graph_0 = (\vertices_0, \edges_0)$ of the graph $\graph_1$. Without loss of generality, we can assume that $\vertices_0 \cap \vertices_1 \cap \vertices_2 = \emptyset$. As ABox individuals, we will use: 
\begin{itemize}
\item an individual name $\ind{u}$ for each $u \in \vertices_0 \cup \vertices_1 \cup \vertices_2$, to represent our graphs;
\item  individuals $\ind{r}_0, \ind{g}_0, \ind{b}_0$ (resp.\ $\ind{r}_1, \ind{g}_1, \ind{b}_1$ and $\ind{r}_2, \ind{g}_2, \ind{b}_2$), intended to color $\graph_0$ (resp.\ $\graph_1$ and $\graph_2$);
\item auxiliary individuals for vertices ($\ind{a}_0$, $\ind{a}_1$, $\ind{a}_2$, $\ind{c}$, $\ind{d}$, $\ind{e}$) and auxiliary individuals for colors ($\ind{r}$, $\ind{g}$, $\ind{b}$).
\end{itemize}
We then consider the following ABox:
\begin{align*}
\abox_{(\graph_1, \graph_2)} =
& \{ \rolestyle{Vertex}(\ind{u}) \suchthat u \in \vertices_0 \cup \vertices_1 \cup \vertices_2 \} \\
& \cup \{ \rolestyle{Edge}(\ind{u_1}, \ind{u_2}) \suchthat (u_1, u_2) \in \edges_0 \cup \edges_1 \cup \edges_2 \}  \\
& \cup \{ \rolestyle{Edge}(\ind{a}_0, \ind{a}_0), \rolestyle{Edge}(\ind{a}_1, \ind{a}_1), \rolestyle{Edge}(\ind{a}_2, \ind{a}_2), \rolestyle{Edge}(\ind{c}, \ind{c}), \rolestyle{Edge}(\ind{d}, \ind{d}) \} \\
& \cup \{ \rolestyle{Diff}(\ind{u_1}, \ind{u_2}) \suchthat u_1 \in \vertices_i, u_2 \in \vertices_j, i \neq j \} \\
& \cup \{ \rolestyle{Diff}(\ind{u}, \ind{a}_i) \suchthat u \in \vertices_j, i, j \in \{0, 1, 2\}, i \neq j  \} \\
& \cup \{ \rolestyle{Diff}(\ind{a}_0, \ind{a}_0), \rolestyle{Diff}(\ind{a}_1, \ind{a}_1), \rolestyle{Diff}(\ind{a}_2, \ind{a}_2), \rolestyle{Diff}(\ind{c}, \ind{c}), \rolestyle{Diff}(\ind{e}, \ind{e}) \} \\
& \cup \{ \rolestyle{Aux_e}(\ind{a}_0, \ind{a}_0), \rolestyle{Aux_e}(\ind{a}_1, \ind{a}_1),  \rolestyle{Aux_e}(\ind{a}_2, \ind{a}_2), \rolestyle{Aux_e}(\ind{d}, \ind{d}) \} \\
& \cup \{ \rolestyle{Aux_e}(\ind{e}, \ind{u}) \suchthat u \in \vertices_0 \cup \vertices_1 \cup \vertices_2 \} \\
& \cup \{ \rolestyle{Aux_e}(\ind{u}, \ind{c}) \suchthat u \in \vertices_0 \cup \vertices_1 \cup \vertices_2 \} \\
& \cup \{ \rolestyle{Aux_d}(\ind{a}_0, \ind{a}_0), \rolestyle{Aux_d}(\ind{a}_1, \ind{a}_1),  \rolestyle{Aux_d}(\ind{a}_2, \ind{a}_2), \rolestyle{Aux_d}(\ind{e}, \ind{e}) \} \\
& \cup \{ \rolestyle{Aux_d}(\ind{d}, \ind{u}) \suchthat u \in \vertices_0 \cup \vertices_1 \cup \vertices_2 \} \\
& \cup \{ \rolestyle{Aux_d}(\ind{u}, \ind{c}) \suchthat u \in \vertices_0 \cup \vertices_1 \cup \vertices_2 \} \\
& \cup \{ \rolestyle{HasCol}(\ind{a}_i, t) \suchthat t \in \{ \ind{r}_i, \ind{g}_i, \ind{b}_i \}, i \in \{0, 1, 2\} \} \\
& \cup \{ \rolestyle{HasCol}(\ind{c}, \ind{r}), \rolestyle{HasCol}(\ind{d}, \ind{r}), \rolestyle{HasCol}(\ind{d}, \ind{g}), \rolestyle{HasCol}(\ind{d}, \ind{b}), \rolestyle{HasCol}(\ind{e}, \ind{r}), \rolestyle{HasCol}(\ind{e}, \ind{g}), \rolestyle{HasCol}(\ind{e}, \ind{b}) \}.
\end{align*}
and the TBox
$\tbox := \{ \rolestyle{Vertex} \incl \exists \rolestyle{HasCol} \}$. We denote by $\kb_\graph = (\tbox, \abox)$ the resulting KB. A part of the canonical model of $\kb$ is depicted in Figure~\ref{dp_cntd_canmod}.


We consider the three following subqueries:
\begin{align*}
q^{diff}(y) &= \exists y_1^{d} ~ \exists y_2^{d} ~ \exists y_c^{d} ~ \rolestyle{Aux_{d}}(y, y_1^{d}) \wedge \rolestyle{Diff}(y_1^{d}, y_2^{d})  \wedge   \rolestyle{HasCol}(y_1^{d}, y_c^{d})  \wedge  \rolestyle{HasCol}(y_2^{d}, y_c^{d}) \\
q^{edge}(y) &= \exists y_1^{e} ~ \exists y_2^{e} ~ \exists y_c^{e} ~ \rolestyle{Aux_{e}}(y, y_1^{e}) \wedge \rolestyle{Edge}(y_1^{e}, y_2^{e})  \wedge   \rolestyle{HasCol}(y_1^{e}, y_c^{e})  \wedge  \rolestyle{HasCol}(y_2^{e}, y_c^{e}) \\
q^{col}(y) &= \exists z  ~ \rolestyle{HasCol}(y, z)
\end{align*}
and let $q = \exists y ~ q^{diff}(y) \wedge q^{edge} \wedge q^{col}$ be the complete CCQ, which corresponds to a Cntd query class as there is only one counting variable $z$. The query $q$ is displayed in Figure \ref{dp_cntd_query}.\medskip\\

\begin{figure}[h]
\centering
\begin{tikzpicture}
[every node/.style={scale=0.8, sloped}]

\node at ( 0, 0) (y)	[label=below:$y$]	{$\gy$};
\node at ( 0, 2) (z) 	[label=above:$z$]	{$\gz$};
\node at (-2, 0) (ye1) 	[label=above:$y^e_1$]	{$\gy$};
\node at (-4, 0) (ye2) 	[label=left:$y^e_2$] {$\gy$};
\node at (-3,-2) (yec) 	[label=below:$y^e_c$] {$\gy$};
\node at ( 2, 0) (yd1) 	[label=above:$y^d_1$]	{$\gy$};
\node at ( 4, 0) (yd2) 	[label=right:$y^d_2$] {$\gy$};
\node at ( 3,-2) (ydc) 	[label=below:$y^d_c$] {$\gy$};

\path
(y) edge [->] node[above] {$\rolestyle{HasCol}$} 	(z)

(y) edge [->] node[above] {$\rolestyle{Aux_e}$} 	(ye1)
(ye1) edge [->] node[above] {$\rolestyle{Edge}$} 	(ye2)
(ye1) edge [->] node[below] {$\rolestyle{HasCol}$} 	(yec)
(ye2) edge [->] node[below] {$\rolestyle{HasCol}$} 	(yec)

(y) edge [->] node[above] {$\rolestyle{Aux_d}$} 	(yd1)
(yd1) edge [->] node[above] {$\rolestyle{Diff}$} 	(yd2)
(yd1) edge [->] node[below] {$\rolestyle{HasCol}$} 	(ydc)
(yd2) edge [->] node[below] {$\rolestyle{HasCol}$} 	(ydc)
;
\end{tikzpicture}
\caption{The Cntd CQ $q$, which is the conjunction of $q^{edge}$ (left part), $q^{diff}$ (right part) and $q^{col}$ (upper part).}
\label{dp_cntd_query}
\end{figure}
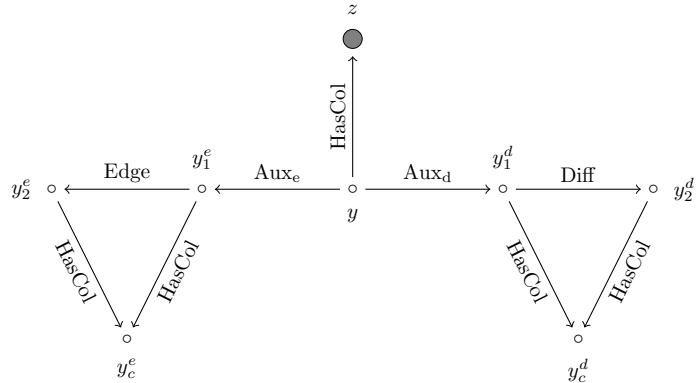
\noindent\textbf{Claim}: $(\ta_\emptyset, [10, +\infty]) \in \qIntervals{q}{\kb}$ iff  $\graph_1 \in \tcol$ and $\graph_2 \not \in \tcol$.\smallskip\\ We prove this claim using the following case analysis:
\begin{center}
\begin{tabular}{lcc}
   & $\graph_1 \in \tcol$      & $\graph_1 \notin \tcol$		\\
\midrule
$\graph_2 \in \tcol$ & $9~(= 3 + 3 + 3)$  & $11~(= 4 + 4 + 3)$ \\
$\graph_2 \notin \tcol$ & $10~(= 3 + 3 + 4)$  & $12~(= 4 + 4 + 4)$ \\
\bottomrule\\
\end{tabular}
\end{center}

To obtain the values in the preceding table, consider an arbitrary model $\I$ of $\kb$, along with a homomorphism $f : \canmod \rightarrow \I$. First observe that there are always 9 c-matches, which are obtained from the matches given by:
$$z \mapsto \ind{r}_i \suchthat \ind{g}_i \suchthat \ind{b}_i \qquad y, y^{e}_1, y^{e}_2, y^{d}_1, y^{d}_2 \mapsto \ind{a}_i \qquad y^{e}_c, y^{d}_c \mapsto \ind{r}_i
\qquad (i \in \{ 0, 1, 2 \})$$
Hence $\qAnswers{q}{\I}_\emptyset \geq 3 + 3 + 3 = 9$.

Furthermore, let us define $\tau_\I : \vertices_0 \cup \vertices_1 \cup \vertices_2 \rightarrow \domain{\I}$ as follows: $\tau_\I(u)= f(\ind{u}\rolestyle{HasCol})$. We'll use the notation $\tau_\I(\vertices_i)$ to refer to the set $\{\tau_\I(u) \mid u \in \vertices_i\}$.  
Notice that, if $\tau_\I(\vertices_i) \cap \tau_\I(\vertices_j) \neq \emptyset$ with $i \neq j$, that is, we have $u \in \graph_i, v \in \graph_j$ with $i \neq j$ and $\tau_\I(u) = \tau_\I(v)$, then we have $3$ additional c-matches corresponding to the matches given by:
$$
z, y^{e}_c \mapsto \ind{r} \suchthat \ind{g} \suchthat \ind{b}  \qquad
 y, y^{e}_1, y^{e}_2 \mapsto \ind{d} \qquad 
 y^{d}_1 \mapsto \ind{u} \qquad
 y^{d}_2 \mapsto \ind{v} \qquad
 y^{d}_c  \mapsto \tau_\I(u)
 $$
Therefore, in such a model $\I$, we have $\qAnswers{q}{\I}_\emptyset \geq 9 + 3 = 12$, and thus sufficiently many c-matches w.r.t.\ the numbers in the table. We will therefore assume in the following that 
$\tau_\I(\vertices_i) \cap \tau_\I(\vertices_j) = \emptyset$ for $i \neq j$
(assumption $(i)$).

The same applies in the case where $\tau_\I(\vertices_i) \cap \{ \ind{r}_j, \ind{g}_j, \ind{b}_j \} \neq \emptyset$ for $i \neq j$, as one can exhibit the same three additional c-matches by replacing the individual $\ind{v}$ by $\ind{a}_j$ in the latter definition of matches. Therefore, we can also assume in what follows that $\tau_\I(\vertices_i) \cap \{ \ind{r}_j, \ind{g}_j, \ind{b}_j \} = \emptyset$ for all $i \neq j$ (assumption $(ii)$).

Finally, notice that if $\tau_\I$ introduces a monochromatic edge, i.e.\ an edge $(u, v) \in \edges_0 \cup \edges_1 \cup \edges_2$ such that $\tau_\I(u) = \tau_\I(v)$, we again have $3$ additional c-matches obtained from the matches given by:
 $$
z, y^{d}_c \mapsto \ind{r} \suchthat \ind{g} \suchthat \ind{b}  \qquad
y, y^{d}_1, y^{d}_2 \mapsto \ind{e} \qquad 
 y^{e}_1 \mapsto \ind{u} \qquad
 y^{e}_2 \mapsto \ind{v} \qquad
 y^{e}_c  \mapsto \tau_\I(u)
 $$
Therefore, we can also restrict our attention to models without monochromatic edges (assumption $(iii)$).
Any model that satisfies properties $(i)$, $(ii)$ and $(iii)$ will be called \emph{non-trivial}.

We now proceed to consider the four cases. In each case, the minimal amount of c-matches is obtained by exhibiting a model built from colorings for each graph that use a minimal amount of colors. The only important difference w.r.t\ preceding reductions is that when more than one graph utilizes a fourth color, we need to use distinct fourth colors for each graph. We now complete the proof by showing that every non-trivial model has at least the number of c-matches as listed in the table. 
\begin{itemize}
\item $\graph_1, \graph_2 \in \tcol$:
We have already seen that every model contains at least $9$ c-matches. 
%
\item $\graph_1 \notin \tcol, \graph_2 \in \tcol$:
Since $\graph_0$ and $\graph_1$ are not 3-colorable, any non-trivial model $\I$ must satisfy  $\tau_\I(\vertices_0) \geq 4$ and $\tau_\I(\vertices_1) \geq 4$, due to assumption $(iii)$. In particular, we have a vertex $u_0 \in \vertices_0$ (resp.\ $u_1 \in \vertices_1$) such that $\tau_\I(u_0) \notin \{ \ind{r}_0, \ind{g}_0, \ind{b}_0 \}$ (resp.\ $\tau_\I(u_1) \notin \{ \ind{r}_1, \ind{g}_1, \ind{b}_1 \}$). This yields the following matches:
$$
z \mapsto \tau_\I(u_i) \qquad
y \mapsto \ind{u_i} \qquad
y^{e}_1, y^{e}_2, y^{d}_1, y^{d}_2 \mapsto \ind{c}
\qquad y^{e}_c, y^{d}_c \mapsto \ind{r}
\qquad (i \in \{ 0, 1 \})
$$
which give rise to two new c-matches because of assumptions $(i)$ (ensuring the two colors $\tau_\I(u_0)$ and $\tau_\I(u_1)$ are different) and $(ii)$ (ensuring $\tau_\I(u_0)$ and $\tau_\I(u_1)$ are different from the colors in the $9$ basic c-matches).
Hence, every non-trivial model contains at least $11$ c-matches.
\item $\graph_1 \in \tcol, \graph_2 \notin \tcol$:
Since $\graph_2$ is not in $\tcol$, any non-trivial model $\I$ must satisfy $\tau_\I(\vertices_2) \geq 4$ because of assumption $(iii)$. In particular, we have a vertex $u_2 \in \vertices_2$ such that $\tau_\I(u_2) \notin \{ \ind{r}_2, \ind{g}_2, \ind{b}_2 \}$. This provides a new match given by:
$$
z \mapsto \tau_\I(u_2) \qquad
y \mapsto \ind{u_2} \qquad
y^{e}_1, y^{e}_2, y^{d}_1, y^{d}_2 \mapsto \ind{c}
\qquad y^{e}_c, y^{d}_c \mapsto \ind{r}
$$
which gives rise to a new c-match because of the assumption $(ii)$ (which ensures $\tau_\I(u_2)$ is different from the colors in the   $9$ basic c-matches).
Hence, every non-trivial model contains at least $10$ c-matches.
\item $\graph_1, \graph_2 \notin \tcol$:
We can proceed similarly to the two previous cases to exhibit $u_0 \in \vertices_0,  u_1 \in \vertices_1, u_2 \in \vertices_2$ that are assigned new colors, providing three new matches given by:
$$
z \mapsto \tau_\I(u_i) \qquad
y \mapsto \ind{u_i} \qquad
y^{e}_1, y^{e}_2, y^{d}_1, y^{d}_2 \mapsto \ind{c}
\qquad y^{e}_c, y^{d}_c \mapsto \ind{r}
\qquad (i \in \{ 0, 1, 2 \})
$$
which give rise to three new c-matches because of assumptions $(i)$ (the colors $\tau_\I(u_0), \tau_\I(u_1), \tau_\I(u_2)$ are all different) and $(ii)$ (they are also different from the colors in the  $9$ basic c-matches).
Hence, we have that every non-trivial model contains at least $12$ matches.
\end{itemize}
\end{proof}

\end{toappendix}

%

%

%


\section{Conclusion \& Future Work}

We have revisited the issue of counting queries in OMQA and advanced our understanding of the complexity landscape, both by extending existing results to a more general notion of counting~CQ and by exploring when structural restrictions on the ontology and query can lead to improved complexity. 


There are several natural avenues for future study.  A first challenging problem is to provide a full classification of the data complexity of ontology-mediated queries (i.e.\ query-ontology pairs), in order to identify further tractable cases. 
%
It would also be relevant to extend the complexity study to
DLs with 
%
functional roles or quantified number restrictions, which would allow for non-trivial upper bounds on the number of matches. 
Tackling general CCQs for such DLs 
will likely require wholly different techniques from the model manipulations used in  
Section \ref{sectiongeneral}.  
However, a recent result by \citeauthor{DBLP:conf/semweb/CimaNKKGH19} (\citeyear{DBLP:conf/semweb/CimaNKKGH19}) shows 
that the canonical model property (Theorem \ref{canomini}) holds also for \dlf\ (which extends \dlc\ with functional roles), 
and hence both $\TC^0$ data complexity (Theorem \ref{tcz}) and our $\PP$-completeness result (Theorem~\ref{ppc}) for exhaustive rooted CCQs %
transfer 
to \dlf.



Much remains to be explored for queries involving other kinds of aggregate functions (min, max, sum, average), which manipulate data values. Recent studies of 
bag semantics for OMQA \cite{nikolauetal:bag,DBLP:conf/semweb/CimaNKKGH19} and 
databases with incomplete information \cite{DBLP:conf/lics/HernichK17,DBLP:conf/ijcai/ConsoleGL17}
provide important formal foundations for supporting such queries.

\section*{Acknowledgements} 
This work was partially supported by 
ANR project CQFD (ANR-18-CE23-0003).



\appendix

\newpage 
\bibliographystyle{named}
\bibliography{main}

\end{document}